\def\tr{{\rm Tr \,}}
\def\N{{\mathbb N}}
\def\R{{\mathbb R}}
\def\cH{{\cal H}}
\def\cI{{\cal I}}
\def\un{\mathds{1}}
\def\un{\mathds{1}}
\def\dd{\mathrm{d}}
\newtheorem{theorem}{Theorem}
\newtheorem{corollary}[theorem]{Corollary}
\newtheorem{lemma}[theorem]{Lemma}
\newtheorem{proposition}[theorem]{Proposition}
\newtheorem{remark}[theorem]{Remark}
\title{A sparse approximation of the Lieb functional with moment constraints}
\author{Virginie Ehrlacher\thanks{CERMICS, Ecole Nationale des Ponts et Chauss\'ees \& INRIA, virginie.ehrlacher@enpc.fr} \;  and\;  Luca Nenna\thanks{Universit\'e Paris-Saclay, CNRS, Laboratoire de math\'ematiques d'Orsay, 91405, Orsay, France. luca.nenna@universite-paris-saclay.fr}}
\date{\today}
\begin{document}

\maketitle

\paragraph{Conflict of interest statement:} The authors have no competing interests to declare that are relevant to the content of this article.

\paragraph{Data availability statement:} Data sharing not applicable to this article as no datasets were generated or analysed during the current study.

\begin{abstract}
    The aim of this paper is to present new sparsity results about the so-called Lieb functional, which is a key quantity in Density Functional Theory for electronic structure calculations of molecules. The Lieb functional was actually shown by Lieb to be a convexification of the so-called Lévy-Lieb functional. Given an electronic density for a system of $N$ electrons, which may be seen as a probability density on $\R^3$, the value of the Lieb functional for this density is defined as the solution of a quantum multi-marginal optimal transport problem, which reads as a minimization problem defined on the set of trace-class operators acting on the space of electronic wave-functions that are anti-symmetric $L^2$ functions of $\R^{3N}$, with partial trace equal to the prescribed electronic density. We introduce a relaxation of this quantum optimal transport problem where the full partial trace constraint is replaced by a finite number of moment constraints on the partial trace of the set of operators. We show that, under mild assumptions on the electronic density, there exist sparse minimizers to the resulting moment constrained approximation of the Lieb (MCAL) functional that read as operators with rank at most equal to the number of moment constraints. We also prove under appropriate assumptions on the set of moment functions that the value of the MCAL functional converges to the value of the exact Lieb functional as the number of moments go to infinity. We also prove some rates of convergence on the associated approximation of the ground state energy. We finally study the mathematical properties of the associated dual problem and introduce a suitable numerical algorithm in order to solve some simple toy models.
\end{abstract}

\tableofcontents

\bigskip

\section{Introduction}
    The so-called Hohenberg-Kohn or L\'evy-Lieb functional plays a fundamental role in Density Functional Theory for electronic structure calculations. For the sake of simplicity, we use here atomic units and neglect the effect of spin in this work. For a given electronic density $\rho \in L^1(\mathbb{R}^3)$, which we assume here to be of integral equal to $1$ for the sake of simplicity, and a given number of electrons $N\in \mathbb{N}^*$, the L\'evy-Lieb functional $F_{LL}(\rho)$ reads as the solution of the following a minimization problem of the form: 
    $$
    \boxed{F_{LL}[\rho]:= \mathop{\inf}_{\substack{\Psi\in\cH_1^N\\\rho_\Psi=\rho}} \frac{1}{2}\int_{{\R}^{3N}} |\nabla \Psi|^2 + \int_{\mathbb{R}^{3N}} V |\Psi|^2,} 
    $$
    
where  
\begin{itemize}
    \item[(i)] $\mathcal H_1^N:= \bigwedge_{i=1}^N H^1(\mathbb{R}^3)$ is the set of admissible electronic wavefunctions for a system of $N$ electrons with finite kinetic energy, that is the set of antisymmetric functions of $H^1(\mathbb{R}^{3N})$;
    \item[(ii)] for any $\Psi \in \mathcal H_1^N$ and $x\in \mathbb{R}^3$, $\rho_\Psi$ is the electronic density associated to the wavefunction $\psi$;
    \item[(iii)] the function $V : (\mathbb{R}^{3})^N \to \mathbb{R}_+ \cup \{ + \infty \}$ is the electron-electron Coulomb interaction potential.
\end{itemize}
 There is a wide zoology of electronic structure calculation models which rely on various types of approximations of this L\'evy-Lieb functional. Recently, Strictly Correlated Electrons (SCE) based approximation of this functional have drawn an increasing interest from mathematicians because it  gives rise to a symmetric multi-marginal optimal transport problem with Coulomb cost, with the number of marginal constraints being equal to the number of electrons in the system.
 The literature about  the SCE approximation (namely the multi-marginal optimal transport with Coulomb cost) is growing considerably. Recent developments include results on the existence and non-existence of Monge-type solutions  (e.g., \cite{ColMar-15,ColPasMar-15,CotFriKlu-13,Friesecke-19,ButPasGor-12,ColStr-M3AS-15,BinDepKau-arxiv-20}), structural properties of Kantorovich potentials (e.g., \cite{ColMarStr-19, MarGerNen-17,GerKauRaj-ESAIMCOCV-19,ButChaPas-17}), grand-canonical optimal transport \cite{MarLewNen-19_ppt}, efficient computational algorithms (e.g., \cite{benamou2017numerical,friesecke2022genetic, CoyEhrLom-19, MarGer-19_ppt, KhoLinLinLex-19_ppt}) and the design of new density functionals (e.g., \cite{GerGroGor-JCTC-20, CheFri-MMS-15, MirUmrMorGor-JCP-14, LanMarGerLeeGor-PCCP-16}). 
 
\noindent Moreover, recent works indicate that the solution of this symmetric Coulomb cost multi-marginal problem (MMOT), which is a probability measure on $\mathbb{R}^{3N}$, is actually a sparse object at least in discrete settings. Two types of discrete settings have been considered so far where such sparsity results have been obtained. On the one hand, the most classical discrete approximation consists in introducing a discrete grid $\mathcal X$ of $\mathbb{R}^3$. The discrete optimal transport plan is then defined as a discrete probability measure defined on the cartesian product grid $\mathcal X^N$. Actually, it was proved in~\cite{friesecke2018breaking,vogler2021geometry} that the discrete optimal transport plan does not charge all the points of the discrete cartesian product grid (of cardinality $|\mathcal X|^N$) but only a number of points in this grid which scales at most linearly with $M$. Finding the few points of $\mathcal X^N$ which are actually charged by the discrete optimal transport plan is not a trivial task though, and the GenCol algorithm is a numerical procedure which aims at achieving this task. It has been first proposed in~\cite{friesecke2022genetic}, then extended in~\cite{friesecke2022gencol} and its convergence has been analyzed for two-marginal problems in~\cite{friesecke2023convergence}. On the other hand, an alternative approach which was first considered in~\cite{alfonsi2021approximation} consists in introducing an approximation of the exact multi-marginal transport problems where the marginal constraints are replaced by a finite number of moment constraints associated to a finite number $M$ of "moment functions" which are real-valued functions defined on $\mathbb{R}^3$. Under some natural assumptions, this approximate problem is then equivalent to approximating the solution of the dual problem associated to the exact optimal transport problem, namely the so-called Kantorovich potential, as a linear combination of these moment functions.  The solution of this moment-contrained optimal transport problem is still a probability measure defined on $\mathbb{R}^{3N}$ but is also a sparse object in the sense that it can be written as a discrete measure charging a number of points belonging to $\mathbb{R}^{3N}$ which scales at most linearly with the number of moment constraints. Finding the location of these points then reads as a non-convex optimization problem defined on a continuous (and not a discrete set) set, and stochastic gradient algorithms have been proposed in~\cite{alfonsi2022constrained} in order to find such optimal points, and numerically tested on three-dimensional settings involving $N = 100$ electrons. We also refer the reader to the works~\cite{chen2014numerical,benamou2017numerical,nenna2022ode,lelotte2022external,hu2023global} where alternative numerical methods have been proposed for the computation of the SCE limit of the L\'evy-Lieb functional, which do not rely on sparsity arguments.   

\medskip

The objective of this work is to prove similar sparsity results for the so-called Lieb functional, which is a convex relaxation of the L\'evy-Lieb functional, the expression of which is given under the following form: 
\begin{equation}\label{eq:Lieb}
\boxed{F_L[\rho]:= \mathop{\inf}_{\Gamma \in \mathfrak{S}_1^+(\mathcal H_0^N), \; \rho_\Gamma= \rho} {\rm Tr}\left[ \bigg(-\frac{1}{2}\Delta + V) \Gamma\bigg)\right],}
\end{equation}
where $\displaystyle\mathcal H_0^N:= \bigwedge_{i=1}^N L^2(\mathbb{R}^3)$, $\mathfrak{S}_1^+(\mathcal H_0^N)$ denotes the set of non-negative trace-class self-adjoint operators on $\mathcal H_0^N$ and where $\rho_\Gamma$ is the electronic density associated to $\Gamma \in \mathfrak{S}_1^+(\mathcal H_0^N)$, the precise definition of which will be given below. Actually, problem~\eqref{eq:Lieb} is a particular instance of \itshape quantum optimal transport \normalfont problem. We refer the reader to~\cite{golse2016mean,golse2017schrodinger} for references on earlier works on closely related types of problems.  Notice that problem~\eqref{eq:Lieb} can be understood  as a {\bf quantum version}  of a multi-marginal optimal transport problem. Moreover, it  still enjoys the nice property, as the original problem, of being a linear programming problem.
Our aim here is to prove that solutions of approximations of problems~\eqref{eq:Lieb} where the partial trace constraint is relaxed by a finite number of moment constraints enjoy similar sparsity properties than solutions of moment constrained multi-marginal symmetric classical optimal transport problems, such as those which were established in~\cite{alfonsi2021approximation}. More precisely, we prove, using the so-called Tchakaloff's theorem (notice that for the usual entropic regularization of MMOT we cannot use this kind of approach), that the solutions of moment constrained approximations of~\eqref{eq:Lieb} can be written under the form $\Gamma = \sum_{k=1}^K \alpha_k |\Psi_k \rangle \langle \Psi_k|$, where $K\in \mathbb{N}^*$ scales at most linearly with the number of moment constraints, and where for all $1\leq k \leq K$, $\alpha_k \in [0,1]$, $\Psi_k \in \mathcal H_1^N$ and $|\Psi_k \rangle \langle \Psi_k|$ is the orthogonal projector of $\mathcal H_0^N$ onto the vectorial space spanned by $\Psi_k$ (using bra-ket notation).
We will, finally, exploit this sparsity structure in order to propose some numerical scheme in order to approximate the solution of~\eqref{eq:Lieb}.
Notice that solving \eqref{min:lieb} for a small systems can be exploited (as done in some recent works for the Levy-Lieb functional \cite{shao2023machine,bai2022machine}) in order to build approximations of the Lieb functional for larger systems by means of machine learning techniques.
Let us finally mention here that particular moment-constrained approximations of the Lieb functional have already been considered in~\cite{garrigue2022building} for the construction of Kohn-Sham potentials. The novel results brought by the present contribution in comparison to the latter work is (i) the extension of existence and convergence results to more general moment constraints that the one considered in~\cite{garrigue2022building}; (ii) the results on the sparsity structure of associated minimizers; (iii) convergence rate of the approximate ground state energy; and (iv) study of the mathematical properties of the associated dual problem.    

The outline of the article is the following. In Section~\ref{sec:Lieb}, we recall some fundamental results about the exact Lieb functional. The moment-constrained approximation we consider here and the associated sparsity result on their minimizers is presented in Section~\ref{sec:moment}. Convergence results of the moment-constrained approximation towards the exact Lieb functional are presented in Section~\ref{sec:convergenceLieb}.  In Section~\ref{sec:convergenceGround}, we also prove some rates of convergence of the associated approximation of the ground state energy to the exact one. Section~\ref{sec:duality} is devoted to present some results about the dual formulation of the moment-constrained problem in the case of electronic density with support included in bounded domains. We, finally, introduce a numerical method in Section ~\ref{sec:algo} exploiting the sparsity result and the convenient dual formulation as an SDP problem. Some numerical experiments for small systems are then predented.

\section{The exact Lieb functional}\label{sec:Lieb}
Let us first introduce some notation together with the problem we consider in this work. We use here atomic units and neglect the influence of spin for the sake of simplicity. 

Let $N\in \N^*$ denote the number of electrons in the molecule of interest. Let us assume that there are $N_{\rm nu}\in \N^*$ nuclei in the molecule, the positions and electric charges of which are denoted by $R_1, \ldots, R_{N_{\rm nu}}\in \R^3$ and $Z_1, \ldots, Z_{N_{\rm nu}}\in \N^*$. For all $x\in \R^3$, let us denote by 
$$
v_{\rm nu}(x):= - \sum_{n=1}^{N_{\rm nu}} \frac{Z_n}{|x - R_n|}
$$
the Coulomb electric potential generated at $x\in \R^3$ by the $N_{\rm nu}$ nuclei.

Let $\cH:= H^1(\R^3)$ and $\cH^N:= \bigwedge_{i=1}^N H^1(\R^3)$. For any $\Psi\in \cH^N$, we denote by $\|\Psi\|$ its $L^2(\mathbb{R}^{3N})$ norm and by $\rho_{\Psi}$ the electronic density associated to the wavefunction $\Psi$, namely the real-valued function defined over $\mathbb{R}^3$ as follows: 
$$
\forall x\in \mathbb{R}^3, \quad \rho_{\Psi}(x):= N \int_{(\mathbb{R}^3)^{N-1}} |\Psi(x, x_2, \ldots, x_N)|^2\,dx_2\ldots\,dx_N.
$$

For a given set of nuclei positions ${\bm R}:=(R_1, \ldots, R_{N_{\rm nu}})$ and charges ${\bm Z}:=(Z_1, \ldots, Z_{N_{
\rm nu}})$, one can compute the ground state energy as a minimization over a density $\rho$, that is
\begin{equation}
\label{min:groundstate}
E[{\bm R}, {\bm Z}]=\inf_{\substack{\rho\in\cI^N}}\left\{F_{LL}[\rho]+\int_{\R^3}v_{\rm nu}\rho\right\}, 
\end{equation}
where $\cI^N:=\{\rho\in L^1(\mathbb{R}^3),\;\rho\geq 0,\;\sqrt{\rho}\in H^1(\R^3),\;\int_{\mathbb{R}^3}\rho=N\}$ and
\begin{equation}
    \label{min:levylieb}
    F_{LL}[\rho]:=\inf_{\substack{\Psi\in\cH_1^N\\\rho_\Psi=\rho}}\left\{\frac{1}{2}\int_{\R^{3N}}|\nabla\Psi|^2+\int_{\R^{3N}}V|\Psi|^2\right\}
\end{equation}
is called the Levy-Lieb functional. In (\eqref{min:levylieb}), the function $V: (\mathbb{R}^{3})^N \to \mathbb{R}_+ \cup \{ + \infty\}$ is defined as follows: for all $(x_1, \ldots, x_N)\in (\mathbb{R}^3)^N$, 
\begin{equation}\label{eq:defV}
V(x_1, \ldots, x_N)= \sum_{1\leq i<j\leq N}\frac{1}{|x_i-x_j|}.
\end{equation}
The Levy-Lieb functional is the central object in Density Functional Theory and its knowledge would allow the computation the electronic ground state energy of any molecule. However, it turns out that $F_{LL}$ is not convex, it is therefore convenient to look at a convexification proposed by Lieb \cite{Lieb-83b} where the minimization is performed over the set of mixed states instead of the set of pure ones as in \eqref{min:levylieb}. More precisely, we consider here the alternative minimization problem 
\begin{equation}
    \label{min:lieb}
    F_{L}[\rho]:=\inf_{\substack{\Gamma \in \mathfrak{S}_1^+(\mathcal H_0^N)\\ \rho_{\Gamma}=\rho}}\tr(H_N\Gamma),
\end{equation}
where $H_N:=-\frac{1}{2}\Delta +V$ is a self-adjoint operator on $\mathcal H_0^N$ with domain $D(H_N) = \mathcal H_2^N:= \bigwedge_{i=1}^N H^2(\mathbb{R}^3)$, $\mathfrak{S}_1^+(\cH_0^N)$ denotes the set of trace-class self-adjoint non-negative operators on $\cH_0^N$. For all $\Gamma \in \mathfrak{S}_1^+(\cH_0^N)$, there exists an orthonormal basis $(\Psi_i)_{i\in \mathbb{N}^*}$ of $\cH_0^N$ and a non-increasing sequence $(\alpha_i)_{i\in \mathbb{N}^*}$ of non-negative numbers such that
\begin{equation}\label{eq:eigen}
\Gamma = \sum_{i=1}^{+\infty}\alpha_i |\Psi_i \rangle \langle \Psi_i|,
\end{equation}
using so-called bra-ket notation. Then, the associated electronic density $\rho_\Gamma$ is defined as follows: for all $x\in \mathbb{R}^3$, 
$$
\rho_\Gamma(x):= N\sum_{i=1}^{+\infty} \alpha_i \int_{(\mathbb{R}^{3})^{N-1}}  |\Psi_i(x, x_2, \ldots, x_N)|^2\,dx_2 \ldots\,dx_N = \sum_{i=1}^{+\infty} \alpha_i \rho_{\Psi_i}(x).
$$

We know that there exist positive constants $\varepsilon,D>0$ such that
$H_N + D \geq \varepsilon (-\Delta + {\rm Id})$ (in the sense of self-
adjoint operators on $\mathcal H_0^N$). We also denote by
$\mathfrak{S}_{1,1}(\mathcal H_0^N)$ the set of self-adjoint operators 
$\Gamma$ on $\mathcal H_0^N$ with finite kinetic energy, i.e. such that
${\rm Tr}\left( |H_N + D|^{1/2} \Gamma  |H_N + D|^{1/2}\right) < +\infty$.
\begin{remark}
 It can then be easily checked that, $\Gamma\in \mathfrak{S}_{1,1}(\mathcal
H_0^N)$ if and only if $ \Gamma\in \mathfrak{S}_{1}(\mathcal
H_0^N)$ and ${\rm Tr}(H_N\Gamma)<+\infty$. Then, if $\Gamma$ admits an
eigendecomposition of the form (\ref{eq:eigen}), necessarily $\Psi_i \in
\mathcal H_1^N$ as soon as $\alpha_i >0$.    
\end{remark}

It is well-known then that the infimum in \eqref{min:levylieb} and \eqref{min:lieb} is attained.
\begin{remark}[Convexification]
It is worth highlighting that $F_L$ is indeed the convexification of $F_{LL}$ in the sense that
\[ F_{L}[\rho]=\inf_{\substack{\forall i\geq 1, \; \alpha_i\geq 0, \; \rho_i\in \mathcal I^N\\
\sum_{i=1}^{+\infty} \alpha_i=1\\  
\sum_{i=1}^{+\infty}\alpha_i\rho_i= \rho\\ }}\sum_{i=1}^{+\infty}\alpha_iF_{LL} [\rho_i]  \]
\end{remark}
It is useful noticing that $F_L$ admits a dual problem.
\begin{theorem}[\cite{Lieb-83b}]
Duality holds in the sense that
\begin{equation}
    \label{eq:duality}
    F_{L}[\rho]=\sup_{\substack{v\in L^{3/2}(\R^3)+L^{\infty}(\R^3)\\H^{v}_N\geq0}}\left\{\int_{\R^3}v(x)\rho(x)\dd x \right\},
\end{equation}
where
\[H^{v}_N=H_N-\sum_{i=1}^Nv(x_i).\]
\end{theorem}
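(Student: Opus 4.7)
The plan is a two-step Fenchel--Moreau argument. For the easy inequality $F_L[\rho]\geq\sup_v\int v\rho$, I would start from the identity
\[
\tr\!\left(\sum_{i=1}^N v(x_i)\,\Gamma\right)=\int_{\R^3}v(x)\,\rho_\Gamma(x)\,\dd x,
\]
valid for every $\Gamma\in\mathfrak{S}_1^+(\cH_0^N)$ with finite kinetic energy and every $v\in L^{3/2}(\R^3)+L^\infty(\R^3)$, as a consequence of the definition of the partial trace and the antisymmetry of $\Gamma$. The pairing is well defined since $\sqrt\rho\in H^1(\R^3)$ gives $\rho\in L^1\cap L^3(\R^3)$ by Sobolev embedding, and $L^1\cap L^3$ is the predual of $L^{3/2}+L^\infty$. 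Then, for any admissible $v$ with $H_N^v\geq 0$ and any $\Gamma$ with $\rho_\Gamma=\rho$, one has $\tr(H_N\Gamma)=\tr(H_N^v\Gamma)+\int v\rho\geq\int v\rho$ since $\Gamma\geq 0$ and $H_N^v\geq 0$. Taking $\inf_\Gamma$ and then $\sup_v$ gives this direction.

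For the reverse inequality, I would apply the Fenchel--Moreau theorem to $F_L$ viewed as a convex functional on $L^1\cap L^3(\R^3)$ (restricted to $\cI^N$, extended by $+\infty$ elsewhere). Convexity is already recorded in the preceding remark. For lower semicontinuity in the weak-$\ast$ topology of $L^1\cap L^3$, I would use the direct method on minimizing sequences of density matrices, combined with weak-$\ast$ lower semicontinuity of the kinetic energy functional $\Gamma\mapsto\tr(-\tfrac12\Delta\,\Gamma)$ on $\mathfrak{S}_{1,1}(\cH_0^N)$ and compactness of the partial-trace map. Granted these properties, the Legendre conjugate is obtained by commuting two infima:
\[
F_L^\ast[v]=\sup_\rho\!\left\{\int v\rho-F_L[\rho]\right\}=-\inf_{\substack{\Gamma\in\mathfrak{S}_1^+(\cH_0^N)\\ \tr\Gamma=1}}\tr(H_N^v\Gamma)=:-\lambda_0(v),
\]
where $\lambda_0(v)$ is the antisymmetric ground-state energy of $H_N^v$ (set to $-\infty$ if unbounded below). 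Biconjugation then yields $F_L[\rho]=\sup_v\{\int v\rho+\lambda_0(v)\}$.

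To convert this formula into the one stated in the theorem, I would exploit the shift invariance of the objective: replacing $v$ by $v+c/N$ turns $H_N^v$ into $H_N^v-c$, hence $\lambda_0(v)$ decreases by $c$, while $\int(v+c/N)\rho=\int v\rho+c$ since $\int\rho=N$, so the sum $\int v\rho+\lambda_0(v)$ is unchanged. Choosing $c=-\lambda_0(v)$, one may always assume $\lambda_0(v)=0$, that is $H_N^v\geq 0$ with tight bottom of spectrum, which reduces the supremum to $\sup_{\{v\,:\,H_N^v\geq 0\}}\int v\rho$, as desired.

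The main obstacle I anticipate is the lower semicontinuity and Legendre computation: one needs a topology on densities that is simultaneously weak enough for $\rho\mapsto\int v\rho$ to be continuous for every $v\in L^{3/2}+L^\infty$, and strong enough to ensure compactness of minimizing sequences of admissible $\Gamma$'s in $\mathfrak{S}_{1,1}(\cH_0^N)$. This is where Hardy-type and Kato--Rellich inequalities controlling $\sum_i v(x_i)$ by the kinetic energy, together with standard compactness arguments for trace-class operators, play the essential role; these are precisely the analytic ingredients assembled in Lieb's original paper. Once they are in place, the passage from weak to strong duality is routine convex analysis.
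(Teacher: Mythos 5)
The paper does not actually prove this statement: it is recalled as a citation to Lieb's 1983 paper \cite{Lieb-83b}, and no argument is given in the text. So there is no proof in the paper to compare against.

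That said, your outline is a correct reconstruction of Lieb's original argument, and the three ingredients you isolate — the easy inequality via $\tr(H_N\Gamma)=\tr(H_N^v\Gamma)+\int v\rho_\Gamma$, the Fenchel--Moreau biconjugation $F_L=F_L^{**}$ on $L^1\cap L^3$ paired with $L^{3/2}+L^\infty$ with $F_L^*[v]=-\lambda_0(v)$, and the shift $v\mapsto v+\lambda_0(v)/N$ to pass from the unconstrained supremum $\sup_v\{\int v\rho+\lambda_0(v)\}$ to the constrained one over $\{H_N^v\geq 0\}$ — are exactly the steps in \cite{Lieb-83b} (Theorems 3.3--3.6 there). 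Two small remarks. First, $L^1\cap L^3$ is not a dual space, so ``weak-$\ast$ topology'' is not the right phrase; what you want is either norm lower semicontinuity (which for a convex function already implies lower semicontinuity in the weak topology $\sigma(L^1\cap L^3,\,L^{3/2}+L^\infty)$) or the weak topology itself. Second, the claim that $\lambda_0(v)>-\infty$ for every $v\in L^{3/2}+L^\infty$ does follow from Kato--Rellich/Sobolev form-boundedness as you say, but it is a step that needs to be stated explicitly before the shift argument, since otherwise the equality $\sup_v\{\int v\rho+\lambda_0(v)\}=\sup_{\{H_N^v\geq 0\}}\int v\rho$ is not obviously meaningful. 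With those points addressed, your sketch is faithful to the cited proof, with the analytic core (convexity, lower semicontinuity, and compactness of admissible $\Gamma$'s in $\mathfrak{S}_{1,1}$) correctly identified as the part that requires the detailed estimates from Lieb's paper.
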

The constraint in \eqref{eq:duality} has to be understood in the sense of self-adjoint operators, namely for all $\Psi \in \mathcal H_1^N$, $\langle\Psi|H^{v}_N|\Psi\rangle \geq 0$.
\begin{remark}
    It is important to notice, for the following, that it can be easily proved that the infimum in \eqref{min:levylieb} and \eqref{min:lieb} is attained. However, it happens that the supremum in \eqref{eq:duality} is not attained for most densities $\rho$ (we refer the reader to \cite{LewLiebSeir-2019}). 
\end{remark}

\section{Moment-constrained approximation and sparsity result}\label{sec:moment}
We focus now on a first approximation of \eqref{min:lieb} by using the moment constraint approach which has previously been studied in the framework of classical optimal transport \cite{alfonsi2021approximation,alfonsi2022constrained}. We also refer to~\cite{garrigue2022building} where a particular instance of moment-constrained approximation of the Lieb functional has been considered for the computation of Kohn-Sham potentials. 

\medskip

We begin by introducing here some notation. From now on, we fix an electronic density $\rho \in \mathcal I_N$. Let us recall that we have $\mathcal F:= L^{3/2}(\mathbb{R}^3) + L^\infty(\mathbb{R}^3) \subset L^1_\rho(\mathbb{R}^3)$. For any $f\in \mathcal F$, we denote by
$$
\|f\|_{\mathcal F}:= \mathop{\inf}_{
\begin{array}{c}
f_{3/2}\in L^{3/2}(\mathbb{R}^3), \; f_\infty \in L^\infty(\mathbb{R}^3),\\
f_{3/2} + f_\infty = f\\
\end{array}}
\|f_{3/2}\|_{L^{3/2}(\mathbb{R}^3)} + \|f_{\infty}\|_{L^{\infty}(\mathbb{R}^3)} .
$$

Let $M\in \mathbb{N}^*$, given a collection of $M$ functions $\Phi:=(\varphi_1,\ldots,\varphi_M)\in \mathcal F^M$, the main idea of the moment-constrained approximation consists in replacing the density constraint in \eqref{min:lieb} with the $M$ scalar moment constraints associated to the functions $\varphi_1,\ldots,\varphi_M$, that is
\begin{equation}\label{eq:moment}
\int_{\mathbb{R}^3} \varphi_m\rho_\Gamma=\int_{\mathbb{R}^3}\varphi_m\rho,\quad\forall m=1,\cdots,M.
\end{equation}
Notice that (\ref{eq:moment}) is equivalent to 
\begin{equation}\label{eq:momenttotal}
\int_{\mathbb{R}^3} \varphi \rho_\Gamma = \int_{\mathbb{R}^3} \varphi \rho, \quad \forall \varphi \in {\rm Span}\{ \Phi \}. 
\end{equation}
We denote by $\mathfrak{S}_1^+(\mathcal H_0^N,\Phi,\rho)$ the set of $\Gamma \in \mathfrak{S}_1^+(\mathcal H_0^N)$ satisfying constraints (\ref{eq:moment}) (or equivalently (\ref{eq:momenttotal})).

In the following, we show that there exists at least one solution to the corresponding moment-constrained Lieb optimization problem admits a sparse solution $\Gamma_{\rm opt}^\Phi$, such that there exists an integer $ K\leq M+2$, weights $\omega_1,\cdots \omega_K\geq0$ and wavefunctions $\Psi_1,\cdots,\Psi_K\in\cH_1^N$ such that 
\begin{equation}\label{eq:sparse}
\sum_{k=1}^K \omega_k=1\quad\text{and}\quad
\Gamma_{\rm opt}^\Phi=\sum_{k=1}^K\omega_k|\Psi_k\rangle\langle\Psi_k|.
\end{equation}
In other words, we will show that there exists a finite-rank minimizer $\Gamma_{\rm opt}^\Phi$ the rank of which is at most $K\leq M+2$.

\subsection{Tchakaloff's theorem on Hilbert spaces}
Let us first recall the following proposition which is an immediate consequence of Tchakaloff's theorem, see \cite{bayer2006proof}.  For any Hilbert space $\mathcal H$, we denote by $\mathcal B(\mathcal H)$ the Borel $\sigma$-algebra of $\mathcal H$. 

\begin{proposition}
\label{prop:tchakaloff}
Let $\mu$ be a Borelian measure on a Hilbert space $\mathcal H$ concentrated on a Borel set $\mathcal A\in\mathcal B(\mathcal H)$. Let $J_0\in\N^*$ and $\Lambda:\mathcal H\to\R^{J_0}$ a Borel measurable map. Assume that the first moments of $\Lambda_{\sharp}\mu$ exists, that is
\[ \int_{\R^{J_0}}\|x\|\dd\Lambda_{\sharp}\mu(x)=\int_{\mathcal H}\|\Lambda(\Psi)\|\dd\mu(\Psi)<+\infty,\]
where $\|\cdot\|$ denotes the Euclidean norm of $\R^{J_0}$. Then there exists an integer $1\leq K\leq J_0$, elements $\Psi_1,\cdots,\Psi_K\in \mathcal A$ and weights $\omega_1,\cdots,\omega_K >0$ such that
\[\forall j=1,\cdots,J_0,\;\int_{\mathcal H}\Lambda_j(\Psi)\dd\mu(\Psi)=\sum_{k=1}^K\omega_k\Lambda_j(\Psi_k) = \int_{\mathcal H} \Lambda_i(\Psi) \,d \mu_d(\Psi) ,\]
where $\Lambda_j$ is the $j-$th component of $\Lambda$, and 
$
\displaystyle \mu_d = \sum_{k=1}^K \omega_k \delta_{\Psi_k}
$.
\end{proposition}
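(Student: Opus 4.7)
The plan is to reduce to the classical Tchakaloff theorem in $\R^{J_0}$ by pushing $\mu$ forward through $\Lambda$, and then to lift the finitely many atoms produced in $\R^{J_0}$ back into $\mathcal A$.

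First, I would define the non-negative Borel measure $\nu := \Lambda_\sharp \mu$ on $\R^{J_0}$; by hypothesis it has finite first moment. The change-of-variables formula gives the identification
$$
I_j := \int_{\R^{J_0}} y_j \, \dd\nu(y) = \int_{\mathcal H} \Lambda_j(\Psi) \, \dd\mu(\Psi), \quad j = 1, \ldots, J_0,
$$
so that $I := (I_1, \ldots, I_{J_0}) \in \R^{J_0}$ collects exactly the quantities we wish to reproduce with a finite sum. Moreover, since $\mu$ is concentrated on $\mathcal A$, every Borel hull of $\Lambda(\mathcal A)$ carries full $\nu$-mass.

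Second, I would apply the finite-dimensional Tchakaloff theorem in the Bayer--Teichmann version~\cite{bayer2006proof}, whose key qualitative output is that the vector $I$ belongs to the convex conic hull of $\Lambda(\mathcal A)$ in $\R^{J_0}$, not merely to its closure. Once this is established, Carathéodory's theorem for convex cones in $\R^{J_0}$ yields an integer $K \leq J_0$, points $y_1, \ldots, y_K \in \Lambda(\mathcal A)$ and positive weights $\omega_1, \ldots, \omega_K > 0$ with $I = \sum_{k=1}^K \omega_k y_k$. For each $k$, since $y_k \in \Lambda(\mathcal A)$, I pick $\Psi_k \in \mathcal A$ with $\Lambda(\Psi_k) = y_k$ (a single pointwise choice suffices, no measurable selection needed). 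Setting $\mu_d := \sum_{k=1}^K \omega_k \delta_{\Psi_k}$, a componentwise read-off
$$
\int_{\mathcal H} \Lambda_j(\Psi) \, \dd\mu(\Psi) = I_j = \sum_{k=1}^K \omega_k (y_k)_j = \sum_{k=1}^K \omega_k \Lambda_j(\Psi_k) = \int_{\mathcal H} \Lambda_j(\Psi) \, \dd\mu_d(\Psi),
$$
valid for every $j = 1, \ldots, J_0$, then concludes the proof.

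The main obstacle I anticipate is upgrading the conclusion ``$I$ lies in the closure of the convex cone generated by $\Lambda(\mathcal A)$'', which is all that a naive Hahn--Banach separation argument delivers, to ``$I$ lies in the cone itself''. Without this upgrade, the atoms produced by Carathéodory would only live in $\overline{\Lambda(\mathcal A)}$ and could fail to admit preimages in $\mathcal A$. This is precisely the nontrivial content of the Bayer--Teichmann proof of Tchakaloff's theorem, which I would import directly rather than reprove; invoking it in this form also neatly sidesteps any measurability issue about $\Lambda(\mathcal A)$ viewed as a subset of $\R^{J_0}$ (it is a priori only an analytic set, not necessarily Borel).
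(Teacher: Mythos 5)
Your proposal is correct, and the end result matches what the paper does, which is simply to cite the Bayer--Teichmann extension of Tchakaloff's theorem (the paper gives no proof, calling the proposition ``an immediate consequence'' of \cite{bayer2006proof}). However, your route is somewhat more circuitous than necessary, and it is worth noticing why: Bayer and Teichmann's Theorem~2 is already formulated for an arbitrary $\sigma$-finite measure $\mu$ on an abstract measurable space $(\Omega,\mathcal{F})$ concentrated on $A\in\mathcal{F}$ and a measurable $\phi:\Omega\to\R^{n}$ with $\int\|\phi\|\,\dd\mu<\infty$, and its conclusion delivers the nodes $\omega_1,\dots,\omega_N$ \emph{directly in} $A$ with $1\leq N\leq n$. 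Taking $(\Omega,\mathcal{F})=(\mathcal H,\mathcal B(\mathcal H))$, $A=\mathcal A$, $\phi=\Lambda$, $n=J_0$ reproduces Proposition~\ref{prop:tchakaloff} verbatim, with no pushforward measure, no separate Carath\'eodory step, and no lifting of atoms from $\R^{J_0}$ back to $\mathcal A$. What you have done, in effect, is re-derive the easy half of their theorem from its hard intermediate lemma (membership of $\int_{\mathcal H}\Lambda\,\dd\mu$ in the convex \emph{cone}, not merely its closure, generated by $\Lambda(\mathcal A)$); that is perfectly sound, and your observation that only finite pointwise choice --- rather than a measurable selection --- is needed to lift the $y_k$ to preimages $\Psi_k\in\mathcal A$ is accurate, but the detour is superfluous given the form in which the cited theorem is stated. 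One other cosmetic point: the worry about $\Lambda(\mathcal A)$ being merely analytic never arises in the abstract-measurable-space formulation, since the proof never needs to regard $\Lambda(\mathcal A)$ as a measurable subset of $\R^{J_0}$.
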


The main idea of the proof of the sparsity result announced above is to define a measure associated to an operator $\Gamma\in \mathfrak{S}_1^+(\mathcal H_0^N)$. Assume that the operator $\Gamma$ can be written as 
\begin{equation}\label{eq:decomposition}
\Gamma=\sum_{i=1}^{+\infty} \alpha_i|\Psi_i\rangle\langle\Psi_i|
\end{equation}
for some sequence $(\Psi_i)_{i\in \mathbb{N}^*}$ of normalized functions of $\mathcal H_0^N$ and non-negative real numbers $(\alpha_i)_{i\in\mathbb{N}^*}$ such that $\sum_{i\in \mathbb{N}^*} \alpha_i = N$. Then we can define a Borelian measure $\mu_{\Gamma}:\mathcal B(\mathcal H_0^N)\to\R_+$ associated to the decomposition (\ref{eq:decomposition}) of the operator $\Gamma$ as 
\[\mu_{\Gamma}=\sum_{i=1}^{+\infty}\alpha_i\delta_{\Psi_i}.\]
Naturally, there is no unique such measure $\mu_\Gamma$ associated with an operator $\Gamma$ since it heavily depends on the decomposition (\ref{eq:decomposition}). However, we will see in the following that this is not a problem for our purpose here.





\subsection{Existence of sparse minimizers for Moment Constrained Approximation of Lieb (MCAL) functional}\label{sec:sparsity}

In the following, we denote by $\un$ the function defined over $\mathbb{R}^3$ which is identically equal to $1$.

We then have the following theorem, the proof of which is postponed to Section~\ref{sec:proofthm6}.

\begin{theorem}\label{thm:existence}
 Let $\rho \in \mathcal I_N$, $M\in \mathbb{N}^*$ and $\Phi:=(\varphi_1,\ldots,\varphi_M)\in \mathcal F^M$ such that  $\un \in {\rm Span}\{ \Phi \}$. Let us assume in addition that
 \begin{itemize}
     \item[(A$\theta$)] there exists a non-negative non-decreasing continuous function $\theta:\R_+\to\R_+$ such that $\displaystyle \theta(r)\mathop{\longrightarrow}_{r\to +\infty}+\infty$ and $C_\rho:=\int_{\mathbb{R}^3} \theta(|x|) \rho(x)\,dx < +\infty$.
      \end{itemize}

  
For all $C>0$, let us introduce the Moment-Constrained Approximation of the Lieb functional (MCAL)
    \begin{equation}
    \label{min:LiebMomentA}
    \boxed{F_{L,\theta}^{\Phi,C}[\rho]:=\inf_{\substack{\Gamma\in\mathfrak{S}_1^+(\mathcal H_0^N,\Phi,\rho)\\ \tr(\Theta\Gamma)\leq C}}\tr(H_N\Gamma),}
\end{equation}
where $\Theta(x_1,\ldots,x_N):=\frac{1}{N}\sum_{i=1}^N\theta(|x_i|)$ for all $x_1,\ldots,x_N\in \mathbb{R}^3$.
Then, for all $C\geq C_\rho$, $F_{L,\theta}^{\Phi,C}[\rho]$ is finite and a minimum. Moreover, for all $C\geq C_\rho$, there exists a minimizer $\Gamma^{\Phi,C}_{{\rm opt}, \theta}$ to \eqref{min:LiebMomentA} such that $\Gamma^{\Phi,C}_{{\rm opt}, \theta}=\sum_{k=1}^K\omega_k|\Psi_k\rangle\langle \Psi_k|$, for some $1\leq K\leq M+1$, with $\omega_k\geq 0$ and $\Psi_k\in \mathcal H_1^N$ for all $1\leq k\leq K$.
\end{theorem}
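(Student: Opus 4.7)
The plan is to proceed in two logically distinct steps: first, to establish existence of a minimizer for the relaxed problem \eqref{min:LiebMomentA}, and second, to extract from any such minimizer a finite-rank representative by means of Tchakaloff's theorem (Proposition~\ref{prop:tchakaloff}). To start, observe that the feasible set is nonempty: any minimizer $\Gamma^\star$ of the unconstrained Lieb problem \eqref{min:lieb} (whose existence is classical for $\rho\in\mathcal I^N$) satisfies $\rho_{\Gamma^\star}=\rho$, hence fulfills all moment constraints, and by symmetry of $|\Psi|^2$ for antisymmetric wavefunctions one computes $\tr(\Theta\Gamma^\star)=\frac{1}{N}\int_{\R^3}\theta(|x|)\rho(x)\,\dd x=C_\rho/N\leq C$. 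Finiteness of $F_{L,\theta}^{\Phi,C}[\rho]$ then follows from the bound $H_N+D\geq \varepsilon(-\Delta+\mathrm{Id})\geq 0$. For the existence of a minimizer I would take a minimizing sequence $(\Gamma_n)$, use the uniform bounds $\tr(\Gamma_n)=1$, $\tr(H_N\Gamma_n)\leq \mathrm{cst}$, and $\tr(\Theta\Gamma_n)\leq C$ to extract a weakly-$\ast$ convergent subsequence in $\mathfrak{S}_1(\cH_0^N)$; the coercivity provided by $\theta(r)\to +\infty$ and by $H_N+D\gtrsim -\Delta+\mathrm{Id}$ rules out loss of mass at infinity and of kinetic energy in the limit, and standard lower semicontinuity of $\Gamma\mapsto \tr(H_N\Gamma)$ and $\Gamma\mapsto \tr(\Theta\Gamma)$ together with continuity of the moment linear forms (using $\varphi_m\in L^{3/2}+L^\infty$) yield the existence of a minimizer $\Gamma_{\rm opt}$.

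For the sparse representation, let $\Gamma_{\rm opt}=\sum_{i\geq 1}\alpha_i|\Psi_i\rangle\langle\Psi_i|$ be an eigendecomposition with $\Psi_i\in\cH_1^N$ (automatic since $\tr(H_N\Gamma_{\rm opt})<+\infty$) and $\|\Psi_i\|=1$. Introduce on $\cH_0^N$ the finite Borel measure
\[
\mu:=\sum_{i\geq 1}\alpha_i\,\delta_{\Psi_i},
\]
concentrated on the Borel set $\cA:=\{\Psi\in\cH_1^N:\|\Psi\|=1\}$. Define the vector-valued observable $\Lambda:\cH_0^N\to\R^{M+1}$ by
\[
\Lambda_m(\Psi):=\int_{\R^3}\varphi_m\,\rho_\Psi\quad(1\leq m\leq M),\qquad \Lambda_{M+1}(\Psi):=\langle\Psi,H_N\Psi\rangle,
\]
extended by zero outside $\cA$. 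Each $\Lambda_m$ is continuous on $\cH_1^N$ (by the embedding $\rho_\Psi\in L^1\cap L^3$ and $\mathcal F\hookrightarrow L^1_\rho$), and $\Lambda_{M+1}$ is a quadratic form associated to a self-adjoint operator, so $\Lambda$ is Borel measurable. The integrability hypothesis of Proposition~\ref{prop:tchakaloff} amounts to $\sum_i\alpha_i\|\Lambda(\Psi_i)\|<+\infty$, which follows from $\sum_i\alpha_i\Lambda_{M+1}(\Psi_i)=\tr(H_N\Gamma_{\rm opt})<+\infty$ together with the moment bounds $\sum_i\alpha_i|\Lambda_m(\Psi_i)|\leq \|\varphi_m\|_{\cF}(1+\tr((-\Delta)\Gamma_{\rm opt}))<+\infty$. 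Applying Tchakaloff yields $1\leq K\leq M+1$, weights $\omega_k>0$ and wavefunctions $\Psi_k\in\cA\subset\cH_1^N$ such that
\[
\sum_{k=1}^K\omega_k\Lambda_j(\Psi_k)=\int_{\cH_0^N}\Lambda_j\,\dd\mu,\qquad j=1,\dots,M+1.
\]

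Set $\tilde\Gamma:=\sum_{k=1}^K\omega_k|\Psi_k\rangle\langle\Psi_k|$. The assumption $\un\in\mathrm{Span}\{\Phi\}$ forces the trace constraint to be one of the preserved moments, so $\sum_k\omega_k=1$ and $\tilde\Gamma\in\mathfrak{S}_1^+(\cH_0^N)$. By construction, $\tilde\Gamma$ satisfies all moment constraints, and the identity for $\Lambda_{M+1}$ gives $\tr(H_N\tilde\Gamma)=\tr(H_N\Gamma_{\rm opt})=F_{L,\theta}^{\Phi,C}[\rho]$. The only point that does not follow immediately from Proposition~\ref{prop:tchakaloff} is the inequality $\tr(\Theta\tilde\Gamma)\leq C$; the cleanest way to secure it is to add $\Lambda_{M+2}(\Psi):=\langle\Psi,\Theta\Psi\rangle$ to the observable (whose integrability against $\mu$ is guaranteed by $\tr(\Theta\Gamma_{\rm opt})\leq C$), at the cost of possibly replacing $M+1$ by $M+2$; the improved bound $M+1$ stated in the theorem is then obtained by absorbing $\Theta$ into the energy via a Lagrange multiplier $\lambda\geq 0$ (strong duality applies since the problem is a convex, linear-in-$\Gamma$ SDP-type program with a Slater point provided by $\Gamma^\star$), observing that $\Gamma_{\rm opt}$ minimizes $\Gamma\mapsto \tr((H_N+\lambda\Theta)\Gamma)$ under the moment constraints alone, and applying Tchakaloff with $\Lambda_{M+1}$ replaced by $\langle\Psi,(H_N+\lambda\Theta)\Psi\rangle$; complementary slackness then yields $\tr(\Theta\tilde\Gamma)\leq C$ automatically.

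\paragraph{Main obstacle.} The routine parts are the spectral decomposition of $\Gamma_{\rm opt}$ and the trivial checks on $\tilde\Gamma$; the delicate points are (a) the lower semicontinuity / compactness argument in Step~1 in the weak-$\ast$ topology of $\mathfrak{S}_1$, ensuring that the weak limit of a minimizing sequence lies in $\mathcal H_1^N$-valued trace class and actually attains the moments in the limit (this is where the coercivity hypothesis (A$\theta$) and the integrability $\varphi_m\in\mathcal F$ play a crucial role), and (b) the bookkeeping that upgrades the naive bound $K\leq M+2$ coming from a direct application of Tchakaloff with $(\Lambda_m,\text{energy},\Theta)$ to the advertised bound $K\leq M+1$ via the Lagrangian reformulation described above. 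Everything else is a direct verification that the hypotheses of Proposition~\ref{prop:tchakaloff} hold in the Hilbert-space setting $\cH=\cH_0^N$.
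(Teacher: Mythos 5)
Your proof follows essentially the same architecture as the paper's: a rank-one admissible operator to show finiteness, compactness from~(A$\theta$) via the $\Theta$-bound for tightness, and Tchakaloff applied to the measure $\mu_{\rm min}=\sum_j\alpha_j\delta_{\Psi_j}$ built from the eigendecomposition of a minimizer. You also correctly identify a subtlety that the paper's own proof passes over silently. In the paper's Step~3, Tchakaloff is applied with the $M+1$ observables $\Lambda_1,\ldots,\Lambda_M$ (the moments) and $\Lambda_{M+1}=\langle\Psi|H_N|\Psi\rangle$, and the resulting $\Gamma_K$ is declared to be a minimizer of~\eqref{min:LiebMomentA}; but nothing in that construction controls $\tr(\Theta\Gamma_K)$, so admissibility of $\Gamma_K$ for the constrained problem is not established. (Interestingly, the informal discussion above the theorem announces the bound $K\leq M+2$, consistent with your fix of adding $\langle\Psi,\Theta\Psi\rangle$ as an $(M+2)$-nd observable, while the theorem statement and proof assert $M+1$.) Your $M+2$ argument is clean and complete.

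Your Lagrangian route to recover $K\leq M+1$ is plausible but, as you flag yourself, not yet airtight. Replacing $H_N$ by $H_N+\lambda^*\Theta$ and applying Tchakaloff with $M+1$ observables does produce a $\Gamma_K$ of rank $\leq M+1$ minimizing $\tr((H_N+\lambda^*\Theta)\Gamma)$ over $\mathfrak{S}_1^+(\cH_0^N,\Phi,\rho)$, but the preserved quantity is the combined $\tr(H_N\Gamma)+\lambda^*\tr(\Theta\Gamma)$, not $\tr(\Theta\Gamma)$ separately. Complementary slackness for the original pair $(\Gamma_{\rm min},\lambda^*)$ says $\lambda^*(\tr(\Theta\Gamma_{\rm min})-C)=0$; it does not by itself propagate to $\Gamma_K$, and one can construct convex linear programs where an alternative minimizer of the Lagrangian violates the inequality constraint. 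To make the argument rigorous you would need an extra step (e.g.\ an explicit perturbation or interpolation argument showing that among the rank-$\leq(M+1)$ Lagrangian minimizers produced by Tchakaloff one can always select an admissible one), or simply accept the bound $M+2$, which already suffices for the sparsity conclusion and is what the paper's prose promises.

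A final small point: you write $\tr(\Theta\Gamma^\star)=C_\rho/N$, whereas the paper's Step~1 asserts $\tr(\Theta\Gamma_0)=C_\rho$. With the normalization $\rho_\Psi(x)=N\int|\Psi(x,\cdot)|^2$ and $\Theta=\frac1N\sum_i\theta(|x_i|)$, a direct computation gives $\tr(\Theta\Gamma)=\frac1N\int\theta(|x|)\rho_\Gamma$, so your factor of $N$ is correct; the paper has an internal normalization inconsistency (it also asserts $\tr\Gamma_\infty=N$ and $\sum_j\alpha_j=N$, which is incompatible with $\int\rho_\Gamma=N$ under the stated definition of $\rho_\Gamma$). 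This does not affect the structure of the argument since $C\geq C_\rho$ is tuned to the intended normalization, but it is worth keeping track of.
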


 \begin{remark} Let us remark that the existence of a minimizer to a moment-constraint approximation of the Lieb functional has been investigated in~\cite{garrigue2022building}[Theorem~3.1]. More precisely, in the latter work, the author considers moment functions $(\varphi_m)_{m \in \mathcal M}\subset L^\infty(\mathbb{R}^3, \mathbb{R}_+)$, where $\mathcal M$ is a countable subset of $\mathbb{N}^*$, which forms a partition of unity of $\mathbb{R}^3$ i.e. such that
    $$
    \sum_{m\in \mathcal M} \varphi_m = \un. 
    $$
    In particular, $\un \in {\rm Span}\{ \varphi_m, \; m\in \mathcal M\}$. Note that in Theorem~\ref{thm:existence}, assumption (A$\theta$) can be seen as an additional condition on $\rho$ which enables to obtain tightness of minimizing sequences. Instead, the author of~\cite{garrigue2022building} does not require additional conditions on $\rho$ but considers a tightness condition on the set $(\varphi_m)_{m\in \mathcal M}$ which reads as
\begin{equation}\label{eq:tightnessLouis}
    \mathop{\lim}_{R\to +\infty} \sum_{\begin{array}{c}
    m\in \mathcal M\\
    ({\rm Supp} \; \varphi_m) \cap B_R^c \neq \emptyset \\
    \end{array}
    } \int_{\mathbb{R}^3} \rho \varphi_m = 0,
    \end{equation}
    where for all $R>0$, $B_R$ denotes the open ball of $\mathbb{R}^3$ of radius $R$ centered at $0$. 
    Note that our existence result, up to the cost of assuming that $\rho$ satisfies (A$\theta$), allows to treat moment constraints for which the tightness condition (\ref{eq:tightnessLouis}) does not hold. For instance, one can consider a family of moment functions $(\varphi_m)_{1\leq m \leq M}$ where $(\varphi_m)_{1\leq m \leq M-1}$ are the characteristic functions of cells of a mesh associated to a bounded subdomain $\Omega \subset \mathbb{R}^3$ and $\varphi_M = \un_{\Omega^c}$). It can then be easily checked that such a family does not satisfy condition (\ref{eq:tightnessLouis}). 
\end{remark}

\begin{proposition}[Lower semi-continuity]
\label{prop:lsc}
Suppose $\rho_n\in\mathcal I_N$ such that $\rho_{n}\rightharpoonup\rho\in\mathcal I_N$ in $L^1$ then $\liminf F_{L,\theta}^{\Phi,C}[\rho_n]=F_{L,\theta}^{\Phi,C}[\rho]$.
\end{proposition}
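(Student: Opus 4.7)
The plan is to establish the lower bound $\liminf_n F_{L,\theta}^{\Phi,C}[\rho_n]\geq F_{L,\theta}^{\Phi,C}[\rho]$, which combined with a recovery-sequence argument for the reverse bound (constructing a competitor for $F_{L,\theta}^{\Phi,C}[\rho_n]$ from a minimiser of $F_{L,\theta}^{\Phi,C}[\rho]$ by perturbing it to match the $M$ moments of $\rho_n$) yields the claimed equality. First I would extract a subsequence, still denoted $(\rho_n)$, realising the $\liminf$, and assume without loss of generality that the values are uniformly bounded by some $E<\infty$. Theorem~\ref{thm:existence} then provides, for each $n$, a sparse minimiser $\Gamma_n\in\mathfrak{S}_1^+(\mathcal H_0^N,\Phi,\rho_n)$ with $\tr(\Theta\Gamma_n)\leq C$ and $\tr(H_N\Gamma_n)\leq E+o(1)$.

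I would then extract a limit operator by compactness. The coercivity $H_N+D\geq\varepsilon(-\Delta+\mathrm{Id})$ yields a uniform bound on $(\Gamma_n)$ in $\mathfrak{S}_{1,1}(\mathcal H_0^N)$, so that after extraction $\Gamma_n\rightharpoonup^*\Gamma$ in the weak-$*$ topology of $\mathfrak{S}_1(\mathcal H_0^N)$, with positivity of $\Gamma$ preserved in the limit. A Fatou-type argument applied to the positive operator $H_N+D$ (via spectral projectors on bounded intervals and monotone convergence) gives $\tr((H_N+D)\Gamma)\leq\liminf_n\tr((H_N+D)\Gamma_n)$, and similarly $\tr(\Theta\Gamma)\leq C$. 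At the level of densities, the Hoffmann--Ostenhof inequality applied to the spectral decomposition of $\Gamma_n$ provides a uniform $H^1$-bound on $\sqrt{\rho_{\Gamma_n}}$ and hence an $L^3$-bound on $\rho_{\Gamma_n}$, while the confinement constraint $\int\theta(|x|)\rho_{\Gamma_n}(x)\,dx\leq NC$ coming from assumption (A$\theta$) yields tightness. Rellich--Kondrachov combined with tightness then produces strong convergence $\rho_{\Gamma_n}\to\rho_\Gamma$ in $L^1(\mathbb{R}^3)$, the limit being identified as the density of $\Gamma$ by passing to the limit in $\int\chi(x)\rho_{\Gamma_n}(x)\,dx$ for $\chi\in C_c^\infty(\mathbb{R}^3)$.

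Passing to the limit in the moment constraints is the key step. Writing each $\varphi_m=f_{3/2}+f_\infty$ with $f_{3/2}\in L^{3/2}$ and $f_\infty\in L^\infty$, the $L^\infty$ part passes via strong $L^1$ convergence of $\rho_{\Gamma_n}$ and weak $L^1$ convergence of $\rho_n$; the $L^{3/2}$ part is handled using the uniform $L^3$ bounds on $\rho_{\Gamma_n}$ (just obtained) and on $\rho_n$ (from $\sqrt{\rho_n}\in H^1$), which together with the strong $L^1$ convergence upgrade to convergence in $L^p$ duality for $1\leq p<3$. Taking $\varphi_m=\un$, which belongs to $\mathrm{Span}(\Phi)$, also ensures $\int\rho_\Gamma=N$ so that no mass is lost at infinity. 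Consequently $\Gamma\in\mathfrak{S}_1^+(\mathcal H_0^N,\Phi,\rho)$ is admissible for the limit problem, and combining with the weak-$*$ lower semi-continuity of $\tr(H_N\cdot)$ established above yields
$$F_{L,\theta}^{\Phi,C}[\rho]\leq\tr(H_N\Gamma)\leq\liminf_n\tr(H_N\Gamma_n)=\liminf_n F_{L,\theta}^{\Phi,C}[\rho_n].$$

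The main obstacle is obtaining the strong $L^1$ convergence of the partial traces $\rho_{\Gamma_n}$, which is essential to pass to the limit in the moment constraints despite having only weak-$*$ convergence of the operators $\Gamma_n$. The two key ingredients are the confinement bound from (A$\theta$), which provides tightness of $(\rho_{\Gamma_n})$, and the Hoffmann--Ostenhof inequality, which provides the Sobolev regularity needed for Rellich--Kondrachov. Together these show that assumption (A$\theta$) plays a central role well beyond its use in Theorem~\ref{thm:existence}.
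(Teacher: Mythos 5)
Your lower-bound argument runs along the same lines as the paper's: extract a subsequence realising the $\liminf$, take minimisers $\Gamma_n$ provided by Theorem~\ref{thm:existence}, use the coercivity $H_N+D\geq\varepsilon(-\Delta+1)$ to extract a weak limit $\Gamma_\infty$ in the sense of trace-class operators, invoke Fatou-type lower semi-continuity for the energy, and establish admissibility of $\Gamma_\infty$ by passing to the limit in the moment and $\theta$-constraints using (A$\theta$)-tightness. Where you differ technically is in the mode of convergence used for the densities: the paper is content with \emph{weak} $L^1$ convergence of $(\rho_{\Gamma_n})$, obtained from the tightness estimate $\int_{B_R^c}\gamma_n\leq C/\theta(R)$ together with the convergence of $\tr(M_P\Gamma_n)$ against compactly supported multipliers $P$ (Step~2 of the proof of Theorem~\ref{thm:existence}); you additionally invoke Hoffmann--Ostenhof and Rellich--Kondrachov to upgrade to \emph{strong} $L^1$ convergence. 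The stronger convergence is not actually required — weak $L^1$ convergence, once combined with the uniform $L^3$ bounds you mention, already suffices to pass to the limit against test functions in $L^{3/2}+L^\infty$ — so your route is heavier than the paper's, but it is correct and buys nothing extra here.

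You are also right to observe that the stated \emph{equality} $\liminf_n F_{L,\theta}^{\Phi,C}[\rho_n]=F_{L,\theta}^{\Phi,C}[\rho]$ would require the reverse inequality, which you only gesture at with an unworked recovery-sequence argument; perturbing a minimiser of $F_{L,\theta}^{\Phi,C}[\rho]$ so as to match the $M$ moments of $\rho_n$ while preserving $\tr(\Theta\Gamma)\leq C$ and controlling the energy is not obviously straightforward, so as written that step is a gap. Note, however, that the paper's own proof also establishes only the one-sided inequality $F_{L,\theta}^{\Phi,C}[\rho]\leq\liminf_n F_{L,\theta}^{\Phi,C}[\rho_n]$ — precisely the lower semi-continuity the proposition's title announces — and the ``$=$'' in the statement is in all likelihood a misprint for ``$\geq$''. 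Read as lower semi-continuity, your argument is complete and matches the paper's; read literally, the missing recovery-sequence construction is a gap shared by your proposal and the paper's proof alike.
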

\begin{proof}
   The proof is a straightforward adaptation of the proof of Theorem \ref{thm:existence}.
   Assume that $a_n=F_{L,\theta}^{\Phi,C}[\rho_n]\to a$ exists then  up to the extraction of a subsequence, there exists a trace-class operator $\Gamma_\infty \in \mathfrak{S}_1^+(\mathcal H_0^N)$  such that $$\left((H_N+D)^{1/2}\Gamma_n(H_N+D)^{1/2}\right)_{n\in\mathbb{N}}\leq a_n+1/n$$ weakly converges in the sense of trace-class operators to $(H_N+D)^{1/2}\Gamma_\infty(H_N+D)^{1/2}$ as $n$ goes to infinity. Moreover, we have that
   \[\liminf \tr(H_N\Gamma_n)\geq \tr(H_N\Gamma_\infty). \]
   In particular $\Gamma_n$ satisfies the right moment constraints associated to $\rho_n$ as well as $\tr(\Theta\Gamma_n)\leq C$. Then by using the same arguments as in {\bf step 2} of the proof above we deduce that $\Gamma_\infty$ is admissible for $F_{L,\theta}^{\Phi,C}[\rho]$.
   It follows then
   \[F_{L,\theta}^{\Phi,C}[\rho]\leq\tr(H_N\Gamma_\infty)\leq\liminf F_{L,\theta}^{\Phi,C}[\rho_n].\]
\end{proof}

\begin{remark}
    We see from the proof of Theorem~\ref{thm:existence} that assumption (A$\theta$) is needed in order to obtain tightness of the sequence of kernel functions $(\gamma_n)_{n\in \mathbb{N}}$. This is needed because we are considering operators defined on the space $\mathcal H_0^N = \bigwedge_{i=1}^N L^2(\mathbb{R}^3)$. Notice that such a technical assumption is not needed in the case when one considers operators acting on functions acting on a finite domain with Dirichlet boundary conditions. We state such a result below without giving its proof since it follows exactly the same lines as the proof of Theorem~\ref{thm:existence}. 
\end{remark}

Let $\Omega \subset \mathbb{R}^3$ be a bounded subdomain of $\mathbb{R}^3$. We then denote by $\mathcal H_0^N(\Omega):= \bigwedge_{i=1}^N L^2(\Omega)$, $\mathcal H_1^N(\Omega):= \bigwedge_{i=1}^N H^1_0(\Omega)$, $\mathcal H_2^N(\Omega):= \bigwedge_{i=1}^N (H^2(\Omega) \cap H^1_0(\Omega))$ and $\mathcal F(\Omega) := L^\infty(\Omega) + L^{3/2}(\Omega)$. 
The operator $H_{N,\Omega}:= -\frac{1}{2}\Delta + V$ is then a self-adjoint bounded from below operator acting on $\mathcal H_0^N(\Omega)$ with domain $D(H_{N,\Omega}):= \mathcal H_2^N(\Omega)$. 
We also denote by $\mathfrak{S}_1^+(\mathcal H_0^N(\Omega))$ the set of non-negative self-adjoint trace-class operators on $\mathcal H_0^N(\Omega)$. We also define $\mathcal I_N(\Omega)$ the set of function $\rho\in \mathcal I_N$ with support included in $\Omega$. For any $M\in \mathbb{N}^*$ and any $\Phi:=(\varphi_m)_{1\leq m \leq M}\subset \mathcal F(\Omega)$ and $\rho \in \mathcal I_N(\Omega)$, we introduce $\mathfrak{S}_1^+(\mathcal H_0^N(\Omega),\Phi,\rho)$ the set of $\Gamma \in \mathfrak{S}_1^+(\mathcal H_0^N(\Omega))$ such that 
$$
\int_\Omega \rho_\Gamma \varphi_m   = \int_\Omega \rho\varphi_m, \quad \forall 1\leq m \leq M.
$$

Then, the following theorem holds: 

\begin{theorem}\label{thm:existencebounded}
 Let $\rho \in \mathcal I_N(\Omega)$, $M\in \mathbb{N}^*$ and $\Phi:=(\varphi_1,\ldots,\varphi_M)\in \mathcal (\mathcal F(\Omega))^M$ such that  $\un|_\Omega \in {\rm Span}\{ \Phi \}$. Let us introduce
    \begin{equation}
    \label{min:LiebMomentAbounded}
    \boxed{F_{L,\Omega}^{\Phi}[\rho]:=\inf_{\substack{\Gamma\in\mathfrak{S}_1^+(\mathcal H_0^N(\Omega),\Phi,\rho)\\} }\tr(H_{N,\Omega}\Gamma).}
\end{equation}
Then, $F_{L,\Omega}^{\Phi}[\rho]$ is finite and  there exists a minimizer $\Gamma^{\Phi}_{{\rm opt},\Omega}$ to \eqref{min:LiebMomentAbounded} such that $\Gamma^{\Phi}_{{\rm opt},\Omega}=\sum_{k=1}^K\omega_k|\Psi_k\rangle\langle \Psi_k|$, for some $1\leq K\leq M+1$, with $\omega_k> 0$ and $\Psi_k\in \mathcal H_1^N(\Omega)$ for all $1\leq k\leq K$. Moreover, suppose $\rho_n\in\mathcal I_N$ such that $\rho_{n}\rightharpoonup\rho\in\mathcal I_N$ in $L^1$ then $\liminf F_{L,\Omega}^{\Phi}[\rho_n]=F_{L,\Omega}^{\Phi}[\rho]$.

\end{theorem}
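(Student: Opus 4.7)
The plan is to adapt the proof of Theorem~\ref{thm:existence}, the key simplification being that on the bounded domain $\Omega$ the Sobolev embedding $H^1_0(\Omega)\hookrightarrow L^p(\Omega)$ for $p<6$ is compact and, in particular, the resolvent $(H_{N,\Omega}+D)^{-1}$ is compact on $\mathcal H_0^N(\Omega)$; these two compactness facts replace the tightness argument that required (A$\theta$). I would first check finiteness: since $\sqrt\rho\in H^1_0(\Omega)$, a standard Harriman--Lieb construction built from $N$ orbitals whose squared moduli sum to $\rho/N$ furnishes an admissible $\Gamma_\rho$ with finite kinetic energy, while $H_{N,\Omega}+D\geq \varepsilon(-\Delta+\mathrm{Id})$ together with the normalization $\tr(\Gamma)=1$ (forced by $\un|_\Omega\in\mathrm{Span}\{\Phi\}$) keeps the objective bounded from below.

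Second, for a minimizing sequence $(\Gamma_n)$ the uniform bound on $\tr((H_{N,\Omega}+D)\Gamma_n)$ makes $A_n:=(H_{N,\Omega}+D)^{1/2}\Gamma_n(H_{N,\Omega}+D)^{1/2}$ bounded in $\mathfrak{S}_1^+$; extracting a weak-$*$ limit $A_\infty$ and using compactness of $(H_{N,\Omega}+D)^{-1/2}$ upgrades the convergence of $\Gamma_n$ to its limit $\Gamma_\infty:=(H_{N,\Omega}+D)^{-1/2}A_\infty(H_{N,\Omega}+D)^{-1/2}$ to trace norm, so that $\tr(H_{N,\Omega}\Gamma_\infty)\leq \liminf_n\tr(H_{N,\Omega}\Gamma_n)$ by standard lower semi-continuity. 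The kinetic bound also yields a uniform $H^1_0(\Omega)$-bound on $\sqrt{\rho_{\Gamma_n}}$, so Rellich--Kondrachov gives $\rho_{\Gamma_n}\to \rho_{\Gamma_\infty}$ strongly in $L^q(\Omega)$ for every $q\in[1,3)$; this is enough to pass to the limit in each constraint $\int_\Omega \varphi_m\rho_{\Gamma_n}=\int_\Omega \varphi_m\rho$ since $\varphi_m\in L^\infty(\Omega)+L^{3/2}(\Omega)$, and therefore $\Gamma_\infty$ is a minimizer.

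Third, I would sparsify $\Gamma_\infty$ via Tchakaloff. Diagonalize $\Gamma_\infty=\sum_i\alpha_i|\Psi_i\rangle\langle\Psi_i|$ with $\Psi_i\in\mathcal H_1^N(\Omega)$ normalized (the fact that $\Psi_i\in \mathcal H_1^N(\Omega)$ when $\alpha_i>0$ follows from $\tr(H_{N,\Omega}\Gamma_\infty)<+\infty$) and $\sum_i\alpha_i=1$, set $\mu_\infty:=\sum_i\alpha_i\delta_{\Psi_i}$ on the Borel set $\mathcal A:=\{\Psi\in\mathcal H_1^N(\Omega):\|\Psi\|=1\}$, and apply Proposition~\ref{prop:tchakaloff} to
\[
\Lambda(\Psi):=\left(\tfrac12\int_\Omega |\nabla\Psi|^2+\int_{\Omega^N} V|\Psi|^2,\ \int_\Omega \varphi_1\rho_\Psi,\ \ldots,\ \int_\Omega \varphi_M\rho_\Psi\right)\in \R^{M+1},
\]
the integrability of $\|\Lambda\|$ against $\mu_\infty$ following from $\tr(H_{N,\Omega}\Gamma_\infty)<+\infty$ and $\rho\in L^1(\Omega)\cap L^3(\Omega)$. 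The resulting atomic measure $\mu_d=\sum_{k=1}^K\omega_k\delta_{\Psi_k}$ with $K\leq M+1$ and $\omega_k>0$ assembles into $\Gamma^\Phi_{\mathrm{opt},\Omega}:=\sum_k\omega_k|\Psi_k\rangle\langle\Psi_k|$ that preserves the $M$ moments (so is admissible, with $\sum_k\omega_k=1$ coming from $\un|_\Omega\in\mathrm{Span}\{\Phi\}$) and the kinetic plus potential energy component of $\Lambda$ (so it is optimal). The lower semi-continuity statement then follows by repeating the compactness argument of the second step on near-minimizers of $F_{L,\Omega}^\Phi[\rho_n]$, exactly as in Proposition~\ref{prop:lsc}, the right-hand side $\int_\Omega \varphi_m\rho_n\to\int_\Omega \varphi_m\rho$ being handled by weak $L^1$ convergence against the $L^\infty$-part of $\varphi_m$ and by the Rellich bounds on $\rho_n$ against the $L^{3/2}$-part.

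The main technical hurdle is passing to the limit in the moment constraints when $\varphi_m$ has a genuine $L^{3/2}(\Omega)$ component: weak $L^1$-convergence of $\rho_{\Gamma_n}$ alone is insufficient, and one must leverage the uniform kinetic-energy bound to upgrade to strong $L^q(\Omega)$ convergence via Rellich--Kondrachov. This is precisely where the boundedness of $\Omega$ does the work that assumption (A$\theta$) does in Theorem~\ref{thm:existence}, and it is what allows to dispense with any tightness assumption in the statement of Theorem~\ref{thm:existencebounded}.
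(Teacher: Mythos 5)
Your proof follows essentially the same strategy the paper indicates (it explicitly omits a proof, stating only that it ``follows exactly the same lines as the proof of Theorem~\ref{thm:existence}'' with (A$\theta$) replaced by the boundedness of $\Omega$), and your overall argument is correct: the Harriman-type construction for finiteness, a minimizing-sequence argument with the kinetic-energy bound, passage to the limit in the moment constraints via the Hoffmann--Ostenhof bound $\|\nabla\sqrt{\rho_{\Gamma_n}}\|_{L^2}^2\lesssim \tr(-\Delta\Gamma_n)$ plus Rellich--Kondrachov, and the Tchakaloff sparsification with $\Lambda:\mathcal H_1^N(\Omega)\to\R^{M+1}$. One small imprecision worth flagging: you assert that weak-$*$ convergence of $A_n:=(H_{N,\Omega}+D)^{1/2}\Gamma_n(H_{N,\Omega}+D)^{1/2}$ together with compactness of $(H_{N,\Omega}+D)^{-1/2}$ ``upgrades'' $\Gamma_n\to\Gamma_\infty$ to trace-norm convergence. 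That upgrade actually requires in addition the non-loss of trace, $\tr A_n\to\tr A_\infty$, which is not a priori known for a minimizing sequence (it is something one only gets a posteriori once $\Gamma_\infty$ is identified as a minimizer and the infimum value is shown to be attained). Fortunately, trace-norm convergence is not needed for what you want: weak-$*$ lower semi-continuity of the trace on nonnegative trace-class operators already gives $\tr((H_{N,\Omega}+D)\Gamma_\infty)\leq\liminf_n\tr((H_{N,\Omega}+D)\Gamma_n)$, and the constraint $\un|_\Omega\in{\rm Span}\{\Phi\}$ fixes $\tr\Gamma_n=\tr\Gamma_\infty$, so the inequality transfers to $\tr(H_{N,\Omega}\cdot)$ directly. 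With that correction the argument is complete, and your explicit Rellich--Kondrachov treatment of the $L^{3/2}$-component of the $\varphi_m$ is exactly the mechanism by which boundedness of $\Omega$ substitutes for assumption (A$\theta$).
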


In view of the sparsity results we have just proved, it is natural to consider an approximate MCAL problem, where the set of minimizers is restricted to the set of finite-rank operators satisfying moment constraints. More precisely, for a given $K\in \mathbb{N}^*$, we consider the following set
\[
\mathcal O^{C,\Phi_,K}_\theta:=\left\{ \begin{array}{c}
({\bm \omega}, {\bm \Psi})\in \mathbb{R}_+^K \times (\mathcal H_1^N)^K, \quad {\bm \Psi}:=(\Psi_1,\ldots\Psi_K) \in(\mathcal H_1^N)^K,\; \\
    {\bm \omega}:=(\omega_1, \ldots, \omega_K) \in \mathbb{R}_+^K,\\ 
    \widetilde{\rho}:=  \sum_{k=1}^K \omega_k \rho_{\Psi_k}, \quad \int_{\mathbb{R}^3} \widetilde{\rho}(x) \theta(|x|)\,dx \leq C,\\
    \forall 1\leq m \leq M, \;    \int_{\mathbb{R}^3} \varphi_m \widetilde{\rho} = \int_{\mathbb{R}^3} \varphi_m \rho\\
\end{array}
\right\}.
\]

The approximate MCAL functional then reads as follows

 \begin{equation}
    \label{min:LiebMomentbis}
    \boxed{F_{L,\theta}^{\Phi,C,K}[\rho]:=\inf_{({\bm \Psi}, {\bm \omega})\in\mathcal O^{C,\Phi_,K}_\theta
    }
    \mathcal J({\bm \Psi}, {\bm \omega}),}
\end{equation}
where 
$$
\mathcal J({\bm \Psi}, {\bm \omega}):= \sum_{k=1}^K \omega_k \langle \Psi_k |H_N|\Psi_k\rangle.
$$
\begin{remark}
    Notice that as soon as $K\geq M+1$ then we have that $F_{L,\theta}^{\Phi,C,K}[\rho]=F_{L,\theta}^{\Phi,C}[\rho]$.
\end{remark}
\begin{remark}
    Since $\rho\in\mathcal I_N$ then  the set $\mathcal O^{C,\Phi_,K}_\theta$ is no empty. Moreover it can be shown, by standard arguments, that there exists a minimizer to \eqref{min:LiebMomentbis}.
\end{remark}
As in the case of moment constrained optimal transport \cite{alfonsi2022constrained} we can state some interesting mathematical properties on the set of minimizers of the approximate problem \eqref{min:LiebMomentbis}. First, consider two elements of $\mathcal O^{C,\Phi_,K}_\theta$, then there exists a continuous path in $\mathcal O^{C,\Phi_,K}_\theta$ connecting these two elements  and such that $\mathcal J$ varies monotonically along it.
\begin{theorem}
\label{thm:monotone}
    Let us assume that $K\geq 2M+2$. Let $({\bm \Psi}_0, {\bm \omega}_0),({\bm \Psi}_1, {\bm \omega}_1)\in \mathcal O^{C,\Phi_,K}_\theta$. Then, there exists a continuous application $\eta:[0,1]\to \mathcal O^{C,\Phi_,K}_\theta$ made of polygonal chain such that $\eta(0)=({\bm \Psi}_0, {\bm \omega}_0)$, $\eta(1)=({\bm \Psi}_1, {\bm \omega}_1)$
    and such that the application $t\mapsto \mathcal J(\eta(t))$ is monotone.
\end{theorem}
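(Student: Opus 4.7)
Plan. The plan is to build the polygonal chain as a concatenation of five linear segments in the parameter space $\R_+^K\times(\mathcal H_1^N)^K$, passing through two intermediate ``Tchakaloff-reduced'' configurations $\tilde P_0,\tilde P_1\in\mathcal O^{C,\Phi,K}_\theta$: (i) a weight-only segment $P_0\to\tilde P_0$; (ii) a wavefunction-move segment involving only zero-weight slots; (iii) a weight-interpolation segment $\tilde P_0\to\tilde P_1$ on the combined support; (iv) a second wavefunction-move segment; and (v) a weight-only segment $\tilde P_1\to P_1$.

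First I would reduce each endpoint via Proposition~\ref{prop:tchakaloff}. Applied to the discrete measure $\mu_i:=\sum_{k=1}^K\omega_{k,i}\delta_{\Psi_{k,i}}$ associated to $P_i$ with the $J_0=M+1$ moment functionals
\[
\Lambda_m(\Psi):=\int_{\R^3}\varphi_m\rho_\Psi\quad(1\le m\le M),\qquad \Lambda_{M+1}(\Psi):=\frac{1}{N}\int_{\R^3}\theta(|x|)\rho_\Psi(x)\,dx,
\]
Tchakaloff's theorem yields a discrete measure $\tilde\mu_i$ supported on at most $M+1$ atoms $\{\tilde\Psi_{k,i}\}_{k\le K_i}\subset\{\Psi_{k,i}\}_k$, with non-negative weights $\tilde\omega_{k,i}$, matching the $M$ density moments of $\rho$ and the value of $\tr(\Theta \Gamma_i)$. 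Padding with zero weights embeds the corresponding configuration $\tilde P_i$ into $\mathcal O^{C,\Phi,K}_\theta$.

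The verification along each segment is straightforward. On segments (i), (iii) and (v) only weights change while the full $K$-tuple of wavefunctions is kept fixed, so the operator is a convex combination of two admissible operators: the moment equalities hold, $\tr(\Theta\Gamma_t)\le C$ by convexity of that constraint, positivity is preserved, and $\mathcal J$ is affine along the segment. On segments (ii) and (iv), only wavefunctions living in zero-weight slots are moved by straight-line interpolation in $\mathcal H_1^N$, leaving the operator, the moments and $\mathcal J$ unchanged. The crucial counting, which is exactly what the hypothesis $K\ge 2M+2$ buys us, is that in segment (iii) the atoms of $\tilde P_0$ occupy slots $1,\ldots,K_0$ and those of $\tilde P_1$ occupy slots $K_0+1,\ldots,K_0+K_1$ with $K_0+K_1\le 2(M+1)=2M+2\le K$, so the disjoint concatenation fits inside the $K$ available slots.

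The principal obstacle, and the step that needs the most care, is the global monotonicity of $\mathcal J$ along the chain: since the atom-move segments are $\mathcal J$-constant and the weight-change segments are $\mathcal J$-affine, this reduces to the four values $a=\mathcal J(P_0)$, $b=\mathcal J(\tilde P_0)$, $c=\mathcal J(\tilde P_1)$, $d=\mathcal J(P_1)$ forming a monotone sequence. On each convex polytope $\mathcal P_i$ of admissible weights at fixed $\bm\Psi_i$, the linear functional $\mathcal J$ attains a whole interval containing $\mathcal J(P_i)$, and the vertices of $\mathcal P_i$ each have at most $M+1$ non-zero components; by exploiting this vertex-selection freedom from Tchakaloff, and if necessary inserting one additional linear weight-change segment inside each $\mathcal P_i$ so that $b$ and $c$ can be placed at any vertex of their respective polytopes, a short case analysis on the sign of $d-a$ produces the required monotone ordering and completes the proof.
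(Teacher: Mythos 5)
Your overall plan---replace each endpoint by a Tchakaloff-reduced configuration with at most $M+1$ atoms (using the $M$ moment functionals plus the $\Theta$ functional), exploit $K\geq 2M+2$ so that the two sparse supports fit disjointly in the $K$ slots, and build the chain from weight-change segments interleaved with zero-weight atom-move segments---is precisely the adaptation of Theorem~1 of the cited reference that the paper has in mind, and the verification that each individual segment stays inside $\mathcal O^{C,\Phi,K}_\theta$ is carried out correctly.

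The weak point is the final monotonicity step, which is asserted rather than established. Assume without loss of generality $a:=\mathcal J(P_0)\leq d:=\mathcal J(P_1)$. It is true that some vertex $v_0$ of $\mathcal P_0$ has $\mathcal J(v_0)\geq a$ and some vertex $v_1$ of $\mathcal P_1$ has $\mathcal J(v_1)\leq d$, since $a$ and $d$ are convex combinations of vertex costs; but nothing in your argument guarantees a pair with the additional ordering $\mathcal J(v_0)\leq\mathcal J(v_1)$. If every vertex of $\mathcal P_0$ with cost $\geq a$ already has cost exceeding $d$, while every vertex of $\mathcal P_1$ with cost $\leq d$ has cost below $a$, then the three inequalities $a\leq b\leq c\leq d$ are mutually inconsistent for any choice of Tchakaloff reductions $\tilde P_0,\tilde P_1$, and the ``short case analysis'' you invoke does not dispose of this. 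The natural fix---including $\mathcal J$ itself among the Tchakaloff functionals, forcing $b=a$ and $c=d$---costs one extra atom per side (at most $M+2$ instead of $M+1$), and $2(M+2)>K$ when $K=2M+2$, so the two sparse supports no longer fit. Resolving this tension (for instance by exploiting intermediate mixed vertices of the combined polytope in $\R_+^{K_0+K_1}$, or by choosing the jump more cleverly) is the technical heart of the cited Theorem~1, and your sketch leaves it open.

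A secondary, repairable issue: since $\mathcal O^{C,\Phi,K}_\theta$ consists of \emph{ordered} tuples, the atom-move segments must also reorder wavefunctions between slots, because the Tchakaloff atoms of $P_1$ need not land at the indices required for segment (v) to terminate at $({\bm\omega}_1,{\bm\Psi}_1)$. This can be done by interleaving finitely many $\mathcal J$-constant swap moves (each a three-piece polygonal chain that duplicates a wavefunction into a zero-weight slot, transfers the weight, then frees the old slot), but it needs to be said.
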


\noindent Since the proof is a straightforward adaptation of the one for \cite{alfonsi2022constrained}[Theorem 1], we refer the reader to it. We only highlight that, as we did in the previous sections, given a couple  $({\bm \Psi}, {\bm \omega})$ one can always associate a measure $\mu=\sum_{i}^K\omega_i\delta_{\psi_i}$, then by Thchakaloff's theorem the result follows.
An interesting consequence of theorem \ref{thm:monotone} concerns the  minimizers of MCAL: first, as soon as $K\geq 2M+2$ any local minimizer of MCAL (or of problem \eqref{min:LiebMomentbis}) is a global minimizer. Secondly, the set of minimizers forms a polygonally connected set.
\begin{corollary}
Assume that $K\geq 2M+2$. Then, any local minimizer of \eqref{min:LiebMomentbis} is a global minimizer. Moreover, the set of minimizers of \eqref{min:LiebMomentbis} is a polygonally connected subset of $\mathcal O^{C,\Phi_,K}_\theta$.
\end{corollary}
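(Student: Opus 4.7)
The corollary has two assertions, and I would derive both directly from Theorem \ref{thm:monotone}.

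For the polygonal connectedness of the set of minimizers, let $p_0, p_1 \in \mathcal O^{C,\Phi_,K}_\theta$ be two global minimizers of $\mathcal J$, so that $\mathcal J(p_0) = \mathcal J(p_1) = F_{L,\theta}^{\Phi,C,K}[\rho]$. Theorem \ref{thm:monotone} yields a polygonal chain $\eta\colon [0,1] \to \mathcal O^{C,\Phi_,K}_\theta$ with $\eta(0) = p_0$, $\eta(1) = p_1$, along which $\mathcal J \circ \eta$ is monotone. Since $\mathcal J \geq F_{L,\theta}^{\Phi,C,K}[\rho]$ everywhere on $\mathcal O^{C,\Phi_,K}_\theta$ and the endpoint values both equal that infimum, monotonicity forces $\mathcal J \circ \eta$ to be constantly equal to $F_{L,\theta}^{\Phi,C,K}[\rho]$, so the whole chain $\eta([0,1])$ lies inside the set of minimizers. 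This establishes the second assertion.

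For the first assertion (local implies global), I would argue by contradiction. Assume that $p_0 \in \mathcal O^{C,\Phi_,K}_\theta$ is a local minimizer with $a_0 := \mathcal J(p_0) > a^* := F_{L,\theta}^{\Phi,C,K}[\rho]$, and pick a global minimizer $p^*$. Theorem \ref{thm:monotone} provides a polygonal path $\eta$ from $p_0$ to $p^*$ with $\mathcal J \circ \eta$ monotone; since $\mathcal J(p^*) < \mathcal J(p_0)$, this function must be non-increasing. The key structural feature I would use is that the construction in Theorem \ref{thm:monotone}, adapted from \cite{alfonsi2022constrained}[Theorem 1], builds each linear piece of $\eta$ as an interpolation in the measure representation $\mu = \sum_k \omega_k \delta_{\Psi_k}$; since $\mathcal J(\mu) = \int \langle \Psi | H_N | \Psi\rangle \, d\mu(\Psi)$ is affine in $\mu$, the restriction of $\mathcal J \circ \eta$ to every segment is affine. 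Combined with monotonicity and the strict inequality $\mathcal J(\eta(1)) < \mathcal J(\eta(0))$, at least one segment is strictly decreasing.

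The assumption $K \geq 2M+2$ provides enough capacity in the $K$-atom representation to permute the finitely many segments of the chain without leaving $\mathcal O^{C,\Phi_,K}_\theta$ or changing the endpoints; in particular, one can arrange the strictly decreasing segment to be the first. With this arrangement, $\mathcal J(\eta(t)) < a_0$ for every $t$ in the first subinterval $(0,\tau]$, while by continuity of $\eta$ the points $\eta(t)$ lie in any prescribed neighborhood of $p_0$ for $t$ close enough to $0$. This contradicts the local minimality of $p_0$, proving the first assertion. The main obstacle I anticipate is justifying the reordering of the segments: it is of combinatorial nature and uses precisely the bound $K \geq 2M+2$, which accommodates the concatenation of two configurations each supported on at most $M+1$ atoms, following the same pattern as \cite{alfonsi2022constrained}[Theorem 1].
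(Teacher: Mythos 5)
Your argument for the second assertion (polygonal connectedness of the set of minimizers) is correct and essentially the only natural way to derive it from Theorem~\ref{thm:monotone}: monotonicity of $\mathcal J\circ\eta$ plus equal endpoint values forces $\mathcal J\circ\eta$ to be constant, so the whole chain stays in the minimizer set.

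The argument for the first assertion (local implies global) has a genuine gap at the segment-reordering step. Your setup is fine up to the point where you know that $\mathcal J\circ\eta$ is non-increasing and affine on each segment, with a strict decrease on at least one segment. The problem is that this strictly decreasing segment need not be the first one, and ``permuting the segments of the chain without leaving $\mathcal O^{C,\Phi,K}_\theta$ or changing the endpoints'' is not a well-defined or justified operation. A polygonal chain is specified by its sequence of intermediate vertices; if you permute the displacement vectors, the new intermediate vertices are different points that have no reason to satisfy the affine moment constraints, the $\theta$-constraint, or the nonnegativity constraints $\omega_k\ge 0$. Nothing in the bound $K\ge 2M+2$ by itself licenses such a rearrangement, and the informal appeal to a ``combinatorial'' argument does not fill the hole. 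In fact, the difficulty you are trying to sidestep is exactly the substantive one: a monotone path from the local minimizer $p_0$ to a strictly better point may well be constant near $t=0$ (for instance, the construction can begin with Tchakaloff-type weight moves that preserve $\mathcal J$), in which case the continuity argument near $p_0$ yields no contradiction. One needs to work directly with the structure of the path construction — e.g.\ using the fact that, at a local minimizer, the weight-space Tchakaloff directions at fixed $\Psi$ cannot change $\mathcal J$, so the reduction phase is a genuine plateau, and then showing that the ensuing strictly decreasing interpolation phase can be ``pulled back'' to a feasible decreasing perturbation of $p_0$ itself. That step requires using the $K\ge 2M+2$ slack in a concrete way (room for the atoms of a sparse global minimizer alongside those of a Tchakaloff-reduced $p_0$), and it is precisely the part your proposal does not supply.
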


\section{Some convergence results}\label{sec:convergence}

The aim of this section is to gather some convergence results on the MCAL approximation towards solutions of the exact problem.

\subsection{Convergence of the MCAL functional to the exact Lieb functional}\label{sec:convergenceLieb}
The aim of this section is to prove that, under some appropriate assumptions, the MCAL functional converges to the exact Lieb functional as the number of moment constraints go to infinity. Let us denote here by $\mathcal D(\mathbb{R}^3)$ the set of $\mathcal C^\infty$ real-valued functions defined on $\mathbb{R}^3$ with compact support.

\medskip

More precisely, let $\rho\in \mathcal I_N$ such that there exists a function $\theta : \mathbb{R}_+ \to \mathbb{R}_+$ satisfying assumption (A$\theta$). Let $C_\rho:= \int_{\mathbb{R}^3} \theta(|x|) \rho(x)\,dx$ and let $C>C_\rho$.

\medskip

For all $n\in \mathbb{N}^*$, let $M_n \in \mathbb{N}^*$ and $\Phi^n:=(\varphi^n_m)_{1\leq m\leq M_n} \subset \mathcal F$ be a sequence of functions belonging to $\mathcal F$ and which satisfies $\un \in {\rm Span}\{\Phi^n\}$ for all $n\in \mathbb{N}^*$ together with the following density conditions: 
\begin{itemize}
    \item [(A$\Phi$)] for all $f\in \mathcal D(\mathbb{R}^3)$, 
    $$
    \mathop{\inf}_{g_n \in {\rm Span} \{\Phi^n\}} \|f -g_n\|_{\mathcal F} \mathop{\longrightarrow}_{n\to +\infty} 0 . 
    $$
\end{itemize}

Then, we have the following useful lemma that we will use in the sequel. 

\begin{lemma}\label{lem:useful}
    Let $(\widetilde{\rho}_n)_{n\in \mathbb{N}^*} \subset \mathcal I_N$ such that $\mathop{\sup}_{n\in \mathbb{N}^*} \|\sqrt{\widetilde{\rho}_n}\|_{H^1(\mathbb{R}^3)} < +\infty$ and such that for all $n\in \mathbb{N}^*$, 
    $$
   \forall g_n \in {\rm Span}\{\Phi^n\}, \quad \int_{\mathbb{R}^3} \widetilde{\rho}_n g_n = \int_{\mathbb{R}^3} \rho g_n.
    $$
Then, $(\widetilde{\rho}_n)_{n\in \mathbb{N}^*}$ converges in the sense of distributions to $\rho$ as $n$ goes to infinity. 
\end{lemma}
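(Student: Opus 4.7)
The plan is to test $\widetilde{\rho}_n-\rho$ against an arbitrary $f\in\mathcal D(\mathbb{R}^3)$ and use assumption (A$\Phi$) to pick $g_n\in\mathrm{Span}\{\Phi^n\}$ with $\|f-g_n\|_{\mathcal F}\to 0$. The moment hypothesis $\int\widetilde{\rho}_n g_n=\int\rho g_n$ then produces the cancellation
\begin{equation*}
\int_{\mathbb{R}^3}(\widetilde{\rho}_n-\rho)f=\int_{\mathbb{R}^3}(\widetilde{\rho}_n-\rho)(f-g_n),
\end{equation*}
and reduces the task to pairing $\widetilde{\rho}_n-\rho$ against something small in $\mathcal F=L^{3/2}(\mathbb{R}^3)+L^\infty(\mathbb{R}^3)$.

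To exploit this pairing, I would split $f-g_n=h_n+k_n$ with $h_n\in L^{3/2}(\mathbb{R}^3)$, $k_n\in L^\infty(\mathbb{R}^3)$ chosen so that $\|h_n\|_{L^{3/2}}+\|k_n\|_{L^\infty}\leq \|f-g_n\|_{\mathcal F}+1/n$; both norms then tend to $0$. Hölder's inequality gives
\begin{equation*}
\left|\int_{\mathbb{R}^3}(\widetilde{\rho}_n-\rho)(f-g_n)\right|\leq \|\widetilde{\rho}_n-\rho\|_{L^3}\|h_n\|_{L^{3/2}}+\|\widetilde{\rho}_n-\rho\|_{L^1}\|k_n\|_{L^\infty}.
\end{equation*}
Since $\widetilde{\rho}_n,\rho\in\mathcal I_N$, the $L^1$ factor is at most $2N$. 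For the $L^3$ factor, the Sobolev embedding $H^1(\mathbb{R}^3)\hookrightarrow L^6(\mathbb{R}^3)$ yields $\|\widetilde{\rho}_n\|_{L^3}=\|\sqrt{\widetilde{\rho}_n}\|_{L^6}^2\leq C\|\sqrt{\widetilde{\rho}_n}\|_{H^1}^2$, which is uniformly bounded in $n$ by hypothesis, and similarly $\rho\in L^3$ because $\sqrt{\rho}\in H^1$. Both terms therefore vanish as $n\to\infty$, which is the desired distributional convergence.

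The proof is essentially a density-plus-duality exercise and no compactness on $(\widetilde{\rho}_n)$ is needed. The only point that requires care is aligning the two pieces of the $\mathcal F$-norm with dual Lebesgue spaces in which $\widetilde{\rho}_n-\rho$ admits a uniform bound; this is precisely where the uniform $H^1$ bound on $\sqrt{\widetilde{\rho}_n}$ enters essentially, as it supplies the uniform $L^3$ control needed to absorb the $L^{3/2}$ piece of the approximation error.
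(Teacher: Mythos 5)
Your argument is correct and follows the same route as the paper: test against $f\in\mathcal D(\mathbb{R}^3)$, use assumption (A$\Phi$) and the moment identity to reduce to $\int(\widetilde{\rho}_n-\rho)(f-g_n)$, then bound this by $\|f-g_n\|_{\mathcal F}$ times a quantity controlled uniformly by the $H^1$ bounds on $\sqrt{\widetilde{\rho}_n}$ and $\sqrt{\rho}$. The only difference is that you spell out the $L^{3/2}+L^\infty$ splitting, the H\"older pairing, and the Sobolev embedding $H^1(\mathbb{R}^3)\hookrightarrow L^6(\mathbb{R}^3)$ giving the uniform $L^3$ and $L^1$ bounds, whereas the paper compresses all of this into a single inequality with a generic constant $C$; your version is simply the unfolded form of the same estimate.
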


\begin{proof}
The proof uses the same lines as the proof of~\cite{garrigue2022building}[Theorem~3.2]. We rewrite it here for the sake of completeness. Let $f\in \mathcal D(\mathbb{R}^3)$ and let $(f_n)_{n\in\mathbb{N}^*}$ be a sequence of functions such that $f_n\in {\rm Span}\{\Phi^n\}$ for all $n\in \mathbb{N}^*$ and $\displaystyle \|f-f_n\|_{\mathcal F} \mathop{\longrightarrow}_{n\to +\infty} 0$. Then, it holds that
\begin{align*}
    \left| \int_{\mathbb{R}^3} f (\widetilde{\rho}_n - \rho) \right| &=   \left| \int_{\mathbb{R}^3} (f-f_n) (\widetilde{\rho}_n - \rho) \right| \\
    & \leq C \left( \|\sqrt{\rho}\|_{H^1(\mathbb{R}^3)}^2 + \mathop{\sup}_{n\in \mathbb{N}^*} \|\sqrt{\widetilde{\rho}_n}\|_{H^1(\mathbb{R}^3)}^2\right) \|f-f_n\|_{\mathcal F},\\
    & \mathop{\longrightarrow}_{n\to +\infty} 0.\\
\end{align*}
Hence the desired result.
\end{proof}




\begin{remark}
One example of sequence $(\Phi_n)_{n\in \mathbb{N}^*}$ satisfying (A$\Phi$) is the following: for all $n\in \mathbb{N}^*$, let $\Omega_n:= (-n,n)^3$ and let $\mathcal T_n:=\{ T_1^n, \ldots, T_{N_n}\}$ (with $N_n:=\# \mathcal T_n$) be a regular conforming triangular mesh of $\Omega_n$, the elements of which have a maximal diameter size $h_n$ such that $h_n \leq \frac{1}{n}$. Let $M_n:= \# \mathcal T_n +1 = N_n+1$. Denoting by 
$\varphi_m^n:= \un|_{T_m^n}$ for $1\leq m \leq M_n-1$ and by $\varphi_{M_n}^n:= \un|_{\Omega_n^c}$ and by $\Phi^n = (\varphi_m^n)_{1\leq m \leq M_n}$ for all $n\in \mathbb{N}^*$, one can easily check that the sequence $(\Phi^n)_{n\in \mathbb{N}^*}$ satisfies (A$\Phi$).
\end{remark}

We then have the following convergence result, which may be seen as an extension of~\cite{garrigue2022building}[Theorem~3.2] to more general set of moment functions, up to the additional tightness assumption (A$\theta$), the proof of which is postponed to Section~\ref{sec:convergencethm}.

\begin{theorem}
\label{thm:convergence}
    Let $\rho \in \mathcal I_N$ such that there exists a function $\theta: \mathbb{R}_+ \to \mathbb{R}_+$ satisfying assumption (A$\theta$). Let $C_\rho:= \int_{\mathbb{R}^3} \theta(|x|)\rho(x)\,dx$ and $C\geq C_\rho$. For all $n\in \mathbb{N}^*$, let $M_n \in \mathbb{N}^*$ and $\Phi^n:= (\varphi_m^n)_{1\leq m \leq M_n} \subset \mathcal F$ such that assumption (A$\Phi$) holds. We assume in addition that there exists $n_0\in \mathbb{N}^*$ such that $\un \in {\rm Span}\{\Phi^n\}$ for all $n\geq n_0$. 
    Then, for all $n\geq n_0$, there exists at least one sparse minimizer to (\ref{min:LiebMomentA}) with $\Phi = \Phi^n$ in the sense of Theorem~\ref{thm:existence}. Besides, it holds that
  \begin{equation}
      \mathop{\lim}_{n\to+\infty}F_{L,\theta}^{\Phi^n,C}[\rho]=F_{L}[\rho].
  \end{equation}
    Moreover, from any sequence $(\Gamma_n)_{n\geq n_0}$ such that $\Gamma_n$ is a minimizer for \eqref{min:LiebMomentA} with $\Phi = \Phi^n$, one can extract a subsequence which strongly converges in $\mathfrak{S}_{1,1}(\mathcal H_0^N)$ to $\Gamma_\infty$, where $\Gamma_\infty$ is a minimizer of \eqref{min:lieb}.
\end{theorem}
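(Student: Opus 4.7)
The plan is to prove $F_{L,\theta}^{\Phi^n,C}[\rho]\to F_L[\rho]$ by matching $\limsup$ and $\liminf$ inequalities, from which the existence statement and the convergence of minimizers will follow simultaneously. The existence of a sparse minimizer $\Gamma_n$ for each $n\geq n_0$ is an immediate consequence of Theorem~\ref{thm:existence}, since by assumption $\un\in{\rm Span}\{\Phi^n\}$ and (A$\theta$) holds.

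For the upper bound, let $\Gamma^*$ be any minimizer of the exact Lieb problem $F_L[\rho]$. Since $\rho_{\Gamma^*}=\rho$ it trivially satisfies every moment constraint against $\Phi^n$, and the assumption $C\geq C_\rho$ ensures $\tr(\Theta\Gamma^*)\leq C$. Hence $\Gamma^*$ is admissible for each MCAL problem and $F_{L,\theta}^{\Phi^n,C}[\rho]\leq\tr(H_N\Gamma^*)=F_L[\rho]$, giving $\limsup_n F_{L,\theta}^{\Phi^n,C}[\rho]\leq F_L[\rho]$.

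For the lower bound, let $\Gamma_n$ be a minimizer of $F_{L,\theta}^{\Phi^n,C}[\rho]$. Testing the moment constraint with $\un$ yields $\tr(\Gamma_n)=1$, which combined with the coercivity $H_N+D\geq\varepsilon(-\Delta+\mathrm{Id})$ and the uniform bound $\tr(H_N\Gamma_n)\leq F_L[\rho]$ provides a uniform bound on $\tr((H_N+D)\Gamma_n)$. Arguing as in the proof of Proposition~\ref{prop:lsc}, one extracts a subsequence and finds $\Gamma_\infty\in\mathfrak{S}_1^+(\mathcal H_0^N)$ such that $(H_N+D)^{1/2}\Gamma_n(H_N+D)^{1/2}$ converges weakly-$*$ in $\mathfrak{S}_1$ to $(H_N+D)^{1/2}\Gamma_\infty(H_N+D)^{1/2}$. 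The crucial step is to identify $\rho_{\Gamma_\infty}=\rho$: the Hoffmann--Ostenhof inequality together with the kinetic energy bound gives $\sup_n\|\sqrt{\rho_{\Gamma_n}}\|_{H^1(\R^3)}<+\infty$, and since $\rho_{\Gamma_n}$ matches the moments of $\rho$ against ${\rm Span}\{\Phi^n\}$, Lemma~\ref{lem:useful} applied to $\widetilde{\rho}_n=\rho_{\Gamma_n}$ yields $\rho_{\Gamma_n}\to\rho$ in $\mathcal D'(\R^3)$; independently, for each $f\in\mathcal D(\R^3)$ the operator $f(x_1)(H_N+D)^{-1}$ is compact, so the weak-$*$ convergence gives $\int f\rho_{\Gamma_n}\to\int f\rho_{\Gamma_\infty}$. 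Consequently $\rho_{\Gamma_\infty}=\rho$, so $\Gamma_\infty$ is admissible for $F_L[\rho]$, and standard lower semicontinuity of the trace under positivity gives $F_L[\rho]\leq\tr(H_N\Gamma_\infty)\leq\liminf_n F_{L,\theta}^{\Phi^n,C}[\rho]$.

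Combining the two bounds proves the functional convergence and that $\Gamma_\infty$ is a minimizer of the exact Lieb problem. For strong convergence in $\mathfrak{S}_{1,1}$, one observes that $\tr((H_N+D)\Gamma_n)\to\tr((H_N+D)\Gamma_\infty)$; together with weak-$*$ convergence in $\mathfrak{S}_1$, this convergence of the trace norm upgrades (by a Radon--Riesz type argument) to strong trace-class convergence of the conjugated operators, which is by definition strong convergence of $\Gamma_n$ in $\mathfrak{S}_{1,1}$. The main obstacle is the density identification step: assumption (A$\theta$) is used to guarantee tightness and rule out loss of mass at infinity, while the density condition (A$\Phi$) is precisely what converts the finite-dimensional moment matching against $\Phi^n$ into an exact density constraint in the limit; reconciling these two ingredients through Lemma~\ref{lem:useful} is the delicate point of the proof.
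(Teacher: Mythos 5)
Your overall architecture matches the paper's: upper bound by admissibility of the exact Lieb minimizer, lower bound via weak-$*$ compactness of $(H_N+D)^{1/2}\Gamma_n(H_N+D)^{1/2}$, identification of $\rho_{\Gamma_\infty}=\rho$ through Lemma~\ref{lem:useful}, then conclusion of strong $\mathfrak{S}_{1,1}$ convergence from convergence of the quadratic-form trace. However, there is a genuine gap in the identification step. You claim that for $f\in\mathcal D(\R^3)$ the operator $f(x_1)(H_N+D)^{-1}$ is compact on $\mathcal H_0^N$, and you use this to pass to the limit $\int f\rho_{\Gamma_n}\to\int f\rho_{\Gamma_\infty}$. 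This compactness claim is false: $f(x_1)$ is compactly supported only in the first variable, so $f(x_1)(-\Delta_{\R^{3N}}+1)^{-1}$ acts essentially as the identity in the transverse variables $x_2,\dots,x_N$; a direct-integral decomposition in the Fourier variables dual to $(x_2,\dots,x_N)$ shows its fibers are compact but the direct integral is not (already for $N=2$, it contains an infinite-rank piece living on $\phi\otimes L^2$). Consequently, weak-$*$ convergence of $(H_N+D)^{1/2}\Gamma_n(H_N+D)^{1/2}$ does \emph{not} by itself yield convergence of $\int f\rho_{\Gamma_n}$.

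The correct argument, as in the paper (and as in the proof of Theorem~\ref{thm:existence} to which you otherwise appeal), splits $\sum_{j=1}^N f(x_j)$ as a part supported in a large ball $B_R\subset\R^{3N}$ (for which $M_P(H_N+D)^{-1}$ \emph{is} compact, $P$ being bounded with compact support on $\R^{3N}$) plus a tail estimated by the tightness bound
$\int_{B_R^c}\gamma_n\le C/\theta(R)$, which comes precisely from the constraint $\tr(\Theta\Gamma_n)\le C$ furnished by (A$\theta$). You mention (A$\theta$) and tightness only in a concluding aside, but you never actually use them at the place where they are indispensable, substituting instead the false one-variable compactness. Once you replace the compactness claim by this tightness-plus-compact-multiplication argument, the rest of your proof (Hoffmann--Ostenhof to feed Lemma~\ref{lem:useful}, the two-sided inequality, the Radon--Riesz upgrade to strong $\mathfrak{S}_{1,1}$ convergence, which the paper attributes to the geometric-convergence compactness of~\cite{lewin2011geometric}) is sound and coincides with the paper's.
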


Like in Section~\ref{sec:sparsity}, we can state a similar result with less technical assumptions in the case when we consider operators acting on functions defined on a bounded subdomain $\Omega \subset \mathbb{R}^3$ with Dirichlet boundary conditions. We state such a result here, using the same notation as in Section~\ref{sec:sparsity}, since it follows exactly the same lines of proof as Theorem~\ref{thm:convergence}. 
To this aim, for all $\rho \in \mathcal I_N(\Omega)$, we introduce the exact Lieb functional defined on the domain $\Omega$ as
\begin{equation}
    \label{min:liebbounded}
    F_{L,\Omega}[\rho]:=\inf_{\substack{\Gamma \in \mathfrak{S}_1^+(\mathcal H_0^N(\Omega))\\ \rho_{\Gamma}=\rho}}\tr(H_{N,\Omega}\Gamma).
\end{equation}
Let us point out here that there exists also $\epsilon_\Omega, D_\Omega >0$ such that 
$$
H_{N,\Omega} + D_\Omega \geq \varepsilon_\Omega(-\Delta_\Omega + 1)
$$
where $-\Delta_\Omega$ refers here to the self-adjoint bounded from below operator on $\mathcal H_0^N(\Omega)$ with domain $\mathcal H_N^2(\Omega)$ (Laplacian with Dirichlet boundary conditions in $\Omega$). We also denote by $\mathfrak{S}_{1,1}\left( \mathcal H_0^N(\Omega)\right)$ the set of operators $\Gamma \in \mathfrak{S}_1^+(\mathcal H_0^N(\Omega))$ such that ${\rm Tr}(-\Delta_\Omega \Gamma) < +\infty$.

\begin{theorem}
\label{thm:convergencebounded}
      Let $\rho \in \mathcal I_N(\Omega)$. For all $n\in \mathbb{N}^*$, let $M_n \in \mathbb{N}^*$ and $\Phi^n:= (\varphi_m^n)_{1\leq m \leq M_n} \subset \mathcal F(\Omega)$ such that for all $f\in \mathcal D(\Omega)$, 
      $$
     \lim_{n\to+\infty} \mathop{\inf}_{g_n\in {\rm Span}\{\Phi^n\}} \|f - g_n\|_{\mathcal F(\Omega)}  = 0. 
     $$
      We assume in addition that there exists $n_0\in \mathbb{N}^*$ such that $\un \in {\rm Span}\{\Phi^n\}$ for all $n\geq n_0$. 
    Then, for all $n\geq n_0$, there exists at least one sparse minimizer to (\ref{min:LiebMomentA}) with $\Phi = \Phi^n$ in the sense of Theorem~\ref{thm:existence}. Besides, it holds that
  \begin{equation}
      \mathop{\lim}_{n\to+\infty}F_{L,\Omega}^{\Phi^n}[\rho]=F_{L,\Omega}[\rho].
  \end{equation}
    Moreover, from any sequence $(\Gamma_n)_{n\geq n_0}$ such that $\Gamma_n$ is a minimizer for \eqref{min:LiebMomentA} with $\Phi = \Phi^n$, one can extract a subsequence which strongly converges in $\mathfrak{S}_{1,1}(\mathcal H_0^N(\Omega))$ to $\Gamma_\infty$, where $\Gamma_\infty$ is a minimizer of \eqref{min:lieb}.
\end{theorem}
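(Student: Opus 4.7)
The plan is to mirror the proof of Theorem~\ref{thm:convergence}, taking advantage of the fact that on the bounded domain $\Omega$ no tightness assumption is needed since $-\Delta_\Omega$ has compact resolvent on $\mathcal H_0^N(\Omega)$. Existence of sparse minimizers for each $n\geq n_0$ is already furnished by Theorem~\ref{thm:existencebounded}, given $\un|_\Omega\in{\rm Span}\{\Phi^n\}$. The easy inequality $F_{L,\Omega}^{\Phi^n}[\rho]\leq F_{L,\Omega}[\rho]$ for all $n\geq n_0$ follows because the exact density constraint implies the moment constraints; hence $\limsup_n F_{L,\Omega}^{\Phi^n}[\rho]\leq F_{L,\Omega}[\rho]$, and it suffices to establish the reverse $\liminf$ estimate together with the asserted compactness.

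Let $(\Gamma_n)_{n\geq n_0}$ be a sequence of minimizers. Since $\un|_\Omega\in{\rm Span}\{\Phi^n\}$, the trace is preserved: $\tr\Gamma_n=N$. From $\tr(H_{N,\Omega}\Gamma_n)=F_{L,\Omega}^{\Phi^n}[\rho]\leq F_{L,\Omega}[\rho]<+\infty$ and the lower bound $H_{N,\Omega}+D_\Omega\geq\varepsilon_\Omega(-\Delta_\Omega+1)$, the sequence is bounded in $\mathfrak{S}_{1,1}(\mathcal H_0^N(\Omega))$. By the standard argument (extracting a subsequence), I can ensure that $(H_{N,\Omega}+D_\Omega)^{1/2}\Gamma_n(H_{N,\Omega}+D_\Omega)^{1/2}$ converges weakly-$*$ in trace class to an operator which can be written as $(H_{N,\Omega}+D_\Omega)^{1/2}\Gamma_\infty(H_{N,\Omega}+D_\Omega)^{1/2}$ for some $\Gamma_\infty\in\mathfrak{S}_{1,1}(\mathcal H_0^N(\Omega))$, with lower semicontinuity $\tr(H_{N,\Omega}\Gamma_\infty)\leq\liminf_n\tr(H_{N,\Omega}\Gamma_n)$.

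The key step is identifying $\rho_{\Gamma_\infty}=\rho$, so that $\Gamma_\infty$ is admissible for (\ref{min:liebbounded}). The Hoffmann--Ostenhof inequality applied on $\Omega$ yields a uniform $H^1_0(\Omega)$ bound on $\sqrt{\rho_{\Gamma_n}}$; by the Rellich--Kondrachov theorem, up to a further extraction, $\rho_{\Gamma_n}$ converges strongly in $L^1(\Omega)$ to some density $\widetilde\rho$, which must coincide with $\rho_{\Gamma_\infty}$ by passing to the limit in the integration-against-compact-support-test-function identity. To show $\widetilde\rho=\rho$, I would adapt Lemma~\ref{lem:useful} to the bounded domain: for any $f\in\mathcal D(\Omega)$, pick $f_n\in{\rm Span}\{\Phi^n\}$ with $\|f-f_n\|_{\mathcal F(\Omega)}\to0$; then the moment constraint yields $\int_\Omega f_n(\rho_{\Gamma_n}-\rho)=0$ and
\[
\left|\int_\Omega f(\rho_{\Gamma_n}-\rho)\right|=\left|\int_\Omega(f-f_n)(\rho_{\Gamma_n}-\rho)\right|\leq C\bigl(\|\sqrt\rho\|_{H^1_0(\Omega)}^2+\sup_n\|\sqrt{\rho_{\Gamma_n}}\|_{H^1_0(\Omega)}^2\bigr)\|f-f_n\|_{\mathcal F(\Omega)},
\]
which vanishes as $n\to\infty$. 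Hence $\widetilde\rho=\rho$ in $\mathcal D'(\Omega)$, and therefore $\rho_{\Gamma_\infty}=\rho$.

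Combining the lower semicontinuity with admissibility of $\Gamma_\infty$ yields $F_{L,\Omega}[\rho]\leq\tr(H_{N,\Omega}\Gamma_\infty)\leq\liminf_n F_{L,\Omega}^{\Phi^n}[\rho]$, which together with the reverse inequality produces the convergence $F_{L,\Omega}^{\Phi^n}[\rho]\to F_{L,\Omega}[\rho]$ and shows $\Gamma_\infty$ is a minimizer of (\ref{min:liebbounded}). To upgrade the weak convergence in $\mathfrak{S}_{1,1}(\mathcal H_0^N(\Omega))$ to strong convergence, I would use the fact that $\tr\bigl((H_{N,\Omega}+D_\Omega)\Gamma_n\bigr)\to\tr\bigl((H_{N,\Omega}+D_\Omega)\Gamma_\infty\bigr)$ (from the convergence of energies and the trace) together with weak-$*$ convergence in trace class: convergence of the trace norm combined with weak convergence in $\mathfrak{S}_1$ implies strong convergence in $\mathfrak{S}_1$. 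The main delicate point is thus the density identification, which is handled cleanly by the adapted Lemma~\ref{lem:useful}; compactness of the embedding $H^1_0(\Omega)\hookrightarrow L^2(\Omega)$ replaces the role played by (A$\theta$) in Theorem~\ref{thm:convergence}, and no modification to the functional-analytic scheme is otherwise required.
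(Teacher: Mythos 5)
Your proposal is correct and follows essentially the same route the paper intends (the paper states that Theorem~\ref{thm:convergencebounded} ``follows exactly the same lines as the proof of Theorem~\ref{thm:convergence}''), namely: existence via Theorem~\ref{thm:existencebounded}, the easy upper bound $F_{L,\Omega}^{\Phi^n}[\rho]\leq F_{L,\Omega}[\rho]$, weak-$*$ compactness of $(H_{N,\Omega}+D_\Omega)^{1/2}\Gamma_n(H_{N,\Omega}+D_\Omega)^{1/2}$ in $\mathfrak{S}_1$, lower semicontinuity of the energy, identification of the limit density via the adapted Lemma~\ref{lem:useful}, and finally upgrading weak-$*$ to strong $\mathfrak{S}_{1,1}$-convergence. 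The one place where you take a genuinely different (though equally valid) route is the final upgrade to strong convergence: the paper's proof of Theorem~\ref{thm:convergence} invokes Lewin's geometric-convergence machinery in the Fock space of bounded particle number, whereas you use the Gr\"umm/Dell'Antonio--type dominated convergence theorem in $\mathfrak{S}_1$ (positive operators, weak operator convergence, and convergence of traces imply trace-norm convergence). Your choice is arguably more elementary and particularly natural on the bounded domain, where $\tr\Gamma_n=N=\tr\Gamma_\infty$ is enforced by $\un\in\mathrm{Span}\{\Phi^n\}$ and the energy convergence is already established, so $\tr\bigl((H_{N,\Omega}+D_\Omega)\Gamma_n\bigr)\to\tr\bigl((H_{N,\Omega}+D_\Omega)\Gamma_\infty\bigr)$ follows directly. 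The Hoffmann--Ostenhof plus Rellich--Kondrachov argument you insert to get strong $L^1(\Omega)$ convergence of densities is also a natural replacement for the tightness step (A$\theta$) used in the unbounded case and is consistent with the paper's intent.
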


 \subsection{Convergence rate of the ground state energy in the bounded domain case}\label{sec:convergenceGround}

In this section, we restrict ourselves to the case of a bounded subdomain $\Omega \subset \mathbb{R}^3$. Let $M\in \mathbb{N}^*$, $\Phi:=(\varphi_m)_{1\leq m \leq M} \subset \mathcal F(\Omega)$ be a set of moment functions. For all $v\in \mathcal F(\Omega)$, let us introduce the ground state energy associated to the potential $v$: 
\[E[v]:=\inf_{\Psi\in\mathcal H_1^N(\Omega)}\langle\Psi | H_{N,\Omega}^v|\Psi \rangle=\inf_{\Gamma\in \mathfrak{S}_1^+(\mathcal H_0^N(\Omega))}\tr(H_{N,\Omega}^v\Gamma),\]
where $$
H_{N,\Omega}^v:= H_{N,\Omega} -  \sum_{i=1}^N v(x_i).
$$
Rewriting the minimization over $\Gamma$ as an external minimization over $\rho\in \mathcal I_N(\Omega)$ and then as an internal one over all $\Gamma$ such that ${\rm Tr}\; \Gamma = \rho$, it can easily be checked that
\begin{equation}\label{eq:defE}
E[v] = \mathop{\inf}_{\rho \in \mathcal I_N(\Omega)} \left\{ F_L[\rho] - \int_{\Omega} v\,d\rho \right\}.
\end{equation}
Let us also define by
\begin{equation}
E^\Phi[v]:= \mathop{\inf}_{\rho \in \mathcal I_N(\Omega)} \left\{ F^\Phi_L[\rho] - \int_{\Omega} v\,d\rho \right\}.
\end{equation}

Similarly, let us point out that, if $v\in {\rm Span}\{\Phi\}$, rewriting the minimization over $\Gamma$ as an external minimization over $\rho\in \mathcal I_N(\Omega)$ and then as an internal one over all $\Gamma\in \mathfrak{S}_1^+(\mathcal H_0^N(\Omega), \Phi, \rho)$, it holds that
$$
E[v]= E^\Phi[v], \quad \forall v \in {\rm Span}\{\Phi\}.
$$

We then prove the following approximation result. 

\begin{proposition}\label{prop:error}
Let us assume that $v\in L^\infty(\Omega)$ and that $\Phi = (\varphi_m)_{1\leq m \leq M} \subset L^\infty(\Omega)$. Then, it holds that
\begin{equation}
|E[v] - E^\Phi[v]| \leq 2 N \mathop{\min}_{w\in {\rm Span}\{\Phi\}} \|v - w\|_{L^\infty(\Omega)}.
\end{equation}
\end{proposition}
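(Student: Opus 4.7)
The strategy is the classical Lipschitz-plus-interpolation argument: show that both maps $v \mapsto E[v]$ and $v \mapsto E^{\Phi}[v]$ are Lipschitz in the $L^\infty(\Omega)$ norm with constant at most $N$, then exploit the fact (already stated just before the proposition) that $E[w]=E^{\Phi}[w]$ whenever $w \in \mathrm{Span}\{\Phi\}$, and conclude via a triangle inequality in which the best approximant $w$ of $v$ in $\mathrm{Span}\{\Phi\}$ plays the role of pivot.

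First I would establish the Lipschitz estimate. Both $E[v]$ and $E^{\Phi}[v]$ are by definition infima over $\rho \in \mathcal I_N(\Omega)$ of expressions of the form $F[\rho] - \int_\Omega v\,d\rho$, where $F$ stands for $F_L[\rho]$ or $F_L^\Phi[\rho]$ respectively and does not depend on $v$. For any two potentials $v,w\in L^\infty(\Omega)$ and any admissible $\rho$, the constraint $\int_\Omega \rho = N$ together with $\rho\ge 0$ yields
\begin{equation*}
\left| \int_\Omega (v-w)\,d\rho \right| \leq \|v-w\|_{L^\infty(\Omega)} \int_\Omega \rho = N\,\|v-w\|_{L^\infty(\Omega)}.
\end{equation*}
Since the infimum of a family of functions that are all $N\|v-w\|_{L^\infty(\Omega)}$-close is itself that close, one obtains
\begin{equation*}
|E[v]-E[w]| \leq N\,\|v-w\|_{L^\infty(\Omega)}, \qquad |E^{\Phi}[v]-E^{\Phi}[w]| \leq N\,\|v-w\|_{L^\infty(\Omega)}.
\end{equation*}

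Next I would pick any $w\in\mathrm{Span}\{\Phi\}$. The identification $E[w]=E^{\Phi}[w]$, already justified in the paragraph preceding the statement by interchanging the order of minimization in $(\rho,\Gamma)$ (the moment constraints force $\int_\Omega w\,\rho_\Gamma = \int_\Omega w\,\rho$ for $w\in\mathrm{Span}\{\Phi\}$, so the inner problem coincides with the exact Lieb minimization), allows one to eliminate the middle term in
\begin{equation*}
|E[v]-E^{\Phi}[v]| \leq |E[v]-E[w]| + |E[w]-E^{\Phi}[w]| + |E^{\Phi}[w]-E^{\Phi}[v]|.
\end{equation*}
Combining with the two Lipschitz bounds gives $|E[v]-E^{\Phi}[v]| \leq 2N\,\|v-w\|_{L^\infty(\Omega)}$, and minimizing over $w\in\mathrm{Span}\{\Phi\}$ yields the claim.

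There is essentially no hard step: the only point requiring any care is verifying that the identity $E[w]=E^{\Phi}[w]$ on $\mathrm{Span}\{\Phi\}$ is used correctly, and that the uniform $L^1$ mass $N$ of every admissible $\rho$ is what upgrades a pointwise perturbation of $v$ into a uniform perturbation of the linear term. The factor $2$ in the constant is unavoidable in this triangle-inequality approach, reflecting the fact that the perturbation is controlled both at the exact and at the moment-constrained level.
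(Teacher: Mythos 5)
Your proof is correct and follows essentially the same strategy as the paper: establish that $E$ and $E^\Phi$ are $N$-Lipschitz in $\|\cdot\|_{L^\infty(\Omega)}$, use the identity $E[w]=E^{\Phi}[w]$ for $w\in\mathrm{Span}\{\Phi\}$ as a pivot in a triangle inequality, then optimize over $w$. The only difference is cosmetic: the paper obtains the Lipschitz bound by a direct calculation with $\varepsilon$-minimizers, whereas you invoke the general fact that the infimum is $1$-Lipschitz with respect to the sup-norm on the family being minimized; this is slightly cleaner but amounts to the same thing.
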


\begin{proof}
Let $\displaystyle v^\Phi = \mathop{\rm argmin}_{w\in {\rm Span}\{\Phi\}} \|v -w\|_{L^\infty(\Omega)}$. Let $\varepsilon>0$ arbitrarily small. Let $\rho$, $\rho^\Phi$, $\widetilde{\rho}^\Phi$ and  $\overline{\rho}^\Phi$ be $\varepsilon$-minimizers of $E[v]$, $E[v^\Phi]$, $E^\Phi[v]$ and $E^\Phi[v^\Phi]$ respectively. It then holds that
\begin{align*}
E[v^\Phi] & \leq F_L[\rho^\Phi] - \int_{\Omega} v^\Phi\,d\rho^\Phi \\
& \leq E[v_\Phi] + \varepsilon \\
& \leq F_L[\rho] - \int_{\Omega} v^\Phi\,d\rho +\varepsilon \\
& = F_L[\rho] - \int_{\Omega} v\,d\rho + \int_\Omega (v^\Phi -v)\,d\rho + \varepsilon \\
& \leq E[v] + \int_\Omega (v^\Phi -v)\,d\rho + 2\varepsilon.
\end{align*}
Using similar calculations, we obtain that 
$$
E[v] \leq E[v^\Phi] + \int_\Omega (v- v^\Phi)\,d\rho^\Phi + 2 \varepsilon.
$$
As a consequence, we obtain that
$$
|E[v] - E[v^\Phi]| \leq \max\left( \int_\Omega |v-v^\Phi|\,d\rho, \int_\Omega |v-v^\Phi|\,d\rho^\Phi \right) + 2 \varepsilon \leq N \|v-v^\Phi\|_{L^\infty(\Omega)} + 2\varepsilon.
$$
Since $\varepsilon$ can be chosen arbitrarily small, it actually holds that 
\begin{equation}\label{eq:first}
|E[v] - E[v^\Phi]| \leq  N \|v-v^\Phi\|_{L^\infty(\Omega)}.
\end{equation}
Using similar arguments, we also obtain that 
\begin{equation}\label{eq:second}
|E^\Phi[v] - E^\Phi[v^\Phi]| \leq N \|v-v^\Phi\|_{L^\infty(\Omega)}.
\end{equation}
Collecting (\ref{eq:first}) and (\ref{eq:second}) and using the fact that $E[v^\Phi]= E^\Phi[v^\Phi]$ yields the desired result.
\end{proof}

Proposition~\ref{prop:error} then enables to quantify the rate of convergence of $|E[v] - E^{\Phi^n}[v]|$ as $n$ goes to infinity for some particular sequences of moment functions $(\Phi^n)_{n\in\mathbb{N}}$ provided that $v$ is regular enough. As an illustration, we analyze here the rate of convergence of a numerical method inspired from the external dual charge approach recently proposed in~\cite{lelotte2022external}. 

\begin{corollary}\label{cor:cor1}
Let $l\geq 0$ and $\Omega$ be a bounded regular subdomain of $\mathbb{R}^3$. Let $\mu \in H^{l+1}(\Omega)$ be an external density of charge and define $v\in H^1_0(\Omega) \cap H^{l+3}(\Omega)$ as the unique solution to 
$$
\left\{
\begin{array}{ll}
-\Delta v = \mu &  \quad \mbox{in }\Omega,\\
v=0 & \quad \mbox{on }\partial \Omega.\\
\end{array}
\right.
$$
 Let $(\mathcal T_h)_{h>0}$ be a sequence of triangular regular meshes of $\Omega$ such that
 $$
 h:= \max_{K \in \mathcal T_h} {\rm diam}(K).
 $$
Let $k\in \mathbb{N}$ and $P_h^k\subset L^\infty(\Omega)$ be the subspace of continuous $\mathbb{P}_k$ finite element functions associated to the mesh $\mathcal T_h$. We denote by $V_{h,k}$ the subspace of $H^1_0(\Omega) \cap H^2(\Omega)$ containing all functions $v_{h,k} \in H^1_0(\Omega) \cap H^2(\Omega)$ solution to
$$
\left\{
\begin{array}{ll}
-\Delta v_{h,k} = \mu_{h,k} &  \quad \mbox{in }\Omega,\\
v_{h,k}=0 & \quad \mbox{on }\partial \Omega,\\
\end{array}
\right.
$$
for some $\mu_{h,k}\in P_h^k$. Let $\Phi_{h,k}$ be a basis of $V_{h,k}$. Then, asuming that $l\leq k$, there exists a constant $C>0$ such that for all $h>0$, 
$$
\boxed{|E[v]- E^{\Phi_{h,k}}[v]| \leq C N h^{l+1} \|v\|_{H^{l+3}(\Omega)}.}
$$
\end{corollary}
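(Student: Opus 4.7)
The plan is to apply Proposition~\ref{prop:error} and then produce an explicit element $v_{h,k}\in V_{h,k} = {\rm Span}\{\Phi_{h,k}\}$ which is a good $L^\infty$-approximation of $v$. Since the elements of $V_{h,k}$ are by construction characterized by their $\mathbb P_k$ Laplacians, the strategy is to approximate the source term $\mu$ rather than $v$ itself, and then to transfer the approximation from $\mu$ to $v$ by elliptic regularity.

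First I would introduce $\mu_{h,k}\in P_h^k$ as the $L^2(\Omega)$-orthogonal projection of $\mu$ onto $P_h^k$. Since $\mu\in H^{l+1}(\Omega)$ with $l\leq k$, standard finite element interpolation/approximation theory yields
\begin{equation*}
\|\mu - \mu_{h,k}\|_{L^2(\Omega)} \leq C\, h^{l+1}\,\|\mu\|_{H^{l+1}(\Omega)}.
\end{equation*}
I would then define $v_{h,k}\in V_{h,k}$ as the unique solution of $-\Delta v_{h,k} = \mu_{h,k}$ in $\Omega$ with $v_{h,k}=0$ on $\partial\Omega$, so that $v - v_{h,k}$ solves the Poisson problem with right-hand side $\mu - \mu_{h,k}$ and homogeneous Dirichlet data. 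Because $\Omega$ is a bounded regular domain, elliptic $H^2$-regularity gives
\begin{equation*}
\|v - v_{h,k}\|_{H^2(\Omega)} \leq C\,\|\mu - \mu_{h,k}\|_{L^2(\Omega)} \leq C\, h^{l+1}\,\|\mu\|_{H^{l+1}(\Omega)}.
\end{equation*}

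Next I would use the Sobolev embedding $H^2(\Omega)\hookrightarrow L^\infty(\Omega)$, valid in dimension $3$ since $2 > 3/2$, to deduce
\begin{equation*}
\|v - v_{h,k}\|_{L^\infty(\Omega)} \leq C\, h^{l+1}\,\|\mu\|_{H^{l+1}(\Omega)} \leq C\, h^{l+1}\,\|v\|_{H^{l+3}(\Omega)},
\end{equation*}
where the last inequality follows from $\mu = -\Delta v$ and $\|\Delta v\|_{H^{l+1}(\Omega)}\leq \|v\|_{H^{l+3}(\Omega)}$. Combining this with Proposition~\ref{prop:error} applied to the family $\Phi_{h,k}$ (noting that $v_{h,k}\in V_{h,k} = {\rm Span}\{\Phi_{h,k}\}$ is an admissible competitor for the minimum over $w$), I obtain
\begin{equation*}
|E[v] - E^{\Phi_{h,k}}[v]| \leq 2N\,\|v - v_{h,k}\|_{L^\infty(\Omega)} \leq C\, N\, h^{l+1}\,\|v\|_{H^{l+3}(\Omega)},
\end{equation*}
which is the announced estimate (up to renaming the constant $C$).

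The only genuinely delicate point is making sure that the hypotheses of Proposition~\ref{prop:error} are met: this requires $v\in L^\infty(\Omega)$ (which follows from $v\in H^{l+3}(\Omega)\hookrightarrow L^\infty(\Omega)$) and $\Phi_{h,k}\subset L^\infty(\Omega)$ (which follows from $V_{h,k}\subset H^2(\Omega)\hookrightarrow L^\infty(\Omega)$). Everything else is a direct chain of standard finite-element and elliptic-regularity estimates, so no substantial obstacle is expected.
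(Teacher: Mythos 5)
Your proof is correct and follows essentially the same route the paper sketches in a single sentence (the embedding $H^2(\Omega)\hookrightarrow L^\infty(\Omega)$ plus standard finite-element interpolation estimates, with elliptic regularity as the implicit bridge); you have simply written out the chain of estimates that the paper leaves to the reader.
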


\begin{proof}
Corollary~\ref{cor:cor1} easily follows for the compact embedding $H^2(\Omega) \hookrightarrow L^\infty(\Omega)$ and standard interpolation error results associated with finite element approximations.
\end{proof}

\begin{remark}
Denoting by $M_{h,k}$ the dimension of $V_{h,k}$, it holds that $M_{h,k} = \mathcal O\left( \frac{k}{h^3} \right)$. As a consequence, the above result implies that the rate of convergence of $E^{\Phi_{h,k}}[v] $ 
to $E[v]$ decays like $\mathcal O\left( \frac{N}{M_{h,k}^{(l+1)/3}}\right)$ where $M_{h,k}$ is the number of moment constraints in the MCAL approximation.

\end{remark}


%
\section{Duality results for the MCAL functional}\label{sec:duality}
%
Let us begin by recalling some classical results about semi-definite programming problems and introduce some notation. 

\subsection{Semi-definite positive programming problems}\label{sec:semidef}

Let $n\in \mathbb{N}^*$. We denote by $\mathcal S^n$ the set of symmetric matrices of $\mathbb{R}^n$. For any $M\in \mathcal S^n$, the notation $M \succcurlyeq 0$ (respectively $M  \succ 0$) is used to mean that $M$ is a semi-definite non-negative (respectively definite positive) matrix. We also denote by $\mathcal S^n_{+}:=\{ M\in \mathcal S^n, \; M \succcurlyeq 0\}$ and by $\mathcal S^n_{+,*}:= \{ M\in \mathcal S^n, \; M \succ 0\}$. For all $M,N\in \mathcal S^n$, we denote by $\langle M, N \rangle = {\rm Tr}(M^T N)$ the Frobenius scalar product between $M$ and $N$. 

\medskip

Let $m \in \mathbb{N}^*$, $C\in \mathcal S^n$, $A: \mathcal S^n \to \mathbb{R}^m$ a linear application and $b\in \mathbb{R}^m$. We consider here the following (primal) semi-definite positive programming problem: 
\begin{equation}\label{eq:primalSDP}
\boxed{P:=\mathop{\inf}_{\begin{array}{c}
X\in \mathcal S^n \\
A(X) = b\\
X \succcurlyeq 0\\
\end{array}
} \langle C, X \rangle. }
\end{equation}
The dual problem associated to (\ref{eq:primalSDP}) then reads as follows:
\begin{equation}\label{eq:dualSDP}
\boxed{D:=\mathop{\sup}_{
\begin{array}{c}
(y ,S) \in \mathbb{R}^m \times \mathcal S^n\\
A^*(y) + S = C \\
S\succcurlyeq 0\\
\end{array}
}
\langle b , y \rangle}
\end{equation}
where $A^*: \mathbb{R}^m \to \mathcal S^n$ is the adjoint of $A$.

We introduce the following sets: 
\begin{align*}
\mathcal A_P& :=\left\{X \in \mathcal S^n, \; A(X) = b, \; X \succcurlyeq 0\right\},\\
\mathcal A^s_P& :=\left\{X \in \mathcal S^n, \; A(X) = b, \; X \succ 0 \right\},\\
\mathcal A_D& :=\left\{(y ,S) \in \mathbb{R}^m \times \mathcal S^n, \;  A^*(y) + S = C, \; S\succcurlyeq 0 \right\},\\
\mathcal A^s_D& :=\left\{ (y ,S) \in \mathbb{R}^m \times \mathcal S^n, \;  A^*(y) + S = C, \; S\succ 0 \right\}.\\
\end{align*}

We also denote by ${\rm Sol}_P$ and ${\rm Sol}_D$ the set of solutions to (\ref{eq:primalSDP}) and (\ref{eq:dualSDP}). Then, we recall the following classical result~\cite{anjos2011handbook,wolkowicz2012handbook}:  

\begin{theorem}\label{th:SDPex}
    \begin{itemize}
\item[(i)] If $\mathcal A_P \times \mathcal A^s_D \neq \emptyset$, ${\rm Sol}_P$ is non-empty and bounded and $P =D$;
\item[(ii)] If $\mathcal A^s_P \times \mathcal A_D \neq \emptyset$ and $A$ surjective, then ${\rm Sol}_D$ is non-empty and bounded and $P =D$;
\item[(iii)] If $\mathcal A^s_P \times \mathcal A^s_D \neq \emptyset$ and $A$ surjective, then ${\rm Sol}_P$ and ${\rm Sol}_D$ are non-empty and bounded and $P =D$.
    \end{itemize}
\end{theorem}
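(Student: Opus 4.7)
The plan is to follow the standard proof of strong duality in finite-dimensional conic programming, decomposing the argument into (a) weak duality, (b) attainment of the primal (resp.\ dual) optimum via a compactness argument based on the strictly feasible dual (resp.\ primal) point, and (c) the strong duality equality $P=D$ via a conic separation theorem. The main obstacle is (c): the non-polyhedral nature of the cone $\mathcal S^n_+$ prevents a direct Farkas-type argument and requires a Slater-type regularity condition, but the strict feasibility hypothesis in each case precisely furnishes that condition.

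I would first establish weak duality. For any $X \in \mathcal A_P$ and $(y,S) \in \mathcal A_D$, a direct computation gives
\[
\langle C, X\rangle - \langle b, y \rangle = \langle A^*(y)+S, X \rangle - \langle y, A(X) \rangle = \langle S, X \rangle \geq 0,
\]
the last inequality following from the classical fact that the Frobenius inner product between two elements of $\mathcal S^n_+$ is non-negative. Hence $D \leq P$ in all three cases.

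For item (i), I would start from a minimizing sequence $(X_k) \subset \mathcal A_P$ of $P$. Fixing $(y^0, S^0) \in \mathcal A^s_D$ with $S^0 \succ 0$, there exists $\lambda>0$ such that $S^0 \succcurlyeq \lambda I_n$, and the identity
\[
\langle S^0, X_k \rangle = \langle C, X_k \rangle - \langle y^0, b \rangle
\]
bounds $\mathrm{tr}(X_k)$ from above independently of $k$, hence bounds $\|X_k\|$ since $X_k \succcurlyeq 0$. Extracting a convergent subsequence and using continuity of $A$ together with closedness of $\mathcal S^n_+$ yields a minimizer $X^* \in \mathcal A_P$; the same trace bound applied to every element of $\mathrm{Sol}_P$ shows its boundedness. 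For the equality $P=D$, I would apply a convex separation argument: the convex set $\mathcal C := \{(A(X) - b,\, \langle C, X\rangle - \alpha) : X \in \mathcal S^n_+,\, \alpha \geq 0\}$ does not contain $(0, P - \varepsilon)$ for any $\varepsilon > 0$, so a separating hyperplane yields a multiplier $(y, \mu)$ with $\mu \leq 0$; strict dual feasibility rules out the degenerate case $\mu=0$, and after normalization one recovers a dual admissible pair of value at least $P-\varepsilon$, giving $D \geq P-\varepsilon$.

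Item (ii) then follows by the symmetric argument applied to the dual side: primal strict feasibility gives a bound on the dual slack matrix $S$ through $\langle S, X^0 \rangle$ with $X^0 \succ 0$, and surjectivity of $A$ ensures that $A^*$ is injective, so boundedness of $S$ propagates to boundedness of $y$ via the equation $A^*(y) = C-S$. Item (iii) is obtained by combining the two previous arguments. Finally, I would note that in the present finite-dimensional setting, with both $\mathcal S^n$ and $\mathbb{R}^m$ being finite-dimensional Euclidean spaces, the separation step simplifies considerably and is covered verbatim by the standard references cited in the statement.
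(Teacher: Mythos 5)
The paper does not supply its own proof of Theorem~\ref{th:SDPex}; it quotes it as a classical fact from the handbooks~\cite{anjos2011handbook,wolkowicz2012handbook}. Your weak duality computation, the coercivity argument giving compactness of the primal sublevel sets under dual strict feasibility (and hence attainment and boundedness of ${\rm Sol}_P$ in~(i)), and the symmetric role of $A$ surjective for propagating the bound on $S$ to a bound on $y$ in~(ii), are all correct.

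There is, however, a genuine gap in the strong-duality step of~(i). First, a minor but real sign error: as written, $\mathcal C=\{(A(X)-b,\langle C,X\rangle-\alpha):X\in\mathcal S^n_+,\alpha\ge 0\}$ \emph{does} contain $(0,P-\varepsilon)$ (take $X$ primal optimal and $\alpha=\varepsilon$); you need $+\alpha$, which flips the sign of the normal, giving $\mu\ge 0$ rather than $\mu\le 0$. The more serious problem is the sentence ``strict dual feasibility rules out the degenerate case $\mu=0$.'' Working through the separation, the case $\mu=0$ yields a $y\ne 0$ with $A^*(y)\succcurlyeq 0$ and $\langle y,b\rangle=0$, and existence of a strictly feasible dual point $(y^0,S^0)$ does \emph{not} directly contradict this. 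The standard mechanism that excludes $\mu=0$ in a primal-side separation is \emph{primal} Slater (together with $A$ surjective, so that $A^*(y)=0\Rightarrow y=0$), which is the right ingredient for~(ii) but is precisely what is \emph{not} assumed in~(i). To make the argument you sketch work in~(i), one must first use dual strict feasibility to show that $\mathcal C$ is \emph{closed}: if $X\succcurlyeq 0$, $A(X)=0$ and $\langle C,X\rangle\le 0$, then $\langle C,X\rangle=\langle S^0,X\rangle\ge\lambda\,\mathrm{tr}(X)$ for some $\lambda>0$, forcing $X=0$, so the relevant recession cone is trivial and the linear image of the closed cone is closed. Closedness then upgrades the separation to a \emph{strict} one, $\langle y,z\rangle+\mu t\ge\mu(P-\varepsilon)+\delta$ with $\delta>0$, after which plugging in a primal feasible $\bar X$ yields a contradiction when $\mu=0$. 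This closedness step is exactly what is missing from your sketch; without it, the $\mu=0$ branch is not ruled out and $P=D$ in~(i) is not established. Alternatively, the cleaner and more symmetric route is to carry out the separation on the perturbation set of the \emph{dual} problem in~(i) and of the \emph{primal} in~(ii), so that the Slater hypothesis sits on the same side as the separation each time.
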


\subsection{Dual MCAL problem}

In this section we study the dual
problem in the bounded domain case. We know that the dual variable associated to the density $\rho\in\mathcal I_N(\Omega)$ is a one-body interaction potential of the form 
$W^v(x_1,\ldots,x_N):=\sum_{i=1}^Nv(x_i)$ for a given $v\in \mathcal F(\Omega)$. 

We then consider the following natural dual problem
\begin{equation}
\label{eq:dual}
\boxed{D_{L,\Omega}^{\Phi}[\rho]=\sup_{
\begin{array}{c}
v \in {\rm Span}\{\Phi\},\\
\forall \Psi\in \mathcal H_1^N(\Omega), \;\langle\Psi | H_{N,\Omega}^v|\Psi \rangle \geq 0 \\
\end{array}}\int_\Omega vd\rho,}
\end{equation}
where $$
H_{N,\Omega}^v:= H_{N,\Omega} -  \sum_{i=1}^N v(x_i) = H_{N,\Omega} - W^v.
$$
If we take any $v:= \sum_{m=1}^M \alpha_m \varphi_m \in {\rm Span}\{\Phi\}$ satisfying the above constraints and any $\Gamma\in \mathfrak{S}_1^+(\mathcal H_0^N(\Omega),\Phi,\rho)$  then we have
\[
\begin{split}
 \tr(H_{N,\Omega}\Gamma)\geq \tr(W^v\Gamma)&=\int_\Omega vd\rho_{\Gamma}=\int_\Omega\bigg(\sum_{m=1}^M\alpha_m\varphi_m\bigg)d\rho_\Gamma\\
 &\geq\int_\Omega\bigg(\sum_{m=1}^M\alpha_m\varphi_m)\bigg)d\rho=\int_\Omega vd\rho   
\end{split}
 \]
which proves that $F_{L,\Omega}^{\Phi}[\rho]\geq D_{L,\Omega}^{\Phi,C}[\rho]$.
We would like to prove that this inequality is actually an equality.
Let us introduce the ground state energy associated to the potential $v$: 
\[E[v]=\inf_{\Psi\in\mathcal H_1^N(\Omega)}\langle\Psi | H_{N}^v|\Psi \rangle=\inf_{\Gamma\in \mathfrak{S}_1^+(\mathcal H_0^N(\Omega))}\tr(H_N^v\Gamma).\]
We rewrite now the minimization over $\Gamma$ as an external minimization over $\rho\in \mathcal I_N(\Omega)$ and then as an internal one over all $\Gamma$ in $\mathfrak{S}_1^+(\mathcal H_0^N(\Omega),\Phi,\rho)$ (we are considering the ground state for a potential $v\in {\rm Span}\{\Phi\}$):
\[E[v]=\inf_{\rho\in\mathcal I_N(\Omega)}\left\{F_{L,\Omega}^{\Phi}[\rho]-\int_\Omega v\dd\rho\right\}.\]
Notice that $E$ is nothing but the Legendre-Fenchel transform of 
$F_{L,\Omega}^{\Phi}[\rho]$.
On the other hand, we rewrite \eqref{eq:dual}  in the form
\begin{equation}
    \label{eq:dualbis}
    D_{L,\Omega}^{\Phi}[\rho]=\sup_{v\in{\rm Span}\{\Phi\}}\left\{\int_\Omega v\dd\rho-E[v]\right\}.
\end{equation}
Thus, $D_{L,\Omega}^{\Phi}[\rho]$ is the Legendre transform of $E$. From Proposition~\ref{prop:lsc} and Fenchel duality theorem for convex lower semi-continuous functions we conclude the following
\begin{theorem}
    \label{thm:stronDuality}
    Under the assumptions of Theorem~\ref{thm:existencebounded}, we have $F_{L,\Omega}^{\Phi}[\rho]=D_{L,\Omega}^{\Phi}[\rho]$.
\end{theorem}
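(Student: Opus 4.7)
The plan is to apply the Fenchel--Moreau biconjugation theorem to the functional $F_{L,\Omega}^{\Phi}$, viewed as a convex extended-real-valued function on an appropriate vector space of densities. First I would establish three properties of $F_{L,\Omega}^{\Phi}$ on $\mathcal I_N(\Omega)$ (extended by $+\infty$ elsewhere in $L^1(\Omega)$): convexity, properness and weak-$L^1$ lower semi-continuity. Convexity is immediate from the linearity in $\Gamma$ of both $\Gamma \mapsto \tr(H_{N,\Omega}\Gamma)$ and the moment constraints: given $\rho_0,\rho_1 \in \mathcal I_N(\Omega)$ with (nearly) optimal $\Gamma_0,\Gamma_1$, the combination $\lambda\Gamma_0 + (1-\lambda)\Gamma_1$ lies in $\mathfrak{S}_1^+(\mathcal H_0^N(\Omega),\Phi,\lambda\rho_0 + (1-\lambda)\rho_1)$ and furnishes the required upper bound. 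Properness follows from the existence of a finite minimizer proved in Theorem~\ref{thm:existencebounded}, while weak-$L^1$ lower semi-continuity is exactly the last claim of that theorem.

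Next I would compute the Legendre--Fenchel conjugate of $F_{L,\Omega}^{\Phi}$ on $\mathrm{Span}\{\Phi\}$. The crucial identity is
\[
E[v] \;=\; \inf_{\rho \in \mathcal I_N(\Omega)} \Bigl\{ F_{L,\Omega}^{\Phi}[\rho] - \int_\Omega v\, \dd\rho \Bigr\} \qquad \forall v \in \mathrm{Span}\{\Phi\},
\]
already derived in the discussion preceding the statement; it rests on the fact that $\int v\, \dd\rho_{\Gamma} = \int v\, \dd\rho$ for every $\Gamma \in \mathfrak{S}_1^+(\mathcal H_0^N(\Omega),\Phi,\rho)$ when $v \in \mathrm{Span}\{\Phi\}$. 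This says exactly that $-E$ is the standard Legendre conjugate of $F_{L,\Omega}^{\Phi}$ restricted to $\mathrm{Span}\{\Phi\}$.

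At this point I would exploit the fact that $F_{L,\Omega}^{\Phi}[\rho]$ depends on $\rho$ only through the moment vector $\mathbf m(\rho) := \bigl(\int \varphi_m \rho\bigr)_{m=1}^M$, since the feasible set $\mathfrak{S}_1^+(\mathcal H_0^N(\Omega),\Phi,\rho)$ itself depends only on $\mathbf m(\rho)$. Hence $F_{L,\Omega}^{\Phi}$ descends to a convex proper lower semi-continuous function $\widetilde F$ on the finite-dimensional moment space $\mathbb R^M$, whose convex conjugate lives on $(\mathbb R^M)^* \simeq \mathrm{Span}\{\Phi\}$. Applying Fenchel--Moreau in finite dimensions then yields
\[
F_{L,\Omega}^{\Phi}[\rho] \;=\; \sup_{v \in \mathrm{Span}\{\Phi\}} \Bigl\{ \int_\Omega v\, \dd\rho + E[v] \Bigr\}.
\]

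Finally I would match this supremum with $D_{L,\Omega}^{\Phi}[\rho]$ via a shift argument enabled by the hypothesis $\un \in \mathrm{Span}\{\Phi\}$: for any $v \in \mathrm{Span}\{\Phi\}$ the shifted potential $\widetilde v := v + \tfrac{E[v]}{N}\un$ still belongs to $\mathrm{Span}\{\Phi\}$, satisfies $H_{N,\Omega}^{\widetilde v} = H_{N,\Omega}^{v} - E[v]\,\mathrm{Id}$ (so that $E[\widetilde v] = 0$ and $H_{N,\Omega}^{\widetilde v} \geq 0$, making $\widetilde v$ admissible in the original dual), and obeys $\int \widetilde v\, \dd\rho = \int v\, \dd\rho + E[v]$. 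This shows that the supremum above coincides with $\sup \{ \int v\, \dd\rho : v \in \mathrm{Span}\{\Phi\},\ H_{N,\Omega}^{v} \geq 0 \} = D_{L,\Omega}^{\Phi}[\rho]$. The main technical hurdle I anticipate is the passage of lower semi-continuity from $F_{L,\Omega}^{\Phi}$ (weak-$L^1$) to $\widetilde F$ (on $\mathbb R^M$): given $\mathbf m_n \to \mathbf m$ one needs to extract a weakly-$L^1$-convergent subsequence from near-minimizing $\rho_n$ with moments $\mathbf m_n$, which is feasible by exploiting the uniform mass $\int \rho_n = N$, their common support in the bounded $\Omega$, and the Hoffmann--Ostenhof-type control of $\|\sqrt{\rho_n}\|_{H^1}$ by the kinetic energy of the associated minimizers.
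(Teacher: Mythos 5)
Your proposal follows essentially the same route as the paper: identify $E[v]=\inf_{\rho}\{F_{L,\Omega}^{\Phi}[\rho]-\int_\Omega v\,\dd\rho\}$ as (minus) the Legendre--Fenchel conjugate of $F_{L,\Omega}^{\Phi}$, invoke biconjugation together with the lower semi-continuity of Theorem~\ref{thm:existencebounded}, and identify the biconjugate with $D_{L,\Omega}^{\Phi}[\rho]$ by means of the constant shift made available by $\un\in\mathrm{Span}\{\Phi\}$. Your finite-dimensional reduction through the moment map $\rho\mapsto\bigl(\int\varphi_m\rho\bigr)_m$ is a helpful precision that the paper leaves implicit (it is exactly what justifies restricting the supremum to $\mathrm{Span}\{\Phi\}$ rather than to a full dual space), and your formula $F^{**}[\rho]=\sup_v\{\int_\Omega v\,\dd\rho+E[v]\}$ carries the correct sign: with the shift $\widetilde v=v+\tfrac{E[v]}{N}\un$ one indeed gets $+E[v]$, so the displayed identity~\eqref{eq:dualbis} in the paper should read $\sup_v\{\int_\Omega v\,\dd\rho+E[v]\}$, i.e.\ $D_{L,\Omega}^{\Phi}$ is the Legendre transform of $-E=(F_{L,\Omega}^{\Phi})^*$ rather than of $E$.
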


We now have the following result which, taking into account the sparsity result of Theorem~\ref{thm:existencebounded}, gives a more convenient formulation of $D_{L,\Omega}^{\Phi}[\rho]$. 

\begin{theorem}\label{thm:dual}
Under the assumptions of Theorem~\ref{thm:existencebounded}, there exists at least one maximizer to (\ref{eq:dual}), and it holds that
\begin{align*}
D_{L,\Omega}^{\Phi}[\rho] & = \mathop{\max}_{
\begin{array}{c}
v \in {\rm Span}\{\Phi\},\\
\forall \Psi\in \mathcal H_1^N(\Omega), \quad \langle\Psi | H_{N,\Omega}^v|\Psi \rangle \geq 0 \\
\end{array}} \int_\Omega v \rho \\
& = \mathop{\max}_{
\begin{array}{c}
v \in {\rm Span}\{\Phi\},\\
\forall \Psi \in {\rm Span}\{\Psi_1, \ldots, \Psi_K\}, \quad \langle\Psi | H_{N,\Omega}^v|\Psi \rangle \geq 0 \\
\end{array}} \int_\Omega v \rho, \\
\end{align*}
where 
\[\Gamma^{\Phi}_{{\rm opt},\Omega}=\sum_{k=1}^K\omega_k|\Psi_k\rangle\langle \Psi_k|\]
for some $1\leq K\leq M+1$, with $\omega_k> 0$ and $\Psi_k\in \mathcal H_1^N(\Omega)$ for all $1\leq k\leq K$ is a minimizer of (\ref{min:LiebMomentAbounded}).
\end{theorem}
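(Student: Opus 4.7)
The plan is to reduce the restricted (second) maximum to a finite-dimensional semi-definite program built from the sparse primal minimizer of Theorem~\ref{thm:existencebounded}, and to combine SDP duality (Section~\ref{sec:semidef}) with the strong duality statement of Theorem~\ref{thm:stronDuality}. The inequality in one direction is immediate: since the positivity constraint in the second max (restricted to $\mathrm{Span}\{\Psi_1,\ldots,\Psi_K\}$) is weaker than the one in the first max (on all of $\mathcal H_1^N(\Omega)$), the first maximum is no larger than the second.

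For the reverse inequality, I would take any $v \in \mathrm{Span}\{\Phi\}$ feasible for the second maximum and use the sparse minimizer $\Gamma^{\Phi}_{{\rm opt},\Omega} = \sum_{k=1}^{K} \omega_k |\Psi_k\rangle\langle\Psi_k|$ with $\omega_k > 0$ to compute
\begin{equation*}
F_{L,\Omega}^{\Phi}[\rho] = \tr(H_{N,\Omega} \Gamma^{\Phi}_{{\rm opt},\Omega}) = \sum_{k=1}^{K} \omega_k \langle \Psi_k | H_{N,\Omega}^v | \Psi_k\rangle + \int_\Omega v\, \rho_{\Gamma^{\Phi}_{{\rm opt},\Omega}} \geq \int_\Omega v\, \rho,
\end{equation*}
where the first term is non-negative by restricted positivity (with $\omega_k \geq 0$), and the second equals $\int v\, \rho$ because $v\in\mathrm{Span}\{\Phi\}$ (moment constraints). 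Taking the supremum over $v$ shows that the second maximum is at most $F_{L,\Omega}^{\Phi}[\rho]$, which by Theorem~\ref{thm:stronDuality} equals $D_{L,\Omega}^{\Phi}[\rho]$, i.e., the first maximum. Hence the three values coincide.

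For existence of a maximizer, I would recast the restricted problem as a finite-dimensional SDP in the variable $X \in \mathcal S^K$, with $\Psi_1,\ldots,\Psi_K$ fixed. Setting $C_{ij} := \langle\Psi_i|H_{N,\Omega}|\Psi_j\rangle$, $(B_m)_{ij} := \langle\Psi_i|W^{\varphi_m}|\Psi_j\rangle$ and $b_m := \int\varphi_m\rho$, the primal $\inf\{\langle C,X\rangle : X \succcurlyeq 0,\ \langle B_m,X\rangle = b_m\}$ attains its infimum $F_{L,\Omega}^{\Phi}[\rho]$ at $X^* := \mathrm{diag}(\omega_1,\ldots,\omega_K) \succ 0$ (strict because $\omega_k>0$), so Slater's condition is satisfied on the primal side. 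Its SDP dual is precisely the second maximum, and assuming (without loss of generality, reducing $\Phi$ otherwise) that the moment map $X \mapsto (\langle B_m, X\rangle)_m$ is surjective, Theorem~\ref{th:SDPex}(ii) yields a dual maximizer $v^* = \sum_m y_m^* \varphi_m$.

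The main obstacle is then to promote $v^*$ from a maximizer of the second (restricted) max to a maximizer of (\ref{eq:dual}). SDP complementary slackness gives $X^*(C - \sum_m y_m^* B_m) = 0$, and since $X^*\succ 0$ this forces the Gram matrix of $H_{N,\Omega}^{v^*}$ on $\mathrm{Span}\{\Psi_1,\ldots,\Psi_K\}$ to vanish. The upgrade to $H_{N,\Omega}^{v^*}\succcurlyeq 0$ on all of $\mathcal H_1^N(\Omega)$ rests on the inequality $\tr(H_{N,\Omega}^{v^*}\Gamma) \geq 0$ valid for every $\Gamma \in \mathfrak S_1^+(\mathcal H_0^N(\Omega),\Phi,\rho)$ (derived from the same computation as above, now with $v^*$ achieving equality), combined with a perturbation argument that adjoins rank-one operators $|\Psi\rangle\langle\Psi|$ to $\Gamma^{\Phi}_{{\rm opt},\Omega}$ together with moment-neutral corrections so as to exhibit enough test operators within the feasible set. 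This last perturbation step is the delicate technical part of the argument.
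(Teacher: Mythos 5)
Your direct argument for the equality of the two suprema is correct and in fact cleaner than the paper's. Where the paper invokes Fenchel strong duality twice (once via Theorem~\ref{thm:stronDuality} for the full MCAL problem and once more, ``by similar arguments'', for the problem restricted to operators with range in $\mathcal W:= \mathrm{Span}\{\Psi_1,\ldots,\Psi_K\}$), you go directly through the sparse minimizer: for any $v$ feasible under the restricted constraint, $\tr(H_{N,\Omega}^{v}\Gamma^{\Phi}_{\mathrm{opt},\Omega}) = \sum_k\omega_k\langle\Psi_k|H_{N,\Omega}^v|\Psi_k\rangle \geq 0$ and $\tr(W^v\Gamma^{\Phi}_{\mathrm{opt},\Omega}) = \int_\Omega v\rho$ by the moment constraints, so $\int_\Omega v\rho \leq F_{L,\Omega}^\Phi[\rho] = D_{L,\Omega}^\Phi[\rho]$, and the reverse inequality is immediate. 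This is an economical use of the sparse structure and matches the ``weak duality'' computation the paper carries out earlier in Section~\ref{sec:duality}, now applied on the subspace $\mathcal W$.

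Your SDP argument for the attainment of the restricted maximum essentially reproduces the paper's Step~1, but your treatment of the surjectivity hypothesis is a little loose. The obstruction is not that the $\varphi_m$ may be linearly dependent as functions, but that the linear forms $l_{\varphi_m}:S\mapsto\sum_{k,l}S_{kl}\langle\Psi_k|W^{\varphi_m}|\Psi_l\rangle$ on $\mathcal S^K$ may be linearly dependent even when the $\varphi_m$ are not. The paper therefore passes to a basis of $L := \{l_\varphi : \varphi\in\mathrm{Span}\{\Phi\}\}$, shows that the reduced constraint map is surjective by construction, and -- crucially -- verifies that the discarded directions $\varphi_0$ with $l_{\varphi_0}=0$ also satisfy $\int_\Omega\varphi_0\rho=0$, using the identity $\int_\Omega\varphi\rho = l_\varphi(\mathrm{diag}(\omega_1,\ldots,\omega_K))$. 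That last identity is exactly what licenses your ``without loss of generality, reducing $\Phi$'' and should be made explicit. You should also note that the ``diag$(\omega)$'' form of the strictly feasible primal point presupposes an orthonormal choice of basis spanning $\mathcal W$; otherwise one has a Gram-type positive definite matrix, which works equally well but is worth stating.

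On the step you single out as ``delicate'' -- promoting the restricted maximizer $v^*$ to a maximizer of (\ref{eq:dual}) -- your diagnosis is sharp: complementary slackness with $X^*\succ 0$ forces the Gram matrix of $H_{N,\Omega}^{v^*}$ on $\mathcal W$ to vanish, which does not by itself give $H_{N,\Omega}^{v^*}\geq 0$ on all of $\mathcal H_1^N(\Omega)$. However, the paper's own proof does not carry out the perturbation argument you outline either: it proves that the restricted maximum is attained (Step~1) and that the restricted and unrestricted suprema coincide (Step~2), and then asserts the conclusion. So the existence of a maximizer of (\ref{eq:dual}) in the strict sense (a $v\in\mathrm{Span}\{\Phi\}$ with $H_{N,\Omega}^v\geq 0$ on all of $\mathcal H_1^N(\Omega)$ attaining the value $D_{L,\Omega}^\Phi[\rho]$) is handled no more fully in the paper than in your proposal, and your instinct to flag it is sound. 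If you pursue your perturbation route, the moment-neutral corrections you need are solutions to $M$ linear equations, and since the number $K$ of $\Psi_k$'s is at most $M+1$ you cannot in general solve these within $\mathrm{Span}\{|\Psi_k\rangle\langle\Psi_l|\}$, so the argument would require admitting richer corrections than the ones you suggest.
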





 \section{Numerical scheme}\label{sec:algo}
The aim of this section is to propose a new numerical scheme using the sparsity of minimizers of the MCAL functional to compute approximations of the Lieb functional. 
The scheme proposed here requires the resolution of eigenvalue problems for operators acting on $\mathcal H_0^N(\Omega)$, which leads to high-dimensional problems when the number of electrons is large. 
The combination of the algorithm proposed here with numerical methods dedicated to overcome the curse of dimensionality will be the object of a future work.

\medskip

We propose here an iterative scheme which shares some common features with the well-known Column Algorithm used for classical optimal transport problems (see for instance~\cite{friesecke2022gencol,friesecke2022genetic}). 
The aim is to construct at each iteration $n\in \mathbb{N}^*$ a finite set of $L^2$-normalized wavefunctions $\mathfrak{P}_n \subset \mathcal H^2_N(\Omega)$ which will be used to enforce the inequality constraints in the resolution of the MCAL dual problems. More precisely, inequality constraints in small-dimensional dual problems are enforced to hold on the space spanned by the wavefunctions belonging to the set $\mathfrak{P}_n$. As a consequence, in our present quantum optimal transprt framework, semi-definite programming problems have to be solved at each iteration instead of linear programming problems for classical optimal transport problems.

\subsection{MCAL iterative scheme}

We describe the MCAL algorithm in this section. The algorithm takes as input data:  
\begin{itemize}
    \item $\Phi = (\varphi_1, \ldots, \varphi_M)\subset L^\infty(\Omega)$ set of moment functions; 
    \item $\rho \in \mathcal I_N$ with support included in $\Omega$; \item $\rho^m:= \int_\Omega \varphi_m \rho$ for all $1\leq m \leq M$.  
      \item $\widetilde{\mathfrak{P}}_0 \subset \mathcal H^2_N(\Omega)$ initial finite set of $L^2$-normalized wavefunctions.
          \end{itemize}

          \medskip

As an output, after $n$ iterations, the algorithm yields $F^n$ which is an approximation of the quantity $F_{L,\Omega}^\Phi[\rho]$. 

\medskip

     We make here the following assumption on the initial set $\widetilde{\mathfrak{P}}_0$.

    {\bf Assumption (A0):} let $\widetilde{K}_0:= {\rm dim} \; {\rm Span}\{\widetilde{\mathfrak{P}}_0\}$ and $(\widetilde{\Psi}_1^0, \ldots, \widetilde{\Psi}_{K_0}^0)$ an orthonormal basis of ${\rm Span}\{\widetilde{\mathfrak{P}}_0\}$. We assume that there exists $\widetilde{S}:= (\widetilde{S}_{kl})_{1\leq k,l\leq K_0}\in \mathcal S_+^{K_0}$ such that for all $1\leq m \leq M$, 
      $$
      \sum_{k,l=1}^{K_0} \widetilde{S}_{kl} \int_{\Omega} \varphi_m \overline{\widetilde{\Psi}_k^0}\widetilde{\Psi}_l^0 = \rho^m.
      $$

In the case when $d = 3$, a way to find such an initial set $\widetilde{\mathfrak{P}}_0$ is given in~\cite{Lieb-83}. In this case, $\widetilde{K}_0$ can be chosen to be equal to $1$ and $\widetilde{\Psi}_1^0$ can be chosen as follows: for all $1\leq k \leq N$ and $x = (x_1,x_2,x_3)\in \mathbb{R}^3$, define
$$
\phi^k(x) = \sqrt{\frac{\rho(x)}{N}}e^{ikf(x_1)},
$$
where for all $x_1\in \mathbb{R}$, 
$$
f(x_1) = \left(\frac{2\pi}{N}\right) \int_{-\infty}^{x_1} \,ds \int_{-\infty}^{+\infty}\,dt \int_{-\infty}^\infty \,du \rho(s,t,u).
$$
The family $(\phi^k)_{1\leq k \leq N}$ then forms an orthonormal family of $L^2(\mathbb{R}^3)$ and one may define $\widetilde{\Psi}_1^0$ as the normalized Slater determinant associated to the family $(\phi^k)_{1\leq k \leq N}$.

\bigskip

\subsubsection{Initialization step} 

   Compute $\widetilde{S}^0:=(\widetilde{S}^0_{kl})_{1\leq k ,l \leq K_0}\in \mathcal S_+^{K_0}$ solution to 
\begin{equation}\label{eq:primal}
    \widetilde{F}^0:=\mathop{\min}_{
   \begin{array}{c}
   (S_{kl})_{1\leq k,l \leq K_n}\in \mathcal{S}_+^{K_0},\\
   \forall 1\leq m \leq M, \\
\sum_{k,l=1}^{K_0} S_{kl} \int_{\Omega} \varphi_m \overline{\widetilde{\Psi}_k^0}\widetilde{\Psi}_l^0=  \rho^m\\
   \end{array}
   } \sum_{k,l=1}^{K_0} S_{kl} \langle \widetilde{\Psi}_k^0 | H_{N,\Omega} | \widetilde{\Psi}_l^0\rangle
   \end{equation}
Then, it holds that $\widetilde{S}^0 = \sum_{k=1}^{K_0} \omega_k^0 (U_k^0) (U_k^0)^T$
where $(\omega_k^0)_{1\leq k \leq K_0}\in \mathbb{R_+}^{K_0}$ are the 
eigenvalues of $\widetilde{S}^0$ (assumed to be ranked in non-increasing
order) and for all $1\leq k \leq K_0$, $U_k^0:= (U_{kl}^0)_{1\leq l \leq 
K_0}\in \mathbb{R}^{K_0}$ is a normalized eigenvector associated with 
$\omega_k^0$ so that $(U_1^0, \ldots, U_{K_0}^0)$ forms an orthonormal 
basis of $\mathbb{R}^{K_0}$. 

\medskip

Let $\mathcal K_0:= \max \left\{ k\in \{1, \ldots, K_0\}, \omega_k^0 > 0\right\}$. For all $1\leq k \leq \mathcal K_0$, let 
$
\Psi_k^0:= \sum_{l=1}^{K_0} U_{kl}^0 \widetilde{\Psi}_l^0,
$
and $S^0:= {\rm diag}(\omega_1^0, \ldots, \omega_{\mathcal K_0})\in \mathcal S_{+,*}^{\mathcal K_0}$. We also denote by $\mathfrak{P}_0:= \bigcup_{1\le k \leq \mathcal K_0} \left\{\Psi_k^0\right\}$.

\begin{remark}
    It is easy to see that, by construction, it holds that
\begin{equation}\label{eq:primal2}
    \widetilde{F}^0:=\mathop{\min}_{
   \begin{array}{c}
   (S_{kl})_{1\leq k,l \leq \mathcal K_0}\in \mathcal{S}_+^{\mathcal K_0},\\
   \forall 1\leq m \leq M, \\
\sum_{k,l=1}^{\mathcal K_0} S_{kl} \int_{\Omega} \varphi_m \overline{\Psi_k^0}\Psi_l^0=  \rho^m\\
   \end{array}
   } \sum_{k,l=1}^{K_0} S_{kl} \langle \Psi_k^0 | H_{N,\Omega} | \Psi_l^0\rangle
   \end{equation}
and that $S^0$ is a minimizer to (\ref{eq:primal2}). 
\end{remark}

\begin{remark}
   Notice also that this initialization step is useless in the case when $\widetilde{K}^0 = 1$.
\end{remark}
   
\subsubsection{Iteration $n\geq 1$}

{\itshape \bfseries Step~1:} Let $\mathcal K^{n-1}:={\rm dim}\; {\rm Span}\left\{ \mathfrak{P}_{n-1}\right\}$ and $(\Psi^{n-1}_1, \ldots, \Psi^{n-1}_{K_{n-1}})$ be an orthonormal basis of $ {\rm Span} \left\{\mathfrak{P}_{n-1}\right\}$. Let $A^{n-1}:=(A^{n-1}_{kl, m})_{ 1\leq m \leq M,1\leq k,l\leq K_{n-1}} \in  \mathbb{R}^{ K_{n-1}^2 \times M} $ be defined by
    $$
    A^{n-1}_{kl,m}:= \int_\Omega \varphi_m \overline{\Psi^{n-1}_k}\Psi^{n-1}_l.
    $$

 Let $C^{n-1}:= {\rm Ker}(A^{n-1})^\perp \subset \mathbb{R}^M$ and 
    $$
    V^{n-1}:=\left\{v = \sum_{m=1}^{M} c_m \varphi_m, \; c:=(c_m)_{1\leq m \leq M} \in C^{n-1}\right\} \subset {\rm Span}\{\Phi\}.
    $$
    
Compute $v^n \in V^{n-1}$ solution to
    \begin{equation}\label{eq:dualn}
    F^n = \mathop{\max}_{
    \begin{array}{c}
    v\in V^{n-1} \\
    \forall \Psi \in {\rm Span}\{\mathfrak{P}_{n-1}\}, \quad \langle \Psi | H_{N,\Omega}^v| \Psi\rangle \geq 0. 
    \end{array}
    } \int_{\mathbb{R}^3} v \rho
    \end{equation}

\begin{remark}
Using the results of semi-definite positive programming and using similar arguments as in the proof of Theorem~\ref{thm:dual}, it can be easily checked that there exists at least one maximizer to (\ref{eq:dualn}). In addition, any maximizer to (\ref{eq:dualn}) is also a maximizer to 
$$
F^n = \mathop{\max}_{
    \begin{array}{c}
    v\in {\rm Span}\{\Phi\} \\
    \forall \Psi \in {\rm Span}\{\mathfrak{P}_{n-1}\}, \quad \langle \Psi | H_{N,\Omega}^v| \Psi\rangle \geq 0. 
    \end{array}
    } \int_{\Omega} v \rho,
$$
since $\int_{\Omega} v\rho=0$ for any $v = \sum_{m=1}^M c_m \varphi_m$ with $c:=(c_m)_{1\leq m \leq M} \in {\rm Ker}(A^{n-1})$. 
\end{remark}

{\itshape \bfseries Step~2:} Compute $\Psi_0^{v_n} \in \mathcal H_2^N(\Omega)$ a $L^2$-normalized solution to 
    \begin{equation}\label{eq:eigenvalue}
    H^{v_n}_{N,\Omega}\Psi_0^{v_n} = E(v_n)\Psi_0^{v_n},
    \end{equation}
    where $E(v_n)$ is the smallest eigenvalue of $ H^{v_n}_{N,\Omega}$.

\medskip

{\itshape \bfseries Step~3:} We now distinguish two different cases.

  \begin{itemize}    
\item {\bf Case 1: } $E(v_n)< 0$

Define $\widetilde{\mathfrak{P}}_n:= \mathfrak{P}_{n-1} \cup \{\Psi_0^{v_n} \}$. Let $K_n:= {\rm dim}\; {\rm Span}\{\widetilde{{\mathfrak{P}}}_n\}$ and let $\widetilde{\Psi}_1^n, \ldots, 
\widetilde{\Psi}_{K_n}^n$ be an orthonormal basis of ${\rm Span}\{\widetilde{{\mathfrak{P}}}_n\}$.
  
   Compute $\widetilde{S}^n:=(\widetilde{S}^n_{kl})_{1\leq k ,l \leq K_N}\in \mathcal S_+^{K_n}$ solution to 
\begin{equation}\label{eq:primal}
    \widetilde{F}^n:=\mathop{\min}_{
   \begin{array}{c}
   (S_{kl})_{1\leq k,l \leq K_n}\in \mathcal{S}_+^{K_n},\\
   \forall 1\leq m \leq M, \\
\sum_{k,l=1}^{K_n} S_{kl} \int_{\Omega} \varphi_m \overline{\widetilde{\Psi}_k^n}\widetilde{\Psi}_l^n=  \rho^m\\
   \end{array}
   } \sum_{k,l=1}^{K_n} S_{kl} \langle \widetilde{\Psi}_k^n | H_{N,\Omega} |\widetilde{\Psi}_l^n\rangle
   \end{equation}
   
Then, it holds that $\widetilde{S}^n = \sum_{k=1}^{K_n} \omega_k^n (U_k^n) (U_k^n)^T$
where $(\omega_k^n)_{1\leq k \leq K_n}\in \mathbb{R_+}^{K_n}$ are the 
eigenvalues of $\widetilde{S}^n$ (assumed to be ranked in non-increasing
order) and for all $1\leq k \leq K_n$, $U_k^n:= (U_{kl}^n)_{1\leq l \leq 
K_n}\in \mathbb{R}^{K_n}$ is a normalized eigenvector associated with 
$\omega_k^n$ so that $(U_1^n, \ldots, U_{K_n}^n)$ forms an orthonormal 
basis of $\mathbb{R}^{K_n}$. 

\medskip

Let $\mathcal K_n:= \max \left\{ k\in \{1, \ldots, K_n\}, \omega_k^n > 0\right\}$. For all $1\leq k \leq \mathcal K_n$, let 
$
\Psi_k^n:= \sum_{l=1}^{K_n} U_{kl}^n \widetilde{\Psi}_l^n,
$
and $S^n:= {\rm diag}(\omega_1^n, \ldots, \omega_{\mathcal K_n}^n)\in \mathcal S_{+,*}^{\mathcal K_n}$. We then denote by $\mathfrak{P}_n:= \left\{ \Psi_1^n, \ldots, \Psi_{{\mathcal K}_n}^n\right\}$.
   
 Define $n:=n+1$ and proceed with the next iteration.

\item {\bf Case 2: }$E(v_n)\geq 0$

Stop the algorithm.

\end{itemize}

\subsection{Property of the MCAL iterative scheme}

\medskip

We prove the following lemma, which states that the sequence of approximations yielded by the MCAL algorithm is non-increasing. Note however that we do not prove here that the sequence converges indeed to $F_{L,\Omega}^\Phi[\rho]$.

\begin{lemma}
    For all $n\geq 1$, it holds that
    $$
    F^n \geq \widetilde{F}^n = F^{n+1} \geq F_{L,\Omega}^\Phi[\rho]. 
    $$
\end{lemma}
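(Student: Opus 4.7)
The plan is to prove the three assertions ($F^n \geq \widetilde{F}^n$, $\widetilde{F}^n = F^{n+1}$, $F^{n+1} \geq F_{L,\Omega}^\Phi[\rho]$) using SDP duality (Theorem~\ref{th:SDPex}) as the main tool, together with the fact that the spectral decomposition in Step~3 simply re-expresses the optimal operator on its range.

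First, for $F^n \geq \widetilde{F}^n$, I would unpack $F^n$ as the SDP dual of the primal problem posed in the basis $\mathfrak{P}_{n-1}$ (call it $F^n_P$), so that, by SDP duality, $F^n = F^n_P$. The set $\widetilde{\mathfrak{P}}_n = \mathfrak{P}_{n-1}\cup \{\Psi_0^{v_n}\}$ is an enlargement of $\mathfrak{P}_{n-1}$, hence the feasible set of $\widetilde{F}^n$ contains (after trivially extending matrices $S$ by zero on the new coordinate) the feasible set of $F^n_P$, and the value achieved is unchanged. Taking minima gives $\widetilde F^n\leq F^n_P = F^n$.

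Second, the core of the lemma is $\widetilde{F}^n = F^{n+1}$. The key observation is that the unique minimizer $\widetilde{S}^n$ has rank $\mathcal K_n$ and, after the eigendecomposition performed in Step~3, the associated operator reads $\widetilde{\Gamma}_n = \sum_{k=1}^{\mathcal K_n}\omega_k^n |\Psi_k^n\rangle\langle\Psi_k^n|$ with range $\mathrm{Span}\{\mathfrak{P}_n\}$. To get $F^{n+1}\leq \widetilde{F}^n$, I would exhibit $S^n = \mathrm{diag}(\omega_1^n,\dots,\omega_{\mathcal K_n}^n)\in \mathcal S_{+,*}^{\mathcal K_n}$ as a feasible point of the primal SDP associated to $F^{n+1}$ (in the orthonormal basis $\mathfrak{P}_n$), with objective value $\widetilde F^n$, and apply SDP strong duality to identify $F^{n+1}$ with that primal. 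For the reverse inequality $F^{n+1}\geq \widetilde F^n$, I would write the SDP dual of $\widetilde F^n$ (again by strong duality), whose constraint is positivity of $H^v_{N,\Omega}$ on $\mathrm{Span}\{\widetilde{\mathfrak{P}}_n\}$, and note that since $\mathrm{Span}\{\mathfrak{P}_n\}\subseteq \mathrm{Span}\{\widetilde{\mathfrak{P}}_n\}$, any $v$ feasible for this dual of $\widetilde F^n$ is also feasible for $F^{n+1}$, so $F^{n+1}\geq \widetilde F^n$.

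Third, for $F^{n+1}\geq F_{L,\Omega}^\Phi[\rho]$ I would simply note that the operator $\widetilde{\Gamma}_n$ constructed at Step~3 lies in $\mathfrak{S}_1^+(\mathcal H_0^N(\Omega),\Phi,\rho)$, because $\widetilde S^n\succcurlyeq 0$ and the moment constraints in the primal SDP~(\ref{eq:primal}) are exactly the $M$ constraints defining this set. Therefore
\[
F_{L,\Omega}^{\Phi}[\rho]\leq \tr(H_{N,\Omega}\widetilde{\Gamma}_n) = \widetilde{F}^n = F^{n+1}.
\]
The main obstacle is the repeated use of SDP strong duality: in each of the finite-dimensional SDPs above I have to verify the Slater-type hypotheses of Theorem~\ref{th:SDPex} (strict feasibility of the primal or dual and surjectivity of the constraint map). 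The surjectivity is obtained by restricting to $C^{n-1}=\mathrm{Ker}(A^{n-1})^\perp$ as in the definition of $V^{n-1}$, and primal strict feasibility can be ensured inductively from assumption (A0) by replacing $\widetilde S$ with $\widetilde S + \varepsilon I$ projected back onto the affine moment constraints (using $\un\in \mathrm{Span}\{\Phi\}$ to absorb the perturbation). Once these verifications are in place, the three inequalities combine to give the stated chain.
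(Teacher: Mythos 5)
Your proof follows essentially the same strategy as the paper's — SDP strong duality (Theorem~\ref{th:SDPex}) combined with the inclusions ${\rm Span}\{\mathfrak{P}_{n-1}\} \subset {\rm Span}\{\widetilde{\mathfrak{P}}_n\} \supset {\rm Span}\{\mathfrak{P}_n\}$. The differences are mainly organizational. For $F^n \geq \widetilde F^n$, the paper stays on the dual side (posing the positivity constraint on the larger span $\widetilde{\mathfrak{P}}_n$ shrinks the feasible set of potentials $v$, hence shrinks the supremum), whereas you pass to the primal of $F^n$ and compare feasible $S$ matrices after extension by zero; both routes are valid once strong duality is granted. You also explicitly prove the third inequality $F^{n+1}\geq F_{L,\Omega}^\Phi[\rho]$, which the paper leaves implicit, and your argument — that the rank-$\mathcal K_n$ operator built from the optimal $S^n$ is admissible for the full moment-constrained problem — is correct.

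One step should be repaired: the $\varepsilon I$-perturbation you propose for primal strict feasibility does not work as stated. Replacing $\widetilde S$ by $\widetilde S + \varepsilon I$ perturbs every moment integral $\sum_{k,l}S_{kl}\int_\Omega\varphi_m\overline{\widetilde\Psi_k}\widetilde\Psi_l$, and a subsequent projection onto the affine constraint manifold has no reason to remain in $\mathcal S_{+,*}$. In fact no perturbation is needed: the algorithm already furnishes a strict Slater point by construction. After Step~3 of each iteration, the diagonal matrix $S^n={\rm diag}(\omega^n_1,\dots,\omega^n_{\mathcal K_n})\in\mathcal S^{\mathcal K_n}_{+,*}$ is both optimal and positive definite for the primal SDP posed in the orthonormal basis $\mathfrak{P}_n$; this is the strictly feasible primal point for $F^{n+1}$, and $S^{n-1}$ plays the same role for the primal of $F^n$. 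Dual strict feasibility is obtained by taking $v$ a large negative constant (available since $\un\in{\rm Span}\{\Phi\}$ makes $H^v_{N,\Omega}$ coercive), and surjectivity of the constraint map follows from the restriction to $C^{n-1}={\rm Ker}(A^{n-1})^\perp$, exactly as you note. With this replacement your chain of inequalities is sound.
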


\begin{proof}
    The first inequality is simple to see since the dual problem associated to (\ref{eq:primal}) is 
    \begin{align*}
    \widetilde{F}_n & = \mathop{\max}_{
    \begin{array}{c}
     v\in {\rm Span}\{\Phi\} \\
     \forall \Psi \in {\rm Span}\left\{\widetilde{\mathfrak{P}}_{n}\right\}, \quad \langle \Psi | H_{N,\Omega}^v| \Psi\rangle \geq 0. 
    \end{array}
    } \int_{\mathbb{R}^3} v \rho\\
    & \leq \mathop{\max}_{
    \begin{array}{c}
   \scriptsize v\in {\rm Span}\{\Phi\} \\
    \scriptsize \forall \Psi \in {\rm Span}\left\{\mathfrak{P}_{n-1}\right\}, \quad \langle \Psi | H_{N,\Omega}^v| \Psi\rangle \geq 0. 
    \end{array}
    } \int_{\mathbb{R}^3} v \rho \\
    & = F^n,\\
    \end{align*}
    since $\mathfrak{P}_{n-1} \subset \widetilde{\mathfrak{P}}_n$. 
    
    The second equality comes from the fact that
    \begin{align*}
    \widetilde{F}_n & = \mathop{\min}_{
   \begin{array}{c}
   (S_{kl})_{1\leq k,l\leq K_n}\in \mathcal S_+^{K_n},\\
   \forall 1\leq m \leq M, \\
\sum_{k,l=1}^{K_n} S_{kl} \int_{\Omega} \varphi_m \overline{\widetilde{\Psi}_k^n} \widetilde{\Psi}_l^n=  \rho^m\\
   \end{array}
   } \sum_{k,l=1}^{K_n} S_{kl} \langle \widetilde{\Psi}_k^n | H_{N,\Omega} |\widetilde{\Psi}_l^n\rangle\\
   & = \mathop{\min}_{
   \begin{array}{c}
   (S_{kl})_{1\leq k,l\leq {\mathcal K}_n}\in \mathcal S_+^{{\mathcal K}_n},\\
   \forall 1\leq m \leq M, \\
\sum_{k,l=1}^{{\mathcal K}_n} S_{kl} \int_{\Omega} \varphi_m \overline{\Psi_k^n} \Psi_l^n=  \rho^m\\
   \end{array}
   } \sum_{k,l=1}^{{\mathcal K}_n} S_{kl} \langle \Psi_k^n | H_{N,\Omega} |\Psi_l^n\rangle.\\
    \end{align*}
    In addition, we know, by definition of ${\mathcal K}_n$ and of $\Psi_1^n$, ..., $\Psi_{{\mathcal K}_n}^n$ that there exists at least one minimizer to the second minimization problem which is a positive definite matrix, that is the diagonal matrix with entries $\omega_1^n, \ldots, \omega_{ {\mathcal K}_n}$. 
  Using standard results of semi-definite positive programming, it holds that the dual problem associated to the second minimization problem introduced in the last line of the calculations above is precisely 
    $$
     F^{n+1} = \mathop{\max}_{
    \begin{array}{c}
    v\in {\rm Span}\{\Phi\} \\
    \forall \Psi \in {\rm Span}\left\{\mathfrak{P}_{n}\right\}, \quad \langle \Psi | H_{N,\Omega}^v| \Psi\rangle \geq 0.
    \end{array}
    } \int_{\mathbb{R}^3} v \rho = \widetilde{F}_n.
    $$
    Hence the desired result. 
\end{proof}

\subsection{Numerical results}

The numerical tests presented in this section were performed using Julia. In particular, the finite element code developped in~\cite{quan2023finite} was used to solve the eigenvalue problems (\ref{eq:eigenvalue}), and the ProxSDP library was used for the resolution of the semi-definite programming problems. The associated code can be found on ZENODO with the DOI 10.5281/zenodo.11669900. 

\medskip

We present in this section some preliminary numerical results on a toy numerical test case with $N = 2$, $d = 1$ and $\Omega = (-L, L)$ with $L = 10$. More precisely, for a given value $D\in \mathbb{N}^*$, the solution of problems (\ref{eq:eigenvalue}) is approximated using a Galerkin approximation in the finite element ($\mathbb{P}_1$) discretization space
$$
W^D = {\rm Span}\left\{ \phi^1 \wedge \phi^2, \phi^1, \phi^2 \in V^D\right\}
$$
where 
$$
V^D = \left\{ \phi \in \mathcal C(\Omega)| \quad \phi(-L) = \phi(L) = 0, \;  \phi|_{\left(-L + \frac{(i-1)2L}{D}, -L +\frac{i 2L}{D}\right)} \in \mathbb{P}_1, \; \forall 0\leq i \leq D \right\},
$$
and 
$$
\forall \phi^1, \phi^2\in V^D, \; \forall x,y\in \Omega, \quad \phi^1 \wedge \phi^2(x,y) = \frac{1}{\sqrt{2}}(\phi^1(x)\phi^2(y) - \phi^2(x)\phi^1(y)).
$$
The moment functions $(\varphi_m)_{1\leq m \leq M}$ are chosen to be $\mathbb{P}_1$ hat functions associated to a uniform discretization of $\Omega$ so that
$$
Z^M:={\rm Span}\left\{ \varphi_m, \; 1\leq m \leq M\right\} = \left\{ v \in \mathcal C(\Omega)| \quad\;  v|_{\left(-L + \frac{(j-1)2L}{M-1}, -L +\frac{j 2L}{M-1}\right)} \in \mathbb{P}_1, \; \forall 0\leq j \leq M-1 \right\}.
$$

The electronic density $\rho$ of choice is constructed as follows: we define, for all $x\in \Omega$, 
$$
\phi_{\rm even}(x)= 1 - \frac{|x|}{L}  \; \mbox{ and } \; \phi_{\rm odd}(x)= 
\begin{cases}
 1 - |2x+L|/L & \mbox{ if } x\leq 0,\\
 \frac{|2x-L|}{L} -1 & \mbox{ otherwise}.\\
\end{cases}
$$
Then, we define $\widetilde{\Psi}^0_1 = \frac{\phi_{\rm even}}{\|\phi_{\rm even}\|_{L^2(\Omega)}} \wedge \frac{\phi_{\rm odd}}{\|\phi_{\rm odd}\|_{L^2(\Omega)}}$ and 
$\rho = \frac{1}{2}\left( \frac{|\phi_{\rm even}|^2}{\|\phi_{\rm even}\|_{L^2(\Omega)}^2} + \frac{|\phi_{\rm odd}|^2}{\|\phi_{\rm odd}\|_{L^2(\Omega)}^2}\right).$

We then apply the MCAL algorithm starting from $\widetilde{\mathfrak{P_0}} = \left\{\widetilde{\Psi}^0_1\right\}$.

\medskip

Let us first highlight the influence of the parameter $q_{\rm vec}$ on the performance of the algorithm in terms of the number of iterations required to achieve numerical convergence. We first conduct a first series of tests with $M = 20$ and $D = 100$. 

Figure~\ref{fig:qvec} highlights the behaviour of the numerical scheme with respect to $q_{\rm vec}$.

\begin{figure}[h!]
\centering
\includegraphics[width=12cm]{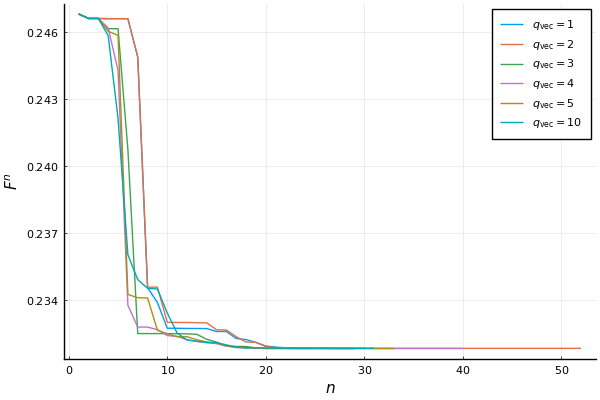}
\includegraphics[width=12cm]{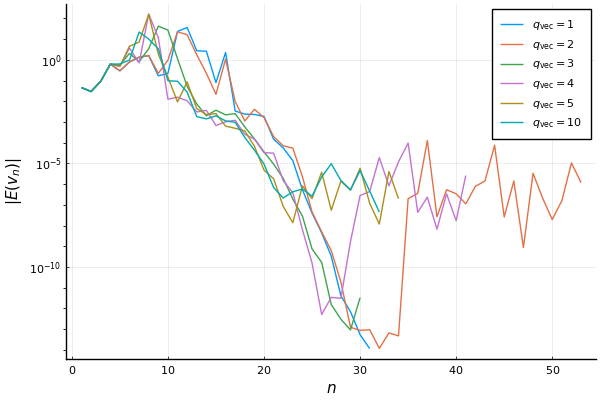}
\caption{Evolution of $F^n$ (above) and $|E(v_n)|$ (below) as a function of $n$ for different values of $M$.}\label{fig:qvec}
\end{figure}

The upper (respectively lower) figure shows the values of $F^n$ (respectively $|E(v_n)|$) as a function of $n$ for different values of $M$. As predicted by our theoretical results, for any value of $q_{\rm vec}$, the sequence $(F^n)_{n\in\mathbb{N}}$ is non-increasing and we also checked numerically that $\widetilde{F}^n = F^{n+1}$ for all $n\in \mathbb{N}$. In constrast, the sequence $(E(v_n))_{n\in\mathbb{N}^*}$ is not monotonous. We also observe that for any tested value of $q_{\rm vec}$, the sequence $(F^n)_{n\in \mathbb{N}}$ converges to the same limit value. It seems that for greater values of $q_{\rm vec}$, the number of iterations $n$ needed for the algorithm to converge is lower. 

\medskip

Figure~\ref{fig:M} highlights the behaviour of the numerical scheme with respect to the number $M$ of moment constraints. In these tests, $D = 100$ and $q_{\rm vec} = 4$.

Again, the upper (respectively lower) figure shows the values of $F^n$ (respectively $|E(v_n)|$) as a function of $n$ for different values of $M$. As before, we observe that the sequence $(F^n)_{n\in\mathbb{N}}$ is non-increasing and we also checked numerically that $\widetilde{F}^n = F^{n+1}$ for all $n\in \mathbb{N}$. In constrast, the sequence $(E(v_n))_{n\in\mathbb{N}^*}$ is not monotonous. We also observe that for any tested value of $M$, the sequence $(F^n)_{n\in \mathbb{N}}$ converges to some limit value denoted here by $F^\infty(M)$ which depends on $M$. We observe again that the value of $F^\infty(M)$ does not increase monotonically with $M$, which stems from the fact that the spaces $Z^M$ do not form an increasing family of vector spaces for the inclusion. However, is still holds that $Z^{10} \subset Z^{20} \subset Z^{40}$, and we indeed observe that $F^\infty(10) \leq F^\infty(20) \leq F^\infty(40)$, which is coherent with the variational structure of the moment constraint approach studied here. We also observe that the value of $E(v_n)$ seems to stagnate in most of the numerical tests (except the one corresponding to $M=10$) to a value close to $-10^{-5}$.  

\begin{figure}[h!]
\centering
\includegraphics[width=12cm]{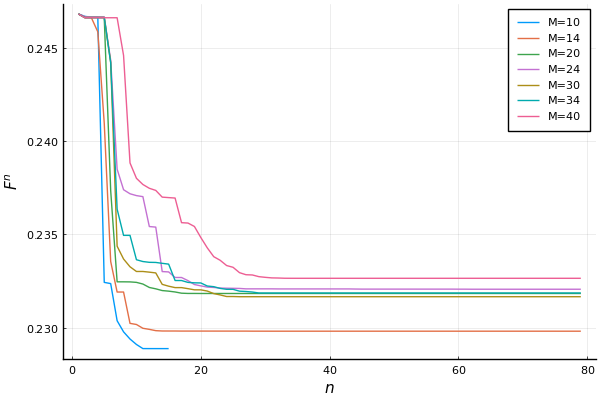}
\includegraphics[width=12cm]{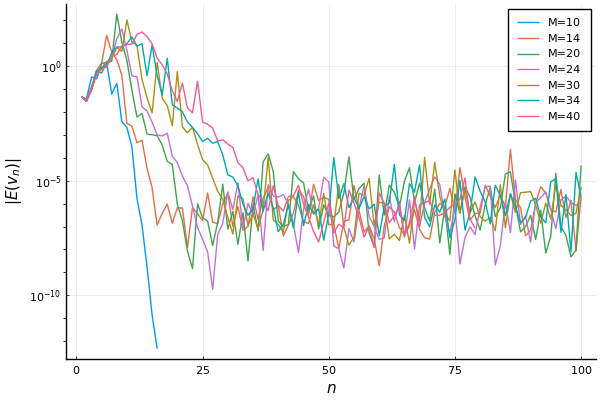}
\caption{Evolution of $F^n$ (above) and $|E(v_n)|$ (below) as a function of $n$ for different values of $M$.}\label{fig:M}
\end{figure}

Lastly, Figure~\ref{fig:potential} shows the plots of the potential $v_n$ obtained after running $n = 80$ iterations of the MCAL algorithm for various values of $M$ ($M=10, 20, 30, 40$). We observe that the potential value seems to converge to some limit value of $M$ increases. However, the number of moment constraints should definitely be higher to obtain a better accuracy, which was not possible with our current implementation. More evolved versions of the present MCAL algorothm should be designed to alleviate this bottleneck, which will be the object of a future work.

\begin{figure}[h!]
\centering
\includegraphics[width=6.7cm]{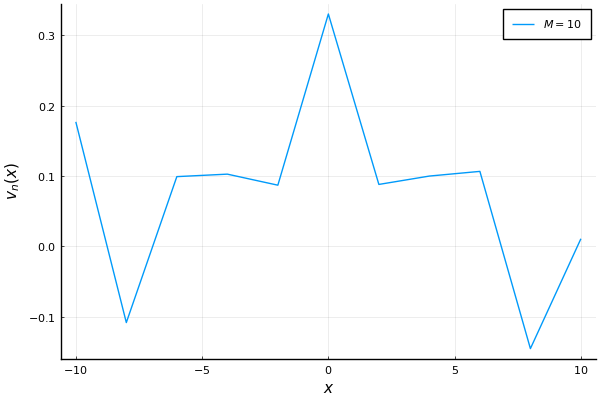}
\includegraphics[width=6.7cm]{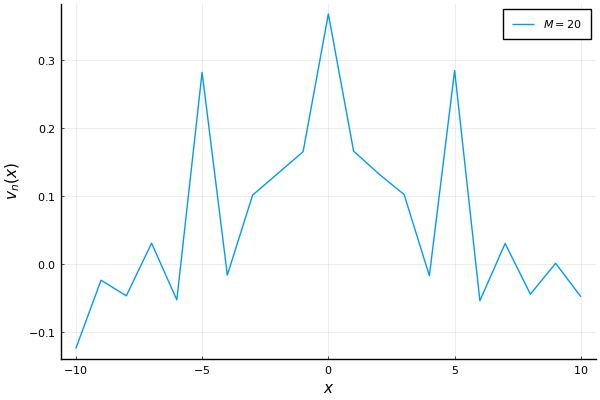}
\includegraphics[width=6.7cm]{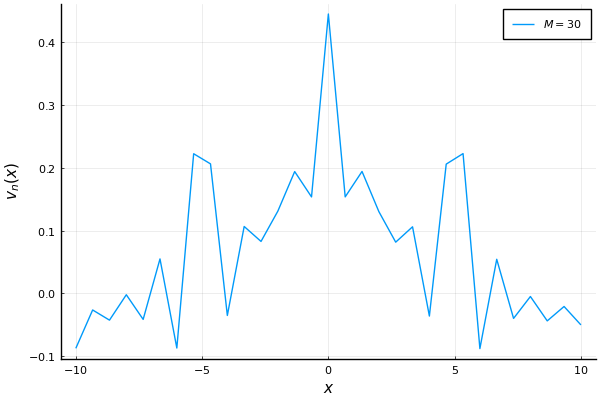}
\includegraphics[width=6.7cm]{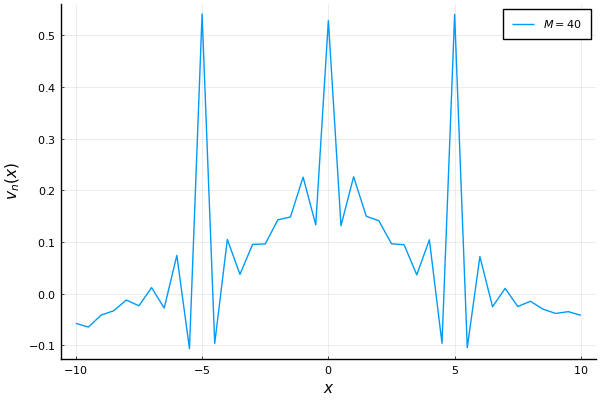}
\caption{$v_n(x)$ as a function of $x$ for $M=10,20,30,40$ ($n=80$)}\label{fig:potential}
\end{figure}

\section{Proofs}

We gather in this section the proofs of our main theoretical results.

\subsection{Proof of Theorem~\ref{thm:existence}}\label{sec:proofthm6}

\begin{proof}[Proof of Theorem~\ref{thm:existence}]
{\bfseries Step~1: (Finiteness)} Since $\rho \in \mathcal I_N$, there exists at least one element $\Psi_0 \in \mathcal H_1^N$ such that $\rho_{\Psi_0} = \rho$. 
Denoting by $\Gamma_0 := |\Psi_0\rangle \langle \Psi_0|$, it can then be easily seen that $\Gamma_0 \in \mathfrak{S}_1^+(\mathcal H_0^N,\Phi,\rho)$ and that $\tr(\Theta\Gamma_0) = \int_{\mathbb{R}^3} \theta(|x|)\rho(x)\,dx = C_\rho$. Thus, we immediately obtain that for all $C\geq C_\rho$, $F_{L,\theta}^{\Phi, C}[\rho] > -\infty$.  

{\bfseries Step~2: (Existence of minimizer)} Let $(\Gamma_n)_{n\in\mathbb{N}}$ be a minimizing sequence associated to  \eqref{min:LiebMomentA}. Then, we know from the proof of Theorem~4.4 of \cite{Lieb-83b} that, up to the extraction of a subsequence, there exists a trace-class operator $\Gamma_\infty \in \mathfrak{S}_1^+(\mathcal H_0^N)$  such that $\left((H_N+D)^{1/2}\Gamma_n(H_N+D)^{1/2}\right)_{n\in\mathbb{N}}$ weakly converges in the sense of trace-class operators to $(H_N+D)^{1/2}\Gamma_\infty(H_N+D)^{1/2}$ as $n$ goes to infinity. To prove that $\Gamma_\infty$ is a minimizer to \eqref{min:LiebMomentA}, it is sufficient to prove that $\rho_{\Gamma_{\infty}}$ satisfies
$$
\forall 1\leq m \leq M, \; \int_{\mathbb{R}^3} \rho_{\Gamma_\infty}\varphi_m = \int_{\mathbb{R}^3} \rho \varphi_m \; \mbox{ and }\; \int_{\mathbb{R}^3} \rho_{\Gamma_\infty}(x) \theta(|x|)\,dx = \tr( \Theta \Gamma_\infty)\leq C.
$$
For all $n\in \mathbb{N}$, let us denote by $\tau_n \in L^2(\mathbb{R}^{3N} \times \mathbb{R}^{3N})$ the kernel of $\Gamma_n$ and by $\tau_\infty \in L^2(\mathbb{R}^{3N} \times \mathbb{R}^{3N})$ the kernel of $\Gamma_\infty$. Let us also denote for all $n\in \mathbb{N}$, 
$$
\gamma_n (x_1, \ldots, x_N) := \tau_n(x_1, \ldots, x_N; x_1, \ldots, x_N)
$$ and by 
$$
\gamma_\infty (x_1, \ldots, x_N) := \tau_\infty(x_1, \ldots, x_N; x_1, \ldots, x_N)
$$ for all $x_1,\ldots,x_N\in \mathbb{R}^3$. Let us prove that $(\gamma_n)_{n\in\mathbb{N}}$ is a tight sequence. Indeed, let $R>0$ and $B_R$ be the ball of radius $R$ of $\mathbb{R}^{3N}$. Then, denoting by $\un_{B^c_R}$ the characteristic function of the set $B_R^c$, it holds that for all $n\in \mathbb{N}$, 
\begin{align*}
    \int_{B^c_R} \gamma_n & = \int_{\mathbb{R}^{3N}} \un_{B^c_R}\gamma_n\\
    & \leq \int_{\mathbb{R}^{3N}} \left(\frac{1}{N}\sum_{i=1}^N \frac{\theta(|x_i|)}{\theta(R)}\right)\gamma_n(x_1,\ldots,x_N)\,dx_1\ldots\,dx_N\\
    & = \frac{1}{\theta(R)}\tr (\Theta\Gamma_n) \leq \frac{C}{\theta(R)}.\\
\end{align*}
Let us denote by $M_P$ the multiplication operator by any function $P$ bounded with compact support on $\R^{3N}$. We then know from the proof of Theorem~4.4 of~\cite{Lieb-83b} that
$$
{\rm Tr}(M_P \Gamma^\infty) = \mathop{\lim}_{n\to +\infty}{\rm Tr}(M_P \Gamma^n). 
$$
This, together with the tightness result above, yields that  $(\rho_{\Gamma_n})_{n\in\mathbb{N}}$ weakly converges to $\rho_{\Gamma_\infty}$ in $L^1(\mathbb{R}^3)$. It then easily follows that for all $m=1,\cdots,M,$
\[ \int_{\R^3}\varphi_m\rho_{\Gamma_\infty}=\lim_{n\to+\infty}\int_{\R^3}\varphi_m\rho_{\Gamma_n}=\int_{\R^3}\varphi_m\rho\]
and that
$$
 \int_{\R^3}\theta(|x|)\rho_{\Gamma_\infty}(x)\,dx = \tr(\Theta \Gamma_\infty) \leq C.
$$
The operator $\Gamma_\infty$ is thus a minimizer of (\ref{min:LiebMomentA}). In particular, since $\un \in {\rm Span}\{\Phi\}$, it holds that $\tr(\Gamma_{\infty}) = N$. 

{\bfseries Step~3: (Existence of a sparse minimizer)} 

Let us now introduce the function $\Lambda : \mathcal H_1^N\to\R^{M+1}$ such that for all $m=1,\cdots,M,$
    \[\Lambda_m(\Psi)=\int_{\R^3}\varphi_m(x)\rho_\Psi(x)\dd x=\int_{\R^{dN}}\varphi_m(x)|\Psi(x,x_2,...,x_N)|^2\dd x\dd x_2...\dd x_N,\]
    and
    \[\Lambda_{M+1}(\Psi)=\langle \Psi|H_N|\Psi\rangle.\] It can then be easily seen that $\Lambda$ is a continuous map on $\mathcal H_1^N$.

Let $\Gamma_{\rm min}$ be a minimizer of (\ref{min:LiebMomentA}). Then, there exists a countable index set $\mathcal J \subset \mathbb{N}$, an orthonormal family $(\Psi_j)_{j\in \mathcal J}$ of $\mathcal H_0^N$ and a family of positive numbers $(\alpha_j)_{j\in\mathcal J}$ such that $\sum_{j \in \mathcal J} \alpha_j = N$ (this comes from the fact that $\un \in {\rm Span}\{\Phi\}$) and 
$$
\Gamma_{\rm min} = \sum_{j \in \mathcal J} \alpha_j |\Psi_j\rangle \langle \Psi_j|. 
$$
In addition, it can be easily checked that $\Psi_j\in \mathcal H_1^N$ for all $j\in \mathcal J$. 
We then define $\mu_{\rm min}:= \sum_{j\in \mathcal J} \alpha_j \delta_{\Psi_j}$ which is a Borel measure on $\mathcal B(\mathcal H_1^N)$ since $\tr(H_N \Gamma_{\rm min})$ is finite and $\tr \Gamma_{\rm min}= N$. It can then be easily checked that 
$$
\int_{\mathcal H_1^N} \|\Lambda(\Psi)\| \,d\mu_{\rm min}(\Psi) < +\infty. 
$$
Thus, by Proposition~\ref{prop:tchakaloff}, there exist $1 \leq K\leq M+1$, $\overline{\Psi}_1,\cdots,\overline{\Psi}_K\in\mathcal H_1^N$ and $\omega_1,\cdots,\omega_K>0$ such that
    \[\int_{\mathcal H_1^N}\Lambda(\Psi)\dd\mu_{\rm min}(\Psi)=\sum_{k=1}^K\omega_k\Lambda(\overline{\Psi}_k).\]
    
Denoting by $\Gamma_K=\sum_{k=1}^K\omega_k|\overline{\Psi}_k\rangle\langle\overline{\Psi}_k|$, it can then be easily checked that $\Gamma_K$ is also a minimizer to (\ref{min:LiebMomentA}). Hence the desired result.

\end{proof}

\subsection{Proof of Theorem~\ref{thm:convergence}}\label{sec:convergencethm}

\begin{proof}[Proof of Theorem~\ref{thm:convergence}]
The first assertion of the theorem is a direct consequence of Theorem~\ref{thm:existence}. 
Using the same arguments as in the proof of Theorem~\ref{thm:existence}, one can easily obtain that the sequence $\left( (H_N+ D)^{1/2} \Gamma_n (H_N+ D)^{1/2} \right)_{n\geq n_0}$ is compact in $\mathfrak{S}_1^+(\mathcal H_0^N)$. Thus, up to the extraction of a subsequence there exists $\Gamma_\infty \in \mathfrak{S}_1^+(\mathcal H_0^N)$ such that $\tr(H_N \Gamma_\infty) <+\infty$ and such that $\left( (H_N+ D)^{1/2} \Gamma_n (H_N+ D)^{1/2} \right)_{n\geq n_0}$ weakly converges to $\left( (H_N+ D)^{1/2} \Gamma_\infty (H_N+ D)^{1/2} \right)_{n\geq n_0}$ in the sense of trace-class operators of $\mathfrak{S}_1^+(\mathcal H_0^N)$. 

Moreover, following again the same lines of proof, we obtain that the sequence $(\rho_{\Gamma_n})_{n\geq n_0}$ weakly converges in $L^1(\mathbb{R}^3)$ to $\rho_{\Gamma_\infty}$. As a consequence, it holds that $\int_{\mathbb{R}^3} \theta(|x|)\rho_{\Gamma_\infty}(x)\,dx \leq C$. Moreover, since for all $n\in\N^*$, $\int_{\mathbb{R}^3} \varphi^n_m \rho_{\Gamma_n} =\int_{\mathbb{R}^3} \varphi^n_m \rho$,  using Lemma~\ref{lem:useful}, we then obtain that, necessarily, $\rho_{\Gamma_\infty} = \rho$. This makes $\Gamma_\infty$ admissible for \eqref{min:lieb} so that we have that $\tr(H_N\Gamma_\infty)\geq F_L[\rho]$.
    Notice now that for all $n\geq n_0$, $-\infty <F^{\Phi^n,C}_{L,\theta}[\rho] \leq F_L[\rho]$. Thus for any converging subsequence of $(F^{\Phi^n,C}_{L,\theta}[\rho])_{n\geq n_0}$ to some limit $F_L^\infty$, it holds that $-\infty < F_L^\infty\leq F_L[\rho]$. For this subsequence, still denoted by $(F^{\Phi^n,C}_{L,\theta}[\rho])_{n\geq n_0}$ for the sake of simplicity, it holds that $\mathop{\lim}_{n\to\infty}\tr(H_N\Gamma_n)=F_L^{\infty}$, and we then have that
    \[F_{L}[\rho]\geq F^{\infty}_L\geq \tr(H_N\Gamma_\infty).\]
    Thus, necessarily, $\Gamma_\infty$ is a minimizer of (\ref{min:lieb}). Moreover, $F_L^\infty = F_L[\rho]$ for any extracted subsequence so that $\displaystyle \mathop{\lim}_{n\to +\infty}{\rm Tr}(H_N \Gamma_n) = {\rm Tr}(H_N \Gamma_\infty)$. Using the compactness of the Fock space of bounded particle number for the geometric convergence~\cite{lewin2011geometric}[Lemma~2.2, Lemma~2.3], we thus obtain the desired result.
\end{proof}

\subsection{Proof of Theorem~\ref{thm:dual}}\label{sec:thmdual}
\begin{proof}[Proof of Theorem~\ref{thm:dual}]
{\bfseries Step~1:} Let us first prove that there exists a maximizer to the optimization problem
\begin{equation}\label{eq:DiscreteProblem}
 \mathop{\sup}_{
\begin{array}{c}
v \in {\rm Span}\{\Phi\},\\
\forall \Psi \in {\rm Span}\{\Psi_1, \ldots, \Psi_K\}, \quad \langle\Psi | H_{N,\Omega}^v|\Psi \rangle \geq 0 \\
\end{array}} \int_\Omega v \rho.
\end{equation}

We denote here by $\mathcal S^K$ the set of symmetric matrices of $\mathbb{R}^{K\times K}$. For any $\varphi \in {\rm Span}\{\Phi\}$, let us consider the linear form $l_\varphi: \mathcal S^K \to \mathbb{R}$ defined as follows: 
$$
\forall S:=(S_{kl})_{1\leq k,l\leq K}\in\mathcal S^K, \quad l_\varphi(S):= \int_{\Omega} \varphi(x) \sum_{k,l=1}^K S_{kl} \overline{\Psi_k}(x)\Psi_l(x)\,dx = {\rm Tr}(\varphi \Gamma_S), 
$$
where 
$$
\Gamma_S := \sum_{k,l=1}^K S_{kl} |\Psi_k \rangle \langle \Psi_l|.
$$
Let us now consider the vectorial space
$$
L:=\{ l_\varphi, \; \varphi\in {\rm Span}\{\Phi\}\}.
$$
The space $L$ is a finite-dimensional subspace of the set of linear forms on $\mathcal S^K$, and its dimension $J$ is lower or equal to the dimension of ${\rm Span}\{\Phi\}$. Let $(\widetilde{l}_1, \ldots, \widetilde{l}_J)$ be a basis of $L$. By construction, there exists $\widetilde{\varphi}_1, \ldots, \widetilde{\varphi}_J \in {\rm Span}\{\Phi\}$ such that $\widetilde{l}_j = l_{\widetilde{\varphi}_j}$ for all $1\leq j \leq J$. Let us then denote by $\widetilde{\Phi}:= \{\widetilde{\varphi}_1, \ldots, \widetilde{\varphi}_J\}$. It can then be easily checked that any element $\varphi$ of ${\rm Span}\{\Phi\}$ can be rewritten as 
$$
\varphi = \widetilde{\varphi} + \varphi_0,
$$
where $\widetilde{\varphi} \in {\rm Span}\{\widetilde{\Phi}\}$ and $\varphi_0 \in {\rm Span}\{\Phi\}$ such that $l_{\varphi_0} = 0$. In particular, this implies that $\int_\Omega \varphi_0 \rho = 0$ since for all $\varphi \in {\rm Span}\{\Phi\}$,
$$
\int_\Omega \varphi \rho = \int_\Omega \varphi(x) \sum_{k=1}^K \omega_k|\Psi_k(x)|^2\,dx = l_\varphi( {\rm diag}(\omega_1, \ldots, \omega_K)).
$$
Thus, proving that there exists a maximizer to (\ref{eq:DiscreteProblem}) is equivalent to proving that there exists a maximizer to 
\begin{equation}\label{eq:DiscreteProblem2}
 \mathop{\sup}_{
\begin{array}{c}
v \in {\rm Span}\{\widetilde{\Phi}\},\\
\forall \Psi \in {\rm Span}\{\Psi_1, \ldots, \Psi_K\}, \quad \langle\Psi | H_{N,\Omega}^v|\Psi \rangle \geq 0 \\
\end{array}} \int_\Omega v \rho.
\end{equation}
Now, by definition of $\widetilde{\varphi}_1, \ldots, \widetilde{\varphi}_J$, it holds that the application 
$A: \mathcal S^K \to \mathbb{R}^J$ defined so that for all $1\leq j \leq J$ and all $S = (S_{kl})_{1\leq k,l\leq K}$, 
$$
A(S)_j:= \int_\Omega \widetilde{\varphi}_j \sum_{k,l=1}^K S_{kl} \overline{\Psi_k}\Psi_l
$$
is surjective. Indeed, this comes from the fact that ${\rm dim}\; {\rm Rank}(A) = {\rm dim}\; L = J$. It can then be easily checked that (\ref{eq:DiscreteProblem2}) is then equivalent to the dual semi-definite programming problem: 
\begin{equation}\label{eq:DP}
 \mathop{\sup}_{
\begin{array}{c}
(y,S) \in \mathbb{R}^J \times \mathcal S^K\\
A^*(y) + S = C \\
S \succcurlyeq 0\\
\end{array}} \langle b,y\rangle,
\end{equation}
where $b = (b_j)_{1\leq j \leq J}$ is such that $b_j = \int_\Omega \widetilde{\varphi}_j \rho$ for all $1\leq j \leq J$ and $C = (C_{kl})_{1\leq k,l \leq K} \in \mathcal S^K$ with 
$$
C_{kl}:= \langle \Psi_k | H_{N,\Omega} | \Psi_l \rangle \quad \forall 1\leq k,l \leq K.
$$
Indeed, if $(y,S)\in \mathbb{R}^J \times \mathcal S^K$ is a maximizer to (\ref{eq:DP}), it holds that $v = \sum_{j=1}^J y_j \widetilde{\varphi}_j$ is a maximizer to (\ref{eq:DiscreteProblem2}), and thus to (\ref{eq:DiscreteProblem}).

The primal problem associated to \eqref{eq:DP} reads as
\begin{equation}\label{eq:Dprim}
 \mathop{\inf}_{
\begin{array}{c}
X \in \mathcal S^K\\
A(X)  = b \\
X \succcurlyeq 0\\
\end{array}} \langle C,X\rangle,
\end{equation}

Let us also remark that $\int_\Omega \rho \varphi = \int_\Omega \rho_{\Gamma_S} \varphi$ for all $\varphi\in {\rm Span}\{\Phi\}$ if and only if $A(S) = b$. Thus, this implies that there exists at least one minimizer $X$ to (\ref{eq:Dprim}) which is given by $X = {\rm diag}(\omega_1, \ldots, \omega_K)$ and  is positive definite. Using Theorem~\ref{th:SDPex}, we then obtain the existence of at least one maximizer to (\ref{eq:DP}), and hence to (\ref{eq:DiscreteProblem}) and (\ref{eq:DiscreteProblem2}).

\medskip

{\bfseries Step~2:} To conclude the proof of the desired result, it only remains to show that
\begin{align*}
D_{L,\Omega}^{\Phi}[\rho] & := \mathop{\sup}_{
\begin{array}{c}
v \in {\rm Span}\{\Phi\},\\
\forall \Psi\in \mathcal H_1^N(\Omega), \quad \langle\Psi | H_{N,\Omega}^v|\Psi \rangle \geq 0 \\
\end{array}} \int_\Omega v \rho \\
& = \mathop{\sup}_{
\begin{array}{c}
v \in {\rm Span}\{\Phi\},\\
\forall \Psi \in {\rm Span}\{\Psi_1, \ldots, \Psi_K\}, \quad \langle\Psi| H_{N,\Omega}^v|\Psi \rangle \geq 0 \\
\end{array}} \int_\Omega v \rho.\\
\end{align*}
On the one hand, it holds from Theorem~\ref{thm:stronDuality}, that $D_{L,\Omega}^\Phi[\rho] = F_{L,\Omega}^\Phi[\rho]$. On the other hand, using similar arguments as in the proof of Theorem~\ref{thm:stronDuality}, it holds that 
$$
\mathop{\sup}_{
\begin{array}{c}
v \in {\rm Span}\{\Phi\},\\
\forall \Psi \in \mathcal W, \quad \langle\Psi | H_{N,\Omega}^v|\Psi\rangle \geq 0 \\
\end{array}} \int_\Omega v \rho = \mathop{\inf}_{\Gamma \in {\mathfrak S}_1^+(\mathcal W, \Phi,\rho)} {\rm Tr}(H_{N,\Omega} \Gamma),
$$
where $\mathcal W:= {\rm Span}\{\Psi_1, \ldots, \Psi_K\}$. Since, by definition of $\Psi_1$, ..., $\Psi_K$, it holds that $\displaystyle F_{L,\Omega}^\Phi[\rho] = \mathop{\inf}_{\Gamma \in {\mathfrak S}_1^+(\mathcal W, \Phi,\rho)} {\rm Tr}(H_{N,\Omega} \Gamma)$, we obtain the desired result.
\end{proof}

\section*{Acknowledgements}
This
publication is part of a project that has received funding from the European Research Council (ERC) under the European Union’s Horizon 2020 Research and Innovation Programme – Grant Agreement  810367
L.N. is partially on academic leave at Inria (team Matherials) for the year 2022-2023 and 2023-2024 and acknowledges the hospitality if this institution during this period. His work  benefited from the support of the FMJH Program PGMO,  from H-Code, Université Paris-Saclay and from the ANR project GOTA (ANR-23-CE46-0001).

\bibliographystyle{alpha}
\bibliography{biblio}

\end{document}